\documentclass[conference,compsoc]{IEEEtran}

\ifCLASSOPTIONcompsoc
  \usepackage[nocompress]{cite}
\else
  \usepackage{cite}
\fi

\ifCLASSINFOpdf
\else
\fi

\usepackage[color,secparam=\kappa]{crypto}

\usepackage{tikz}
\usepackage{framed}
\usepackage{tabularx}
\usepackage{multirow}
\usepackage{amsthm}
\usepackage{amsmath}
\usepackage{enumitem}
\usepackage{amsfonts}
\usepackage{graphicx}
\usepackage{stmaryrd}
\usepackage[hidelinks]{hyperref}
\usepackage{algorithm2e}
\usepackage{subcaption}
\usepackage{diagbox}
\usepackage[table]{xcolor}
\usepackage{xcolor}
\usepackage{tcolorbox}
\usetikzlibrary{shapes.arrows}

\usetikzlibrary{
    positioning,
    arrows.meta,
    calc,
    fit
}

\usepackage[available,functional,reproduced]{ieeebadges}

\newtheorem{definition}{Definition}
\newtheorem{theorem}{Theorem}

\newcommand{\receiver}{$\mathcal{R}$ }
\newcommand{\sender}{$\mathcal{S}$ }

\begin{document}
\title{Efficient Fuzzy Private Set Intersection from Secret-shared OPRF}

\author{
\IEEEauthorblockN{
Xinpeng Yang\textsuperscript{1},
Meng Hao\textsuperscript{2}\textsuperscript{*},
Chenkai Weng\textsuperscript{3}, 
Robert H. Deng\textsuperscript{2},
Yonggang Wen\textsuperscript{1},
Tianwei Zhang\textsuperscript{1}
}
\IEEEauthorblockA{\textsuperscript{1}Nanyang Technological University, xinpeng004@e.ntu.edu.sg, \{ygwen, tianwei.zhang\}@ntu.edu.sg \\ \textsuperscript{2}Singapore Management University, menghao303@gmail.com, robertdeng@smu.edu.sg \\
\textsuperscript{3}Arizona State University, chenkai.weng@asu.edu}
}

\maketitle

\makeatletter
\renewcommand\@makefntext[1]{\noindent\makebox[0em][r]{\@makefnmark\ }#1}
\makeatother

\begingroup
\renewcommand\thefootnote{}
\footnotetext{\IEEEauthorrefmark{1} Meng Hao is the corresponding author}
\endgroup

\begin{abstract}
Private set intersection (PSI) enables a sender holding a set $Q$ of size $m$ and a receiver holding a set $W$ of size $n$ to securely compute the intersection $Q \intersection W$. Fuzzy PSI (FPSI) is a PSI variant where the receiver learns the items $q \in Q$ for which there exists some $w \in W$ satisfying $\mathsf{dist}(q, w) \le \delta$ under a given distance metric.
Although several FPSI works are proposed for $L_{p}$ distance metrics with $p \in [1, \infty]$, they either heavily rely on expensive homomorphic encryptions, or incur undesirable complexity, e.g., exponential to the element dimension, both of which lead to poor practical efficiency.

In this work, we propose efficient FPSI protocols for $L_{p \in [1, \infty]}$ distance metrics, primarily leveraging significantly cheaper symmetric-key operations.
Our protocols achieve linear communication and computation complexity in the set sizes $m,n$, the dimension $d$, and the distance threshold $\delta$.
Our core building block is an oblivious programmable PRF with secret-shared outputs, which may be of independent interest. 
Furthermore, we incorporate a prefix technique that reduces the dependence on the distance threshold $\delta$ to logarithmic, which is particularly suitable for large $\delta$.

We implement our FPSI protocols and compare them with state-of-the-art constructions. Experimental results demonstrate that our protocols consistently and significantly outperform existing works across all settings. Specifically, our protocols achieve a speedup of $12{\sim}145\times$ in running time and a reduction of $3{\sim}8\times$ in communication cost compared to Gao et al.~(ASIACRYPT'24) and a speedup of $9{\sim}80\times$ in running time and a reduction of $5{\sim}19\times$ in communication cost compared to Dang et al.~(CCS'25).

\end{abstract}

\IEEEpeerreviewmaketitle

\section{Introduction}
\label{sec: Introduction}

Private set intersection (PSI)~\cite{meadows1986more,freedman2004efficient,dong2013private,pinkas2015phasing,pinkas2020psi,raghuraman2022blazing,hao2024unbalanced} enables two parties, the sender and the receiver, to compute the intersection of their sets without revealing additional information beyond the intersection itself.
Standard PSI protocols focus on exact matching and have been widely employed in various scenarios such as password breach monitoring and genome matching~\cite{google-psi-password,shen2018efficient}.
However, in some complicated applications with multi-dimensional inputs, requiring exact intersection is often impractical or even impossible.
For example, in biometric authentication systems (e.g., fingerprint or face), two samples belonging to the same individual may differ due to environmental variations and feature extraction perturbations~\cite{uzun2021fuzzy}, which limits the application of standard PSI protocols.

To address this problem, fuzzy PSI extends PSI by allowing the receiver, holding the set $W$, to learn those items $q \in Q$ from the sender for which there exists some $w \in W$ satisfying $\mathsf{dist}(q, w) \le \delta$ under a specified distance metric.  
Recently, a substantial body of works~\cite{garimella2022structure,garimella2024computation,van2024fuzzy,gao2025efficient,zhang2025fast,dang2025ccs,piske2025distance,van2025,richardson2024fuzzy,bui2025new} have proposed a variety of fuzzy PSI constructions.  
Among these, van Baarsen and Pu~\cite{van2024fuzzy} introduced the first generic fuzzy PSI protocols supporting arbitrary $L_{p\in[1,\infty]}$ distance. Although several follow-up works~\cite{van2025,piske2025distance} further improve performance, these protocols still incur super-linear complexity, e.g., exponential in the element dimension, which becomes prohibitively expensive in high-dimensional settings.

More recently, Gao et al.~\cite{gao2025efficient} proposed the first fuzzy PSI protocols for $L_{p\in[1,\infty]}$ distance with linear complexity in the set size, the dimension, and the distance threshold. Dang et al.~\cite{dang2025ccs} further optimized this line of work by introducing the prefix representation~\cite{chakraborti2023distance} that reduces the dependence on the threshold to logarithmic, thereby reducing both computation and communication costs, particularly for large thresholds.  
Nevertheless, both schemes rely heavily on expensive additively homomorphic encryption operations, such as the ElGamal and Paillier schemes, which leads to poor concrete performance. Please refer to Table~\ref{tab:complexities} and Section~\ref{sec: Related Work} for a more detailed discussion of related works.

\begin{table*}[t]
\centering
\caption{Asymptotic complexities of existing fuzzy PSI protocols for $L_{p \in [1, \infty]}$ distance, where the sender holds $m$ elements and the receiver holds $n$ elements in a $d$-dimensional space, $\delta$ is the distance threshold. We ignore multiplicative factors of the computational security parameter $\kappa$ and statistical security parameter $\lambda$.}
\label{tab:complexities}
\resizebox{\textwidth}{!}{%
\begin{tabular}{|c|c|c|c|cc|}
\hline
\multirow{2}{*}{Metric}                  & \multirow{2}{*}{Protocol}                      & \multirow{2}{*}{Assumption}                         & \multirow{2}{*}{Communication}                         & \multicolumn{2}{c|}{Computation}                                                                                          \\ \cline{5-6} 
                                         &                                                &                                                     &                                                        & \multicolumn{1}{c|}{Sender}                                                 & Receiver                                    \\ \hline \hline
\multirow{10}{*}{$L_\infty$}              
                                         & \multirow{3}{*}{\cite{van2024fuzzy}}           & $\mathcal{R}, \min > 2\delta$                       & $O(\delta d n + 2^d m)$                                & \multicolumn{1}{c|}{$O(2^dd m)$}                                & $O(\delta d n + 2^d m)$                     \\
                                         &                                                & $\mathcal{R}, \min > 4\delta$                       & $O(\delta2^ dd n + m)$                                  & \multicolumn{1}{c|}{$O(dm)$}                                                & $O(\delta2^dd n + m)$                       \\
                                         &                                                & $\mathcal{R}, \text{disj. proj.}$                    & $O((\delta d)^2 n + m)$                                & \multicolumn{1}{c|}{$O(d^2 m)$}                                & $O((\delta d)^2 n + m)$                                  \\ \cline{2-6} 
                                         & \cite{bui2025new}                & $\mathcal{R},\text{mini-univ.}$ & $O((dn \log \delta + m(2\log \delta)^d))$              & \multicolumn{1}{c|}{$O(m(2\log \delta)^d)$}             & $O((\log \delta)^d n + dm\log \delta)$ \\ \cline{2-6}
                                         & \cite{van2025}                        & $\mathcal{R} \wedge \mathcal{S},\text{ min}>4\delta$ & $O(d(m +2^dn)\log\delta)$                                    & \multicolumn{1}{c|}{$O(d(m +2^dn)\log\delta)$}                                    & $O(2^dn(d\log\delta+(\log\delta)^{d/2}))$                         \\ \cline{2-6}
                                        & \cite{zhang2025fast}                        & $\mathcal{R} \wedge \mathcal{S},s\text{-separate}$ & $O(\delta^s\frac{d}{s}(m+n))$                                    & \multicolumn{1}{c|}{$O(\delta^s\frac{d}{s}m+n)$}                                    & $O(\delta^s\frac{d}{s}n+m)$                         \\ \cline{2-6}
                                         & \cite{piske2025distance}                        & $\mathcal{S}, \text{disj. hash}$ & $O(d (\delta m+2^dn))$                                    & \multicolumn{1}{c|}{$O(\delta d m)$}                                    & $O(d 2^dn + m)$                         \\ \cline{2-6}
                                         & \cite{richardson2024fuzzy}                     & $\mathcal{S}, \text{disj. hash}$                 & $O(d \log \delta (n 2^d + m 2^{d-s}))$            & \multicolumn{1}{c|}{$O(2^{d-s} dm \log \delta)$}                               & $O(2^{s} dn \log \delta)$           \\ \cline{2-6}
                                         & \cite{gao2025efficient}                        & $\mathcal{R} \wedge \mathcal{S}, \text{disj. proj.}$ & $O(\delta d (m+n))$                                    & \multicolumn{1}{c|}{$O(\delta d m + n)$}                                    & $O(\delta d n + m)$                         \\ \cline{2-6} 
                                         & \cite{dang2025ccs}                                           &   $\mathcal{R} \wedge \mathcal{S}, \text{disj. proj.}$                                                  &   $\bigo{ d(m+n)\log\delta}$                                                      & \multicolumn{1}{c|}{$\bigo{ d(m+n)\log\delta}$ }                                                       & $\bigo{ d(m+n)\log\delta}$
                                         \\ \cline{2-6}
                                        & Ours                                           &   $\mathcal{R} \wedge \mathcal{S}, \text{disj. proj.}$                                                  &   $\bigo{\delta d(m+n)}$                                                     & \multicolumn{1}{c|}{$\bigo{\delta dm+dn}$}                                                       &   $\bigo{\delta dn+dm}$
                                        
                                         \\ \cline{2-6}
                                         
                                         & Ours-prefix                                    &     $\mathcal{R} \wedge \mathcal{S}, \text{disj. proj.}$                                                &     $\bigo{ d(m+n)\log\delta}$                                                   & \multicolumn{1}{c|}{$\bigo{ d(m+n)\log\delta}$}                                                       &    $\bigo{ d(m+n)\log\delta}$
                                                                                   
                                                                                  \\ \hline \hline
\multirow{6}{*}{$L_{p \in [1, \infty)}$} & \multirow{2}{*}{ \cite{van2024fuzzy}}           & $\mathcal{R}, \min > 2\delta(d^{1/p}+1)$             & $O(\delta{2^d} d n + \delta^p m)$                   & \multicolumn{1}{c|}{$O((d + \delta^p) m)$}                   & $O(\delta{2^d} d n + m)$                      \\
                                         &                                                & $\mathcal{R}, \min > \delta/\rho$                                                 & $O(\delta d n^{1+\rho} + \delta^\rho m n^\rho \log n)$ & \multicolumn{1}{c|}{$O((d + \delta^\rho) m n^\rho \log n)$ } & $O(\delta d n^{1+\rho} +  m n^\rho \log n)$        \\ \cline{2-6}
                                         & \cite{van2025}                        & $\mathcal{R} \wedge \mathcal{S},\text{ min}>2\delta(d^{1/p}+1)$ & $O(d (\delta m+2^dn)+p(m+2^dn)\log\delta)$                                    & \multicolumn{1}{c|}{$O(d (\delta m+2^dn)+p(m+2^dn)\log\delta)$}                                    & $O(2^dn(d+p\log\delta))$                         \\ \cline{2-6}
                                         & \cite{zhang2025fast}                        & $\mathcal{R} \wedge \mathcal{S},s\text{-separate}$ & $O(\delta^s\frac{d}{s}(m+n)+pm\log\delta)$                                    & \multicolumn{1}{c|}{$O((\delta^s\frac{d}{s}+p\log\delta)m+n)$}                                    & $O(\delta^s\frac{d}{s}n+pm\log\delta)$                         \\ \cline{2-6}
                                         & \cite{piske2025distance}                        & $\mathcal{S},\text{ disj. hash}$ & $O(d (\delta m+2^dn))$                                    & \multicolumn{1}{c|}{$O(\delta d m)$}                                    & $O(d 2^dn + m)$                         \\ \cline{2-6}
                                         & \cite{gao2025efficient}                        & $\mathcal{R} \wedge \mathcal{S}, \text{disj. proj.}$ & $O(\delta d (m+n)+pm\log\delta)$                                    & \multicolumn{1}{c|}{$O(\delta d m+pm\log\delta + n)$}                                    & $O(\delta d n+pm\log\delta)$                         \\ \cline{2-6}
                                         & \cite{dang2025ccs}                                           &   $\mathcal{R} \wedge \mathcal{S}, \text{disj. proj.}$                                                  &   $\bigo{dpn\log\delta+dm\log\delta}$                                                     & \multicolumn{1}{c|}{$\bigo{dpm\log\delta+dn\log\delta}$}                                                       & $\bigo{dpn\log\delta+dm\log\delta}$                                           \\ \cline{2-6}

                                         & Ours                                           & $\mathcal{R} \wedge \mathcal{S}, \text{disj. proj.}$                                                    &   $\bigo{\delta d(m+n)+pm\log\delta}$                                                     & \multicolumn{1}{c|}{$\bigo{\delta dm+dn+pm\log\delta}$}                                                       &  $\bigo{\delta dn+dm+pm\log\delta}$
                                         \\ \cline{2-6}
                                         & Ours-prefix                                    &   $\mathcal{R} \wedge \mathcal{S}, \text{disj. proj.}$                                                  &   $\bigo{dpn\log\delta+dpm\log\delta}$                                                     & \multicolumn{1}{c|}{$\bigo{dpm\log\delta+dpn\log\delta}$}                                                       & $\bigo{dpn\log\delta+dpm\log\delta}$
                                                                                    
                                         \\ \hline
\end{tabular}%
}
\captionsetup{justification=raggedright, singlelinecheck=false}
\caption*{\footnotesize 
-- $\mathcal{R}/\mathcal{S}$ denotes that the set of receiver/sender satisfies the assumption, and $\mathcal{R} \wedge \mathcal{S}$ indicates that both sets satisfy the assumption.

-- disj. proj. is short for disjoint projection assumption, where each element has disjoint projections in some dimension with all other elements.

-- disj. hash is short for disjoint hash assumption, which means the spatial hashing scheme maps at most one point to every possible target grid cell.

-- mini-univ is short for mini-universe assumption, where every ball can be mapped to a distinct vertex of the space.

--  $s$-separate means for every $s$ dimensions, elements have disjoint projections in one of these $s$ dimensions with all other elements where $1\le s \le d$.

-- $0 < \rho < 1/c$ is a parameter in locality-sensitive hashing where the distance between any receiver's two elements is greater than $c\delta$.

-- $\min > l$ means that the minimum distance between any two elements of the set is greater than $l$.

}
\end{table*}

Motivated by the above, we raise the following question:

\vspace{1pt}
\begin{center}
        \begin{tcolorbox}[
            colback=lightgray!15, 
            size=title, 
            colframe=black, 
            boxrule=0.5pt,
            width=\linewidth,
            valign=center,
            ]
        
            Can we construct concretely efficient fuzzy PSI protocols for general $L_{p\in[1, \infty]}$ distance with linear complexity in the set size and the element dimension?
          
        \end{tcolorbox}
\end{center}
\vspace{0.5pt}

\subsection{Our Contribution}
We answer the above question affirmatively and summarize our contributions as follows.
\begin{enumerate} 

    \item \textbf{Shared-output OPPRF.} We introduce a new building block, termed oblivious programmable PRF with secret-shared outputs (so-OPPRF), which enables oblivious evaluations of programmable PRF without revealing outputs to either party individually.

    \item \textbf{Modular fuzzy mapping.} We present a modular protocol design of fuzzy mapping, which is a core component of fuzzy PSI, from shared-output OPPRF and shared-input OPRF.

    \item \textbf{Efficient fuzzy PSI.} We propose efficient fuzzy PSI protocols for $L_{p \in [1, \infty]}$ distance metrics based on our fuzzy mapping, primarily leveraging lightweight symmetric-key operations, with linear complexity in the set size, the dimension, and the distance threshold.

    \item \textbf{Prefix optimizations.} We incorporate prefix techniques to optimize the overhead of fuzzy PSI protocols for large distance thresholds, which reduces the computational and communication complexity to logarithmic in the distance threshold.

    \item \textbf{Extensive evaluations.} We conduct extensive experiments under various parameter settings. Experimental results demonstrate that our protocols consistently and significantly outperform the state-of-the-art works across all settings, particularly, up to $80 \times$ faster computation and $19 \times$ lower communication.

\end{enumerate}

\subsection{Related Work}
\label{sec: Related Work}

\textbf{Fuzzy PSI for Hamming distance.} For a long time, research on fuzzy PSI primarily focused on the Hamming distance. Freedman et al.~\cite{freedman2004efficient} first introduced the problem of secure fuzzy matching and proposed a protocol based on additively homomorphic encryption (AHE). Subsequent works~\cite{chmielewski2008fuzzy,ye2009efficient,indyk2006polylogarithmic,uzun2021fuzzy,chakraborti2023distance,blass2025assumption} improve the protocol's complexity based on different techniques and assumptions.

\textbf{Fuzzy PSI for Minkowski distance.} More recently, a series of works have explored fuzzy PSI for $L_{p \in [1, \infty]}$ metrics. 
Garimella et al.~\cite{garimella2022structure} introduced the notion of structure-aware PSI, in which one party holds a structured dataset while the other holds an unstructured set of points. Their construction can be extended to fuzzy PSI for $L_\infty$ distance, but has a cost that scales with $\delta^d$. Chakraborti et al.~\cite{chakraborti2023distance} studied the problem for $L_1$ distance and presented a useful technique known as prefix representation.
Then, Garimella et al.~\cite{garimella2024computation} and Bui et al.~\cite{bui2025new} improved their previous work~\cite{garimella2022structure},
reducing the complexity to $(\log \delta)^d$ with prefix representation.

van Baarsen and Pu~\cite{van2024fuzzy} proposed the first fuzzy PSI protocol for general $L_\infty$ and $L_p$ distances based on the Decisional Diffie–Hellman (DDH) assumption.
Their construct supposed each element has a minimum distance ($2\delta$ or $4\delta$) from other elements.
They subsequently improved their design using lightweight primitives~\cite{van2025}, which significantly reduced the overall computational overhead. Similarly, they used prefix representation to achieve a complexity of $(\log \delta)^d$. 

Several follow-up works further improved the performance
but still incurred exponential complexity. Piske et al.~\cite{piske2025distance} constructed the fuzzy PSI protocol based on a newly introduced primitive, distance-aware OT. They followed the disjoint hash assumption~\cite{richardson2024fuzzy} where the whole input space is cut into grid cells and the spatial hashing scheme maps at most one element to every possible target grid cell. Their techniques rely on symmetric cryptographic primitives; however, their construction remains exponential in the dimension $d$. Moreover, their protocol operates over a relatively small domain and supports only two-dimensional inputs for the $L_2$ distance, which limits its practicality. Richardson et al.~\cite{richardson2024fuzzy} generalized the approach of~\cite{cho2016efficient} and built a fuzzy PSI protocol using generic two-party computation, resulting in poor concrete efficiency. Chongchitmate et al.~\cite{chongchitmate2024approximate} proposed a protocol that achieves nearly linear complexity in the set size. Nevertheless, their design relies on garbled circuits and multiple PSI invocations, which significantly limits its practical efficiency.

Building on~\cite{van2024fuzzy}, Gao et al.~\cite{gao2025efficient} proposed the first fuzzy PSI for $L_{p \in [1, \infty]}$ metrics, achieving linear complexity with respect to all parameters. They assumed that the set of both parties satisfies the disjoint projection assumption, where each element has disjoint projections in some dimension from all other elements. Later, Dang et al.~\cite{dang2025ccs} further improved this approach by adopting prefix representation to reduce the complexity to $\log \delta$. Nevertheless, both Gao et al.~\cite{gao2025efficient} and Dang et al.~\cite{dang2025ccs} heavily rely on the AHE, incurring considerable computational overhead. Recently, Zhang et al.~\cite{zhang2025fast} improved~\cite{gao2025efficient} by introducing a new framework for fuzzy PSI based on a stronger “$s$-separate” assumption.
Although~\cite {zhang2025fast} solely uses symmetric cryptographic primitives, it suffers from a relatively high exponential factor $\delta^s$ for both communication and computation overhead, especially for large $\delta$.

\textbf{Other Solutions.}
A closely related line of works study private record linkage (PRL)~\cite{de2011robust,durham2013composite,he2017composing,inan2010private,khurram2020sfour,wei2023cryptographically,stammler2022mainzelliste,adir2022privacy}, whose goal is to identify common entities across disparate datasets while protecting the privacy of non-matching records. Unlike our setting, these works typically do not rely on assumptions on the input distribution. Instead, many of them employ locality-preserving or locality-sensitive hashing techniques~\cite{zhao2014locality,leskovec2020mining} to project multi-dimensional records into lower-dimensional representations, often at the expense of matching accuracy.

Existing PRL protocols have been instantiated using a variety of privacy-enhancing techniques, including Bloom filters~\cite{bloom1970space}, differential privacy~\cite{dwork2006calibrating}, and secure multiparty computation (MPC). Among them, only a small number of works~\cite{khurram2020sfour,wei2023cryptographically,stammler2022mainzelliste,adir2022privacy} provide cryptographic security without additional leakage. In particular, Khurram and Kerschbaum~\cite{khurram2020sfour} proposed the first efficient cryptographically secure PRL protocol based on MPC. Their approach avoids the quadratic cost of all-pairs comparison by first securely sorting the records and then comparing only records that fall within the same comparison window. Nevertheless, the overall cost of their construction is dominated by the secure merge sorting procedure, resulting in an $O(n\log n)$ complexity, where $n$ denotes the total number of records. Wei et al.~\cite{wei2023cryptographically} improved this line of works in both efficiency and empirical accuracy by incorporating locality-sensitive hashing~\cite{broder1998min}. However, their protocol leaks the number of comparisons, and its overall complexity remains $O(n\log n)$. To achieve stronger privacy guarantees, Stammler et al.~\cite{stammler2022mainzelliste} abandoned locality-based blocking and instead compare all record pairs, incurring a quadratic $O(n^2)$ complexity. Adir et al.~\cite{adir2022privacy} combined locality-sensitive hashing~\cite{leskovec2020mining} with PSI to obtain linear complexity. However, replacing generic MPC-based circuit for distance computation with standard PSI significantly reduces the computational overhead, but this simplification comes at the cost of reduced matching accuracy. 

While PRL protocols address the trade-off among security, efficiency, and accuracy in more general application settings, our goal is to design a protocol for structured sets satisfying certain assumptions, which allows us to avoid introducing false positives or negatives. Accordingly, we treat PRL as a closely related series of works, but center our subsequent discussion on prior fuzzy PSI protocols that realize the same functionality under certain assumptions.

In this work, we focus on constructing concretely efficient fuzzy PSI protocols with linear complexity in both the set size and the dimension, while avoiding expensive public-key primitives such as AHE used in existing linear-complexity constructions~\cite{gao2025efficient,dang2025ccs}. Table~\ref{tab:complexities} compares the asymptotic complexities of our protocols with prior works for $L_{p\in[1,\infty]}$ distance. Among prior works under the same assumptions, we report only the protocols with the best asymptotic complexity.

\section{Technical  overview}
\label{sec: Technical  overview}

We begin by introducing the necessary notations.
The sender~$\SSS$ holds a set $Q = \{ q_j \}_{j \in [m]}$ of size~$m$, and the receiver~$\RRR$ holds a set $W = \{ w_i \}_{i \in [n]}$ of size~$n$, both defined over a $d$-dimensional space.
We use $q_{j, k}$ to denote the $k$-th coordinate of $q_j$.
For any $q_j \in Q$ and $w_i \in W$, we say that $q_j$ and $w_i$ are $\delta$-close if $\mathsf{dist}(q_j, w_i) \le \delta$, where $\delta$ denotes the distance threshold.

\subsection{Fuzzy PSI Paradigm from Fuzzy Mapping}

Recent works~\cite{van2024fuzzy, gao2025efficient, dang2025ccs, piske2025distance} show that fuzzy PSI protocols typically consist of two phases: \emph{coarse mapping} and \emph{refined filtering}. 
(1) In the coarse mapping phase, each element in both parties' sets is mapped to an identifier (ID). Any two elements that are within distance~$\delta$, one from the sender and one from the receiver, are guaranteed to receive the same ID. This phase may introduce false positives, but no false negatives. 
(2) In the refined filtering phase, for each candidate pair with the same ID, the parties further perform an exact distance check to eliminate the false positives generated during the coarse mapping phase.
The above idea is similar to hashing-based standard PSI protocols and reduces the number of necessary distance comparisons from $O(mn)$ to $O(n)$ or $O(m)$.

To realize the coarse mapping phase, Gao et al.~\cite{gao2025efficient} propose a generalized protocol, termed \emph{fuzzy mapping}, which takes sets $Q$ and $W$ as input and outputs identifier sets $\mathsf{ID}_{Q}$ and $\mathsf{ID}_{W}$ for the sender and receiver, respectively. 
The protocol guarantees that for any two elements $q_j \in Q$ and $w_i \in W$ satisfying $\mathsf{dist}(q_j, w_i) \leq \delta$, it holds that $\mathsf{ID}_{q_j} = \mathsf{ID}_{w_i}$. 
Subsequently, the refined filtering phase is carried out via \emph{fuzzy matching}, which takes as input every pair $(q_j, w_i)$ sharing the same ID and returns $q_j$ to the receiver if and only if $\mathsf{dist}(q_j, w_i) \leq \delta$.

In this work, we follow this fuzzy PSI paradigm, but introduce new techniques to realize the fuzzy mapping and fuzzy matching phases with linear complexity in the set size and dimension, primarily using lightweight symmetric-key operations\footnote{Our protocols invoke OT and VOLE, instantiated via pseudorandom correlation generators (PCGs)~\cite{boyle2019efficient, raghuraman2023expand}. In PCGs, the seed setup phase requires a small number of base OTs implemented with public-key operations, while the seed expansion phase relies on linear codes. Both phases are highly efficient in practice~\cite{van2025, piske2025distance}.}.

\subsection{OPPRF with Shared Outputs}
\label{subsec: OPPRF with shared outputs}

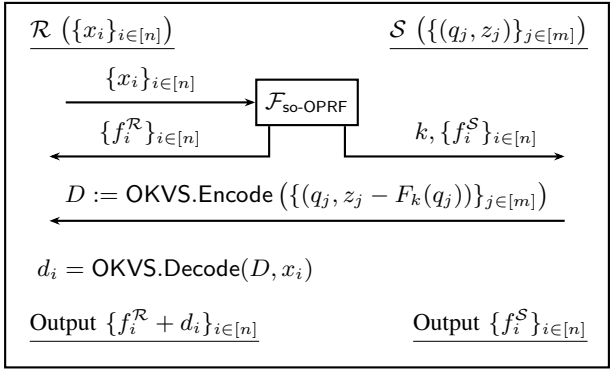
\begin{figure}[t]
    \centering
    \resizebox{0.46\textwidth}{!}{%
    \begin{tikzpicture}[
        node distance=3.5cm, %
        func_box/.style={
            draw, rectangle, thick, 
            minimum width=1.3cm, minimum height=0.6cm, 
            align=center, font=\small
        },
        myarrow/.style={
            ->, >={Stealth[scale=0.6]}, 
            draw=black, thick
        },
        lbl/.style={
            font=\small, align=center, text=black
        },
        container/.style={
            draw=none, thick, 
            rounded corners=0pt, 
            inner xsep=0cm,
            inner ysep=0cm
        }
    ]

    \node[func_box] (ssoprf) at (0, 0) {\Func[so\text{-}OPRF]};

    \draw[myarrow] ($(ssoprf.west)+(-2.5,0)$) -- (ssoprf.west) 
        node[pos=0.45, above, lbl] {$\{x_i\}_{i \in [n]}$};

    \draw[myarrow] ($(ssoprf.south)+(-0.5,0)$) |- ++(-2.9, -0.4) 
        node[pos=0.77, above, lbl] {$\{f_i^\RRR\}_{i \in [n]}$};

    \draw[myarrow] ($(ssoprf.south)+(0.5,0)$) |- ++(2.9, -0.4) 
        node[pos=0.80, above, lbl] {$k,\{f_i^\SSS\}_{i \in [n]}$};

    \draw[myarrow] ($(ssoprf.south)+(3.4,-1.25)$) -- ($(ssoprf.south)+(-3.4,-1.25)$)
    node[midway, above, lbl] {\small $D := \mathsf{OKVS.Encode}\left(\{(q_j, z_j-F_k(q_j))\}_{j \in [m]}\right)$};

    \node[lbl, anchor=south west] (calc_di) at ($(ssoprf.west)+(-3.0, -2.5)$) 
        {\small $d_i = \mathsf{ OKVS.Decode}(D, x_i)$};

    \node[lbl, anchor=south west] (out) at ($(ssoprf.west)+(-3.1, -3.3)$) 
    {\underline{Output $\{f_i^\RRR + d_i \}_{i\in [n]}$}};

    \node[lbl, anchor=south west] (out1) at ($(ssoprf.west)+(1.96, -3.3)$) 
    {\underline{Output $\{f_i^\SSS \}_{i\in [n]}$}};

    \path (ssoprf.east) ++(3.3,0) coordinate (aux_right);
    \path (ssoprf.west) ++(-3.3,0) coordinate (aux_left); %
    \coordinate (aux_bottom) at ($(out.south) + (0, -0.2cm)$); %
    \coordinate (aux_top) at ($(ssoprf.north) + (0, 1.0cm)$);
    
    \node[container, fit=(ssoprf) (aux_left) (aux_right) (aux_bottom)(aux_top), draw=black] (box) {};

    \node[font=\small] at ($(ssoprf.north) + (-2.7, 0.6)$) {\underline{\RRR $\left( \{x_i\}_{i\in[n]} \right)$}};
    
    \node[font=\small] at ($(ssoprf.north) + (2.4, 0.6)$) {\underline{\SSS $\left(\{(q_j,z_j) \}_{j\in[m]} \right)$}};

    \end{tikzpicture}
    }
    \caption{Construction of so-OPPRF from so-OPRF.}
    \label{fig: so-OPPRF workflow}
\end{figure}

An oblivious pseudorandom function (OPRF) allows a receiver to input a value~$x$ and obtain $\mathsf{PRF}(k, x)$, where the PRF key~$k$ is privately held by the sender. 
Programmable OPRF (OPPRF), introduced by Kolesnikov et al.~\cite{kolesnikov2017practical}, extends this functionality by allowing the sender to program the PRF on designated points: for each sender's chosen pair $(y_i, z_i)$, the functionality enforces $\mathsf{PRF}(k, y_i) = z_i$, while all non-programmed inputs receive pseudorandom outputs.

In this work, we introduce a variant of OPPRF, termed \emph{shared-output OPPRF} (so-OPPRF), which serves as a core building block in our fuzzy mapping and fuzzy PSI protocols. 
A so-OPPRF is identical to a standard OPPRF except that it outputs secret shares of $\mathsf{PRF}(k, x)$ to the two parties rather than revealing the PRF value to the receiver.

We construct an efficient so-OPPRF protocol by combining an oblivious key-value store (OKVS)~\cite{garimella2021oblivious} with a shared-output OPRF (so-OPRF)~\cite{van2024amortizing, alamati2024improved}, following the classical OPPRF paradigm~\cite{pinkas2019efficient, rindal2021vole}.
Concretely, the parties first run a so-OPRF, in which the sender inputs a PRF key~$k$ and the receiver inputs a set $X$, and they obtain secret shares $f_i^\SSS, f_i^\RRR$ of $\mathsf{PRF}(k, x_i)$ for each $x_i \in X$.
To program the PRF on the chosen points $\{(q_j, z_j)\}_{j \in [m]}$, the sender uses OKVS to encode these points.
OKVS \cite{garimella2021oblivious} consists of two algorithms: $\mathsf{Encode}$ takes as input a set of key-value pairs $\{(k_j, v_j)\}_{j \in [m]}$ and outputs an encoding $D$ while $\mathsf{Decode}$ takes as input a key $k$ and $D$, and outputs a value $v$.
The correctness ensures that $\mathsf{OKVS.Decode}(D, k_j)$ outputs $v_j$, and the obliviousness means that if the OKVS encodes random values, the encoding $D$ is independent of the encoded keys.
To this end, the sender computes an OKVS encoding $D$ on $\{(q_j, z_j - \mathsf{PRF}(k, q_j))\}_{j \in [m]}$
and sends $D$ to the receiver. 
Finally, the sender outputs $y_i^\SSS := f_i^\SSS$, while the receiver outputs $y_i^\RRR := f_i^\RRR + \mathsf{OKVS.Decode}(D, x_i).$
For each programmed point $x_i = q_j$, correctness holds immediately since $y_i^\SSS + y_i^\RRR
  = f_i^\SSS + f_i^\RRR + z_j - \mathsf{PRF}(k, q_j)
  = z_j.$
The pseudorandomness of unprogrammed points follows from the functionality of so-OPRF. 

The above steps are illustrated in Figure~\ref{fig: so-OPPRF workflow}. In the following Sections~\ref{subsec: Fuzzy mapping from so-OPPRF and si-OPRF} and~\ref{subsec: Fuzzy PSI from so-OPPRF}, we will show how to design fuzzy PSI from so-OPPRF.

\subsection{Fuzzy Mapping from so-OPPRF and si-OPRF}
\label{subsec: Fuzzy mapping from so-OPPRF and si-OPRF}

We present a modular framework of fuzzy mapping with the following two steps.
(1) Local mapping: the sender constructs a private local mapping $H_Q$ that maps the set $Q$'s elements to unique local IDs, while guaranteeing that for each $w_i \in W$, if there exists $q_j$ such that $\mathsf{dist}(q_j, w_i) \leq \delta$, $H_Q(w_i) = H_Q(q_j)$ holds. Similarly, the receiver generates a private local mapping $H_W$.
(2) Global mapping: two parties interactively compute the global IDs as $\mathsf{ID}_{q_j} := \mathsf{PRF}_k(H_Q(q_j) + H_W(q_j))$ and $\mathsf{ID}_{w_i} := \mathsf{PRF}_k(H_Q(w_i) + H_W(w_i))$ without revealing any information about $Q,W$.
Along with the correctness of local mapping, it ensures that if $\mathsf{dist}(q_j, w_i) \leq \delta$, it holds $\mathsf{ID}_{q_j} = \mathsf{ID}_{w_i}$.
It is worth noting that compared to the existing fuzzy mapping \cite{gao2025efficient, dang2025ccs}, our main contribution lies in the modular PRF-based global mapping design, which simplifies protocol comprehension and facilitates efficient protocol instantiation.
We elaborate on these aspects and the underlying technical methods in the remainder of this section.

\begin{figure}[t]
    \centering
    \resizebox{0.48\textwidth}{!}{%
    \begin{tikzpicture}[
        node distance=3.5cm, %
        func_box/.style={
            draw, rectangle, thick, 
            minimum width=1.3cm, minimum height=0.6cm, 
            align=center, font=\small
        },
        myarrow/.style={
            ->, >={Stealth[scale=0.6]}, 
            draw=black, thick
        },
        lbl/.style={
            font=\small, align=center, text=black
        },
        container/.style={
            draw=none, thick, 
            rounded corners=0pt, 
            inner xsep=0cm,
            inner ysep=0cm
        }
    ]

    \node[func_box] (mapping) at (0, 0) {Local Mapping};

    \node[func_box] (ssoprf) at (0, -2.0) {\Func[so\text{-}OPPRF]};

    \node[func_box] (sioprf) at (0, -4.0) {\Func[si\text{-}OPRF]};

    \node[
    single arrow,
    draw,
    line width=0.8pt,
    minimum width=0.08cm,
    minimum height=0.8cm,
    single arrow head extend=0.1cm,
    rotate=-90,
    scale=0.7
    ] at ($(mapping.south) + (0,-0.45)$) {};

    \node[
    single arrow,
    draw,
    line width=0.8pt,
    minimum width=0.08cm,
    minimum height=0.8cm,
    single arrow head extend=0.1cm,
    rotate=-90,
    scale=0.7
    ] at ($(ssoprf.south) + (0,-0.45)$) {};

    \node[lbl, anchor=north] (text1) at ($(ssoprf.north) + (0,0.55)$) 
    {Global Mapping};

    \node[lbl, anchor=north] (text2) at ($(sioprf.north) + (0,0.55)$) 
    {Randomize};

    \draw[myarrow] ($(mapping.east)+(2.25,0)$) -- (mapping.east) 
        node[pos=0.45, above, lbl] {$\{q_j\}_{j \in [m]}$};

    \draw[myarrow] ($(mapping.south)+(0.5,0)$) |- ++(2.9, -0.4) 
        node[pos=0.80, above, lbl] {$\{H_Q(q_j)\}_{j \in [m]}$};

    \draw[myarrow] ($(ssoprf.west)+(-2.5,0)$) -- (ssoprf.west) 
        node[pos=0.45, above, lbl] {$\{w_i\}_{i \in [n]}$};

    \draw[myarrow] ($(ssoprf.east)+(2.5,0)$) -- (ssoprf.east) 
        node[pos=0.45, above, lbl] {$H_Q(\cdot)$};

    \draw[myarrow] ($(ssoprf.south)+(-0.5,0)$) |- ++(-2.77, -0.4) 
        node[pos=0.80, above, lbl] {$\{H_Q^\RRR(w_i)\}_{i \in [n]}$};

    \draw[myarrow] ($(ssoprf.south)+(0.5,0)$) |- ++(2.77, -0.4) 
        node[pos=0.80, above, lbl] {$\{H_Q^\SSS(w_i)\}_{i \in [n]}$};

    \draw[myarrow] ($(sioprf.west)+(-2.5,0)$) -- (sioprf.west) 
node[pos=0.35, above, lbl]
{\scriptsize $i\in[n]$ \\ $\{H_Q^\RRR(w_i)\!+\!H_W(w_i)\}$};

    \draw[myarrow] ($(sioprf.east)+(2.5,0)$) -- (sioprf.east) 
        node[pos=0.45, above, lbl] {$\{H_Q^\SSS(w_i) \}_{i\in[n]}$};

    \draw[myarrow] ($(sioprf.south)+(-0.5,0)$) |- ++(-2.73, -0.4) 
        node[pos=0.80, above, lbl] {$\{\textsf{ID}_{w_i}\}_{i \in [n]}$};

    \node[lbl, anchor=south west] (out) at ($(ssoprf.west)+(-3.1, -3.55)$) 
    {\underline{Output $\{\textsf{ID}_{w_i}\}_{i \in [n]}$}};

    \path (ssoprf.east) ++(3.3,0) coordinate (aux_right);
    \path (ssoprf.west) ++(-3.3,0) coordinate (aux_left); %
    \coordinate (aux_bottom) at ($(out.south) + (0, -0.2cm)$); %
    \coordinate (aux_top) at ($(mapping.north) + (0, 1.0cm)$);
    
    \node[container, fit= (mapping) (ssoprf) (aux_left) (aux_right) (aux_bottom)(aux_top), draw=black] (box) {};

    \node[font=\small] at ($(mapping.north) + (-2.8, 0.6)$) {\underline{\RRR $\left( \{w_i\}_{i\in[n]} \right)$}};
    
    \node[font=\small] at ($(mapping.north) + (2.8, 0.6)$) {\underline{\SSS $\left(\{q_j \}_{j\in[m]} \right)$}};

    \end{tikzpicture}
    }
    \caption{Construction of fuzzy mapping from so-OPPRF and si-OPRF. The sender and receiver can switch roles and repeat the process symmetrically.}
    \label{fig: fpsi workflow}
\end{figure}
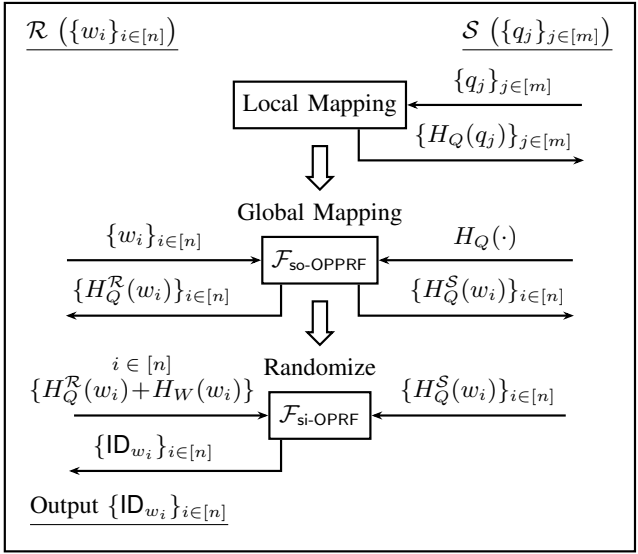

To instantiate the above framework, we present an efficient fuzzy mapping protocol for $L_\infty$ distance. 
The following description focuses on the receiver's protocol due to the symmetric design.
Specifically, in the first local mapping step, same as the prior works \cite{gao2025efficient, zhang2025fast, dang2025ccs}, for each $q_{j, k}$, the sender locally assigns a random value $r_{j, k}$ to the interval $[q_{j, k} - \delta, q_{j, k} + \delta]$ centered on $q_{j, k}$, formally represented as a set of the key-value pairs $L := \{ (k \| (q_{j, k} + t), r_{j, k}) \}_{j \in [m], k \in [d], t \in [-\delta, \delta]}$. We will explain how to handle overlaps later.
The sender then defines the local mapping $H_Q$ such that $H_Q(q_j+\xi) := \sum_{k \in [d]} r_{j, k}$ for each $q_j \in Q$ and $\xi \in [-\delta, \delta]^d$.
Then, in the global mapping step, to obtain $H_Q(w_i)$ without revealing any information to each other, the receiver obliviously retrieves values $r_{w_i, k}$ from the sender's key-value list $L$ with the inputs $k \| w_{i, k}$ for each dimension $k$.
It needs to ensure that for each $k \in [d]$, if $\abs{w_{i, k} - q_{j, k}} \leq \delta$, then $r_{w_i, k} = r_{j, k}$ holds and hence $H_Q(w_i) = \sum_{k \in [d]} r_{w_i, k} = H_Q(q_j)$; otherwise, $r_{w_i, k}$ is a random value.

We observe that the above functionality can be realized by invoking the OPPRF protocol, where the sender inputs the programmed points $L$, and the receiver inputs queries $k \| w_{i, k}$ and learns either the designated $r_{j, k}$ or pseudorandom values.
However, this will leak to the receiver additional information about partial matches on individual dimensions.
That is, since the sender programs all points in the interval $[q_{j, k} - \delta, q_{j, k} + \delta]$ to the same $r_{j, k}$, the receiver learns the same OPPRF evaluation on two different $w_{i,k}, w_{i^\prime, k} \in [q_{j, k} - \delta, q_{j, k} + \delta]$, leaking the range of some $q_{j, k}$.
Existing solutions \cite{gao2025efficient, dang2025ccs} employ AHE to avoid this leakage, but result in prohibitively large overhead.

\noindent\textbf{Applying so-OPPRF.} We adopt our so-OPPRF protocol to prevent the information leakage to the receiver while achieving better efficiency than AHE-based solutions.
Specifically, with the same inputs $L$ and $\{k \| w_{i, k} \}_{k \in [d]}$ as above, so-OPPRF returns the secret shares $\{ r^{\SSS}_{w_i,k} \}_{k\in [d]}$ to the sender and $\{ r^{\RRR}_{w_i,k} \}_{k\in [d]}$ to the receiver.
Then, the sender sets $H^\SSS_Q(w_i) := \sum_{k\in[d]} r^{\SSS}_{w_i,k}$ and the receiver sets $H^\RRR_Q(w_i) := \sum_{k\in[d]} r^{\RRR}_{w_i,k}$, both of which are the secret shares of $H_Q(w_i)$.

However, $H_Q(w_i)$ can not be directly revealed to the receiver.
The reason is that in the local mapping~\cite{gao2025efficient}, when there are overlapped intervals, later random values will overwrite earlier values. This enlarges the valid radius-$\delta$ interval of $q_{j,k}$, thereby introducing false positives.
This may result in two elements $w_{i_1}, w_{i_2}$ satisfying $H_Q(w_{i_1}) = H_Q(w_{i_2})$ even if $\mathsf{dist}(w_{i_1}, w_{i_2}) > \delta$, which leaks the sender's information, i.e., there is a sender’s element nearby $w_{i_1}, w_{i_2}$.
Existing protocols \cite{gao2025efficient, dang2025ccs} employ a symmetric execution\footnote{\label{fn: revise}The method \cite{dang2025ccs} lacks this symmetric execution and causes the above privacy leakage issue, which has been confirmed by the authors via our private communication. Table \ref{tab:complexities} gives the corrected complexity with the symmetric execution.} to additionally compute $H_W(w_i)$ and derive $\mathsf{ID}_{w_i}$ from $H_Q(w_i) + H_W(w_i)$ with a customized combination of expensive AHE and Diffie–Hellman key-exchange.

\noindent\textbf{Applying si-OPRF.} To address this leakage, we introduce a new building block, called OPRF with secret-shared inputs (si-OPRF), and realize it with MPC-friendly PRF \cite{alamati2024improved}. 
Different from so-OPRF, in si-OPRF, the PRF key $k$ and the PRF input $x$ are secret shared between two parties. 
After evaluation, it returns the plain PRF output $\mathsf{PRF}_k(x)$ to the receiver.
To obtain the final IDs $\mathsf{ID}_{w_i}$, the sender and the receiver input secret shares $H^\SSS_Q(w_i)$ and $H^\RRR_Q(w_i) + H_W(w_i)$ and secret shared $k$ into si-OPRF. Then the receiver learns $\mathsf{ID}_{w_i} := \mathsf{PRF}_k(H_Q(w_i)+H_W(w_i))$.
This ensures that even for $H_Q(w_{i_1}) = H_Q(w_{i_2})$, if local mapping ensures that $H_W(w_{i_1}) \neq H_W(w_{i_2})$, the receiver learns different $\mathsf{ID}_{w_{i_1}}$ and $ \mathsf{ID}_{w_{i_2}}$.
Similarly, the sender learns $\mathsf{ID}_{q_j} := \mathsf{PRF}_k(H_Q(q_j)+H_W(q_j))$, where the same secret-shared PRF key $k$ is used by both parties in two si-OPRF invocations.
This ensures that $\mathsf{ID}_{q_j} = \mathsf{ID}_{w_i}$ for $\mathsf{dist}(q_{j}, w_{i}) \leq \delta$ due to $H_Q(q_j) = H_Q(w_i)$ and $H_W(q_j) = H_W(w_i)$.
The above processes are illustrated in Figure~\ref{fig: fpsi workflow}.

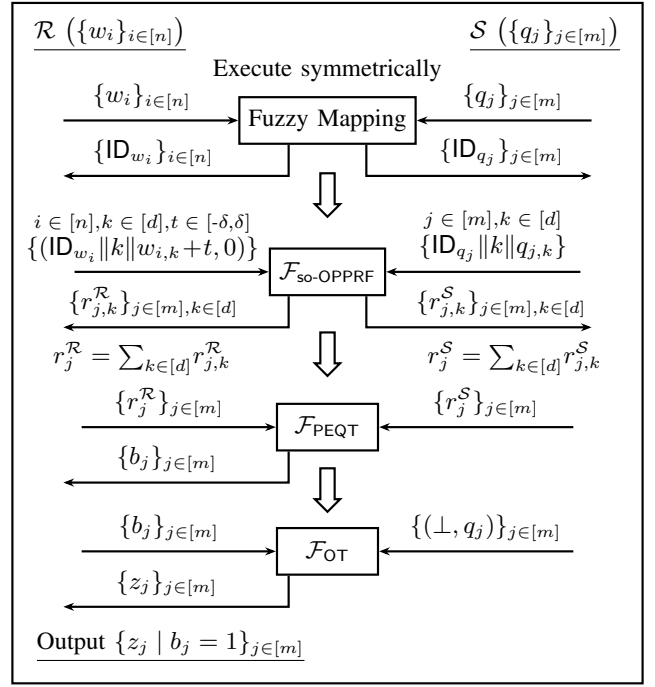
\begin{figure}[t]
    \centering
    \resizebox{0.48\textwidth}{!}{%
    \begin{tikzpicture}[
        node distance=3.5cm, %
        func_box/.style={
            draw, rectangle, thick, 
            minimum width=1.3cm, minimum height=0.6cm, 
            align=center, font=\small
        },
        myarrow/.style={
            ->, >={Stealth[scale=0.6]}, 
            draw=black, thick
        },
        lbl/.style={
            font=\small, align=center, text=black
        },
        container/.style={
            draw=none, thick, 
            rounded corners=0pt, 
            inner xsep=0cm,
            inner ysep=0cm
        }
    ]

    \node[func_box] (mapping) at (0, -0.25) {Fuzzy Mapping};

    \node[func_box] (ssoprf) at (0, -2.2) {\Func[so\text{-}OPPRF]};

    \node[func_box] (sioprf) at (0, -4.2) {\Func[PEQT]};

    \node[func_box] (ot) at (0, -5.8) {\Func[OT]};

    \node[
    single arrow,
    draw,
    line width=0.8pt,
    minimum width=0.08cm,
    minimum height=0.8cm,
    single arrow head extend=0.1cm,
    rotate=-90,
    scale=0.7
    ] at ($(mapping.south) + (0,-0.60)$) {};

    \node[
    single arrow,
    draw,
    line width=0.8pt,
    minimum width=0.08cm,
    minimum height=0.8cm,
    single arrow head extend=0.1cm,
    rotate=-90,
    scale=0.7
    ] at ($(ssoprf.south) + (0,-0.70)$) {};

    \node[
    single arrow,
    draw,
    line width=0.8pt,
    minimum width=0.08cm,
    minimum height=0.8cm,
    single arrow head extend=0.1cm,
    rotate=-90,
    scale=0.7
    ] at ($(sioprf.south) + (0,-0.45)$) {};

    \node[lbl, anchor=north] (text0) at ($(mapping.north) + (0,0.6)$) 
    {Execute symmetrically};

    \draw[myarrow] ($(mapping.east)+(2.25,0)$) -- (mapping.east) 
        node[pos=0.45, above, lbl] {$\{q_j\}_{j \in [m]}$};

    \draw[myarrow] ($(mapping.west)+(-2.25,0)$) -- (mapping.west) 
        node[pos=0.45, above, lbl] {$\{w_i\}_{i \in [n]}$};

    \draw[myarrow] ($(mapping.south)+(-0.5,0)$) |- ++(-2.9, -0.4) 
        node[pos=0.80, above, lbl] {$\{\textsf{ID}_{w_i}\}_{i \in [n]}$};

    \draw[myarrow] ($(mapping.south)+(0.5,0)$) |- ++(2.9, -0.4) 
        node[pos=0.80, above, lbl] {$\{\textsf{ID}_{q_j}\}_{j \in [m]}$};

    \draw[myarrow] ($(ssoprf.west)+(-2.5,0)$) -- (ssoprf.west) 
        node[pos=0.35, above, lbl] {\scriptsize $i \in [n],\!k
    \in [d], \!t\in [\text{-}\delta,\!\delta]$ \\ $\left\{(\textsf{ID}_{w_i}\Vert k \Vert w_{i,k}\!+\!t,0)\right\}$};

    \draw[myarrow] ($(ssoprf.east)+(2.5,0)$) -- (ssoprf.east) 
        node[pos=0.45, above, lbl] {\scriptsize $j \in [m],\!k
    \in [d]$ \\ $\{\textsf{ID}_{q_j}\Vert k \Vert q_{j,k} \}$};

    \draw[myarrow] ($(ssoprf.south)+(-0.5,0)$) |- ++(-2.9, -0.4) 
        node[pos=0.80, above, lbl] {$\{r_{j,k}^\RRR\}_{j \in [m],k
    \in [d]}$};

    \draw[myarrow] ($(ssoprf.south)+(0.5,0)$) |- ++(2.9, -0.4) 
        node[pos=0.80, above, lbl] {$\{r_{j,k}^\SSS\}_{j \in [m],k
    \in [d]}$};

    \node[lbl, anchor=south west] (calc_di0) at ($(ssoprf.west)+(-2.9, -1.45)$) 
        {$r_{j}^\RRR = {\sum}_{k\in[d]} r_{j,k}^\RRR$};

    \node[lbl, anchor=south east] (calc_di1) at ($(ssoprf.east)+(2.9, -1.45)$) 
        {$r_{j}^\SSS = {\sum}_{k\in[d]} r_{j,k}^\SSS$};

    \draw[myarrow] ($(sioprf.west)+(-2.5,0)$) -- (sioprf.west) 
node[pos=0.45, above, lbl]
{$\{r_j^\RRR\}_{j\in[m]}$};

    \draw[myarrow] ($(sioprf.east)+(2.5,0)$) -- (sioprf.east) 
        node[pos=0.45, above, lbl] {$\{r_j^\SSS\}_{j\in[m]}$};

    \draw[myarrow] ($(sioprf.south)+(-0.5,0)$) |- ++(-2.9, -0.4) 
        node[pos=0.77, above, lbl] {$\{b_j\}_{j\in[m]}$};

    \draw[myarrow] ($(ot.west)+(-2.5,0)$) -- (ot.west) 
node[pos=0.45, above, lbl]
{$\{b_j\}_{j\in[m]}$};

    \draw[myarrow] ($(ot.east)+(2.5,0)$) -- (ot.east) 
        node[pos=0.45, above, lbl] {$\{(\bot,q_j)\}_{j\in[m]}$};

    \draw[myarrow] ($(ot.south)+(-0.5,0)$) |- ++(-2.9, -0.4) 
        node[pos=0.77, above, lbl] {$\{z_j\}_{j\in [m]}$};

    \node[lbl, anchor=south west] (out) at ($(sioprf.west)+(-3.2, -3.15)$) 
    {\underline{Output $\{ z_j \;\vert \; b_j=1\}_{j\in [m]}$}};

    \path (ssoprf.east) ++(3.3,0) coordinate (aux_right);
    \path (ssoprf.west) ++(-3.3,0) coordinate (aux_left); %
    \coordinate (aux_bottom) at ($(out.south) + (0, -0.15cm)$); %
    \coordinate (aux_top) at ($(mapping.north) + (0, 1.2cm)$);
    
    \node[container, fit= (mapping) (ssoprf) (aux_left) (aux_right) (aux_bottom)(aux_top), draw=black] (box) {};

    \node[font=\small] at ($(mapping.north) + (-2.8, 0.8)$) {\underline{\RRR $\left( \{w_i\}_{i\in[n]} \right)$}};
    
    \node[font=\small] at ($(mapping.north) + (2.8, 0.8)$) {\underline{\SSS $\left(\{q_j \}_{j\in[m]} \right)$}};

    \end{tikzpicture}
    }
    \caption{Construction of fuzzy PSI from so-OPPRF.}
    \label{fig: fuzzy psi workflow}
\end{figure}

\subsection{Fuzzy PSI from so-OPPRF}
\label{subsec: Fuzzy PSI from so-OPPRF}

The fuzzy mapping protocol above guarantees that all $\delta$-close elements between the sender and the receiver are assigned the same IDs, but it may introduce false positives. We further utilize our so-OPPRF protocol to perform a refined filtering that excludes elements that are not truly $\delta$-close, thereby achieving fuzzy PSI for both the $L_\infty$ and $L_{p}$ distances with $p \in [1, \infty)$.

We here focus on the protocol for $L_\infty$ distance.
As shown in Figure~\ref{fig: fuzzy psi workflow}, two parties first invoke fuzzy mapping procedure to get the IDs, and then execute the so-OPPRF protocol, where the receiver inputs $\{\left( \mathsf{ID}_{w_i} \Vert k\Vert (w_{i,k}+t), 0\right)\}_{t\in [-\delta,\delta], i \in [n],k\in [d]}$ and the sender inputs $\{ \mathsf{ID}_{q_j} \Vert k \Vert q_{j,k} \}_{j \in [m],k \in [d]}$.
The sender receives $\{ r^{\SSS}_{j,k} \}_{j\in [m],k\in [d]}$ and the receiver receives $\{ r^{\RRR}_{j,k} \}_{j\in[m],k\in [d]}$.
That means if $\mathsf{ID}_{q_j} = \mathsf{ID}_{w_i}$ and the distance between $q_{j,k}$ and $w_{i,k}$ is within $\delta$, both parties will obtain the secret shares of 0 in dimension $k$; otherwise, they hold random values.
Then, for \( j \in [m] \), the sender computes $r^{\SSS}_{j} :={\sum}_{k\in [d]} r^{\SSS}_{j,k}$, and the receiver computes $r^{\RRR}_{j} := {\sum}_{k\in [d]} r^{\RRR}_{j,k}$.
For each $\delta$-close element pair with the same ID, so-OPPRF guarantees that $r^{\SSS}_{j}$ and $r^{\RRR}_{j}$ are secret shares of the designated value 0.
In contrast, if $\mathsf{ID}_{q_j}$ is different from all IDs of $W$ or there exists $w_i$ with the same ID but the distance $\abs{w_{i,k} - q_{j,k}} > \delta$ for some dimension $k$, then $r^{\SSS}_{j}$ and $r^{\RRR}_{j}$ will be secret shares of a random value according to the randomness property of so-OPPRF.
After that, similar to prior works, both parties can employ a cheap Private Equality Test (PEQT) protocol to verify whether they hold secret shares of 0. Finally, the two parties invoke an Oblivious Transfer (OT) protocol, where the receiver selects the sender's elements based on outcomes of the PEQT protocol.

\subsection{Fuzzy PSI with prefix optimizations}
\label{subsec: Fuzzy PSI with prefix optimizations}

We further optimize our fuzzy PSI protocols using prefix techniques \cite{garimella2024computation,van2025,dang2025ccs,bui2025new} for large distance threshold $\delta$.
This modification reduces the communication and computation complexity from $\bigo{\delta}$ to $\bigo{\log\delta}$.
The high-level idea to incorporate prefix techniques in our framework is that the sender in the so-OPPRF protocol does not need to program the entire interval $[q_{j,k}-\delta, q_{j,k}+\delta]$ of size $2\delta + 1$, but only $\bigo{\log\delta}$ prefixes that together cover the interval. We present customized protocols to instantiate this optimization. 
In contrast to the state-of-the-art protocol \cite{dang2025ccs} that employs expensive AHE, our protocols still use our so-OPPRF and the newly introduced equality-conditional selection protocol, which significantly improve overall efficiency.
Please refer to Section \ref{sec: Fuzzy PSI with Prefix Optimization} for more details.

\section{Preliminary}
\label{sec: Preliminary}

\subsection{Notation}
We use \(\kappa\) and \(\lambda\) to denote the computational and statistical security parameters, respectively. 
Let $\mathsf{negl}(x)$ be a negligible function in $x$ if it vanishes faster than the inverse of any polynomial in $x$.
We use $[a, b]$ to denote the set $\{a, \ldots, b\}$ and $[a]$ to denote the set $\{1, \ldots, a\}$.
For a set \(S\), \(|S|\) denotes the cardinality of $S$.
By $r \leftarrow S$, we denote that $r$ is sampled from the set $S$ uniformly at random.
We use $\textbf{1}\{\mathsf{event}\}$ to denote an indicator function, which equals 1 if the $\mathsf{event}$ occurs and 0 otherwise. 
All protocols in this work are secure in the semi-honest model and the definition is deferred to Appendix~\ref{sec: Threat Model}.

\subsection{Functionality of Fuzzy PSI}
We formally define the ideal functionality for fuzzy PSI in Figure~\ref{Func:FPSI}.
Same as the state-of-the-art linear-complexity fuzzy PSI works \cite{gao2025efficient, dang2025ccs}, we utilize the disjoint projection assumption for the sets of both sender and receiver.
That is, each element maintains a distance of more than $2\delta$ on at least one dimension from the set's other elements.
We define it formally as follows:
\begin{definition}[Disjoint projection]
\label{def:assumption}
    A set $W \in \UU^{n \times d}$ satisfies the $\delta$-disjoint projection assumption, if for any $w_i \in W$, there exists $k \in [d]$ such that for any $w_j \in W$ for $j \neq i$ it holds $[w_{i,k} - \delta, w_{i,k} + \delta] \intersection [w_{j,k} - \delta, w_{j,k} + \delta] = \emptyset$.
\end{definition}
As shown in the work~\cite{van2024fuzzy}, if the set's elements are uniformly distributed, the set satisfies the disjoint projection assumption defined in Definition~\ref{def:assumption} with probability $1-\mathsf{negl}(d)$.

\begin{figure}[h]
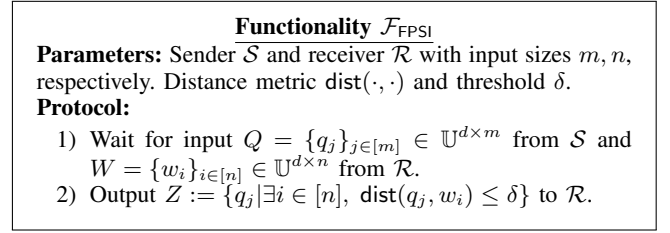

\begin{nffunc}{\Func[FPSI]}

\noindent \textbf{Parameters:} Sender \SSS and receiver \RRR with input sizes $m, n$, respectively. Distance metric $\mathsf{dist}(\cdot,\cdot)$ and threshold $\delta$.

\noindent \textbf{Protocol:}

\begin{enumerate}

    \item Wait for input $Q = \{q_j\}_{j \in [m]} \in \UU^{d\times m}$ from \SSS and $W = \{w_i\}_{i \in [n]} \in \UU^{d\times n}$ from \RRR.

    \item Output $Z := \{q_j \vert \exists i\in[n], \;\mathsf{dist}(q_j, w_i) \le \delta \}$ to \RRR.

\end{enumerate}

\end{nffunc}
\vspace{0.5em}
\caption{Functionality of fuzzy PSI.}
\label{Func:FPSI}
\end{figure}

\begin{figure}[t]
\begin{nffunc}{\Func[so\text{-}OPRF]}

\noindent \textbf{Parameters:} Sender \SSS and receiver \RRR. 
PRF $F : \KK \times \UU \to \FF.$

\noindent \textbf{Protocol:}

\begin{enumerate}

    \item  Wait for input $X=\{ x_1,\ldots,x_n\} \subseteq \UU$  from \RRR.

    \item  Wait for input $k \in \KK$ from \SSS.

    \item  For $i \in [n]$, compute $y_i := F_k(x_i)$ and sample $y_i^\SSS, y_i^\RRR \from \FF$ such that $y_i^\SSS + y_i^\RRR = y_i \in \FF$.

    \item Output $\{y_i^\SSS\}_{i \in [n]}$ to \SSS and $\{y_i^\RRR\}_{i \in [n]}$ to \RRR. 
    
\end{enumerate}

\end{nffunc}
\vspace{0.5em}
\caption{Functionality of oblivious PRF with secret-shared outputs.}
\label{Func:so-oprf}
\end{figure}

\begin{figure}[h]
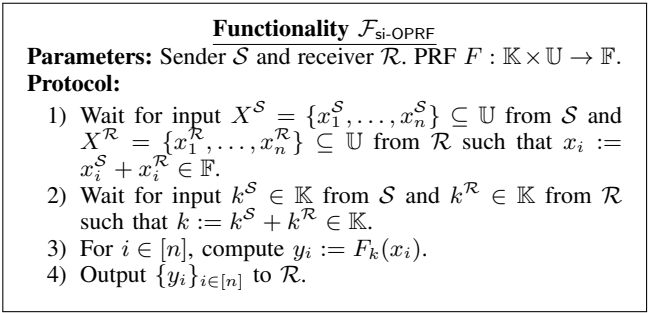

\begin{nffunc}{\Func[si\text{-}OPRF]}

\noindent \textbf{Parameters:} Sender \SSS and receiver \RRR. PRF $F : \KK \times \UU \to \FF.$

\noindent \textbf{Protocol:}

\begin{enumerate}

    \item  Wait for input $X^\SSS=\{ x_1^\SSS,\ldots,x_n^\SSS\} \subseteq \UU$  from \SSS and $X^\RRR=\{ x_1^\RRR,\ldots,x_n^\RRR\} \subseteq \UU$ from \RRR such that $x_i := x_i^\SSS + x_i^\RRR \in \FF$.

    \item  Wait for input $k^\SSS \in \KK$ from \SSS and $k^\RRR \in \KK$ from \RRR such that $k := k^\SSS + k^\RRR \in \KK$.

    \item  For $i \in [n]$, compute $y_i := F_k(x_i)$.

    \item Output $\{y_i\}_{i \in [n]}$ to \RRR.
    
\end{enumerate}

\end{nffunc}
\vspace{0.5em}
\caption{Functionality of oblivious PRF with secret-shared inputs.}
\label{Func:si-oprf}
\end{figure}

\subsection{Oblivious Key-Value Store}

An oblivious key-value store (OKVS) \cite{garimella2021oblivious} is a data structure consisting of two algorithms:
$\mathsf{Encode}$ takes as input a set of key-value pairs and outputs a data structure, and $\mathsf{Decode}$ takes as input a key and the data structure and outputs a value.
The obliviousness implies that if the OKVS encodes random values, the data structure is independent of the encoded key set.

\begin{definition}[Oblivious Key-Value Store \cite{garimella2021oblivious}]
\label{def: okvs}
An oblivious key-value store (OKVS) is parameterized by a key space \KKK, a value space \VVV, and the statistical security parameter $\lambda$, and consists of two algorithms:
\begin{itemize}
    \item $\mathsf{Encode}$: on input a set of key-value pairs $L \in (\KKK \times \VVV)^n$, outputs a vector $D \in \VVV^m$ or a failure indicator $\bot$ with probability bounded by $1/2^\lambda$.
    \item $\mathsf{Decode}$: on input a vector $D \in \VVV^m$ and a key $k \in \KKK$, outputs a value $v \in \VVV$.
\end{itemize}
\end{definition}

\noindent \textbf{Correctness}: For all $L \in (\KKK \times \VVV)^n$ with distinct keys for which $D \from \mathsf{Encode}(L)$ and $D \neq \bot$, it holds that $\forall (k, v) \in L$, $\mathsf{Decode}(D, k) = v$.

\noindent \textbf{Obliviousness}: For any distinct $\{k_1, \ldots, k_n\} \in \KKK^n$ and $\{k^\prime_1, \ldots, k^\prime_n\} \in \KKK^n$, $\mathsf{Encode}$ does not output $\bot$ on $\{k_1, \ldots, k_n\}$ and $\{k^\prime_1, \ldots, k^\prime_n\}$, and then the following distributions are statistically indistinguishable
\begin{equation*}
\begin{aligned}
& \{\mathsf{Encode}(\{(k_1, v_1), \ldots, (k_n, v_n)\}) \mid v_i \from \VVV, i \in [n] \} \approx_s \\
 & \{\mathsf{Encode}(\{(k^\prime_1, v_1), \ldots, (k^\prime_n, v_n)\}) \mid v_i \from \VVV, i \in [n]\}.
\end{aligned}
\end{equation*}

\noindent \textbf{Double obliviousness}: For any distinct $\{k_1, \ldots, k_n\} \in \KKK^n$ such that $\mathsf{Encode}$ does not output $\bot$ on $\{k_1, \ldots, k_n\}$, then $\{\mathsf{Encode}(\{(k_1, v_1), \ldots, (k_n, v_n)\}) \mid v_i \from \VVV, i \in [n]\}$ is statistically indistinguishable from uniform distribution over $\VVV^m$.

\noindent \textbf{Independence}: For any $L := \{(k_i, v_i)_{i \in [n]}\} \in (\KKK \times \VVV)^n$ with distinct keys such that $D \from \mathsf{Encode}(L)$ and $D \neq \bot$, it holds that for any $k \notin \{k_i\}_{i \in [n]}$, $\mathsf{Decode}(D, k)$ is statistically indistinguishable from uniform distribution over $\VVV$.

\subsection{Secret shared OPRF}

We introduce two functionalities of  OPRF with secret shared output (so-OPRF) and OPRF with secret shared key and input (si-OPRF) in Figure~\ref{Func:so-oprf} and Figure~\ref{Func:si-oprf}, respectively.
We instantiate the above two variants of OPRF using MPC-friendly alternating-moduli (weak) PRF, which was first proposed by Boneh et al. \cite{boneh2018exploring} and later optimized in
\cite{dinur2021mpc, albrecht2024crypto, alamati2024improved}.
At a high level, this construction takes as input a key $k$ and a value $x$, and multiplies them by various matrices modulo two different primes, e.g., 2 and 3. 
More concretely, the PRF construction in \cite{alamati2024improved} is defined as
$
F(k,x) := \mathbf{B} \cdot_2 \left(\mathbf{A} \cdot_3 \left[k \circ_2 \left(\mathbf{G} \cdot_2 \left[x || 1\right]\right)\right]\right)
$,
where $\mathbf{G} \in \mathbb{F}_2^{n \times (d+1)}$, $\mathbf{A} \in \mathbb{F}_3^{m \times n}$, $\mathbf{B} \in \mathbb{F}_2^{t \times m}$ are uniformly distributed, and $\cdot_p$, $\circ_p$ are multiplication and component-wise multiplication modulo $p$. In particular, we make use of the parameterization $d = \kappa, n = 4\kappa, m = 2\kappa, t = \kappa$ which implies $\mathbf{G}$ is an expanding matrix and $\mathbf{B}$ is a compressing matrix. 
The security of this construction relies on the assumption~\cite{boneh2018exploring} that linear operations over two different moduli result in a highly non-linear and unpredictable function when viewed over $\mathbb{F}_2$ or $\mathbb{F}_3$.

\subsection{Functionalities of Building Blocks}

We present the functionalities of the private equality/interval test in Figure~\ref{Func:Interval}.
Chakraborti et al.~\cite{chakraborti2023distance} introduced an efficient private interval test protocol based on prefix representation. Both the communication and computation complexities are $\bigo{\log \delta}$.
Besides, we introduce the functionalities \Func[MUX] and \Func[ssPEQT], which are deferred to Appendix~\ref{sec: mpc func}.

\begin{figure}[htbp]
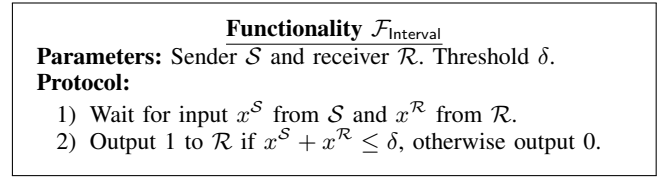

\begin{nffunc}{\Func[Interval]}

\noindent \textbf{Parameters:} Sender \SSS and receiver \RRR. Threshold $\delta$.

\noindent \textbf{Protocol:}

\begin{enumerate}

    \item Wait for input $x^{\SSS}$ from \SSS and $x^{\RRR}$ from \RRR.

    \item Output 1 to \RRR if $x^{\SSS}+x^{\RRR} \le \delta$, otherwise output 0.

\end{enumerate}

\end{nffunc}
\vspace{0.5em}
\caption{Functionality of private interval test.}
\label{Func:Interval}
\end{figure}

\begin{figure}[t]
\begin{nffunc}{\Func[so\text{-}OPPRF]}

\noindent \textbf{Parameters:} Sender \SSS and receiver \RRR with input sizes $m, n$, respectively. 

\noindent \textbf{Protocol:}

\begin{enumerate}

    \item Wait for input $L = \{(q_{j}, z_j)\}_{j \in [m]} \subseteq \UU \times \FF$ from \SSS and $X = \{x_i\}_{i \in [n]} \subseteq \UU$ from \RRR.

    \item Sample a random function $F^\prime : \UU \to \FF$ such that $F^\prime(q) = z$ for $(q, z) \in L$. 
    
    \item For $i \in [n]$, compute $y_i := F^\prime(x_i)$ and sample $y_i^\SSS, y_i^\RRR \from \FF$ such that $y_i^\SSS + y_i^\RRR = y_i \in \FF$.

    \item Output $\{y_i^\SSS\}_{i \in [n]}$ and $\OOO^{F^\prime}$ to \SSS and $\{y_i^\RRR\}_{i \in [n]}$ to \RRR.

\end{enumerate}

\end{nffunc}
\vspace{0.5em}
\caption{Functionality of oblivious programmable PRF with secret-shared outputs.}
\label{Func:so-opprf}
\end{figure}

\begin{figure}[t]
\begin{nfprot}{\Prot[so\text{-}OPPRF]}

\noindent \textbf{Parameters:} Sender \SSS and receiver \RRR with input sizes $m, n$, respectively.  
PRF $F : \KK \times \UU \to \FF.$

\noindent \textbf{Input:} \SSS inputs $L = \{(q_{j}, z_j)\}_{j \in [m]} \subseteq \UU \times \FF$. \RRR inputs $X = \{x_i\}_{i \in [n]} \subseteq \UU$. 

\noindent \textbf{Protocol:}

\begin{enumerate}

    \item \SSS samples $k \from \KK$. \SSS and \RRR invoke  functionality \Func[so\text{-}OPRF], where \SSS inputs $k$ and $R$ inputs $X = \{x_i\}_{i \in [n]}$. \SSS receives $\{f_i^\SSS\}_{i \in [n]}$, and \RRR receives $\{f_i^\RRR\}_{i \in [n]}$, where $f_i := F_k(x_i) \in \FF$.

    \item \SSS computes PRF values $F_k(q_j)$ for $j \in [m]$ and computes the OKVS encoding $D := \mathsf{OKVS.Encode}(L^\prime)$, where $L^\prime := \{ (q_j, z_j - F_k(q_j))\}_{j \in [m]}$.

    \item \SSS sends $D$ to \RRR and \RRR computes $d_i := \mathsf{OKVS.Decode}(D, x_i)$ for $i \in [n]$.

    \item  \SSS defines the function $F^\prime(q) := F_k(q) + \mathsf{OKVS.Decode}(D, q)$.

    \item \SSS outputs $\{y_i^\SSS := f_i^\SSS\}_{i \in [n]}$ and $\OOO^{F^\prime}$, and \RRR outputs $\{y_i^\RRR := f_i^\RRR + d_i\}_{i \in [n]}$.

\end{enumerate}

\end{nfprot}
\vspace{0.5em}
\caption{Protocol of oblivious programmable PRF with secret-shared outputs.}
\label{Prot:so-opprf}
\end{figure}

\subsection{Prefix Representation}

Prefix representation, first proposed by Chakraborti et al.~\cite{chakraborti2023distance} and further studied and formalized in~\cite{garimella2024computation, dang2025ccs, van2025,bui2025new}, is used to improve the efficiency of fuzzy PSI. 
Let $x = x_{\ell} x_{\ell-1} \cdots x_{1} \in \{0, 1\}^\ell$ be a binary string.
We first introduce the following notations:
\begin{itemize}
    \item $\mathsf{Prefix}(x_{\ell} x_{\ell-1}\cdots x_{1},k) = x_{\ell} x_{\ell-1}\cdots x_{k+1}$
    \item $\mathsf{AllPrefix}(x_{\ell} x_{\ell-1}\cdots x_{1},k) = \{ x_{\ell} x_{\ell-1}\cdots x_{j+1} \}_{j\in [0,k]}$
    \item $\mathsf{UpBound}(x_{\ell} x_{\ell-1}\cdots x_{k})=x_{\ell} x_{\ell-1}\cdots x_{k} \| 11\cdots 1$
    \item $\mathsf{LowBound}(x_{\ell} x_{\ell-1}\cdots x_{k})=x_{\ell} x_{\ell-1}\cdots x_{k} \| 00\cdots 0$
    \item $\mathsf{Interval}(x_{\ell} x_{\ell-1}\cdots x_{k}) = \{x_{\ell}\cdots x_{k} \| x^*\}_{x^* \in \{0,1\}^{k-1}}$
\end{itemize}

Previous works~\cite{dang2025ccs,van2025} gave the definition of $\mathsf{Decompose}$ as following:
\begin{definition}
Given an integer interval $[a-\delta,a+\delta]$, there is an algorithm $\mathsf{Decompose}$ to succinctly encode the interval into a list of prefixes $\{p_i\}_{i\in [\hat{\mu}]}$ such that:
\begin{itemize}
    \item $[a-\delta,a+\delta] = \bigcup_{i\in [\hat{\mu}]} 
    \mathsf{Interval}(p_i)$
    \item For any $i\in [\hat{\mu}]$, if a binary string $p$ is a prefix of $p_i$, $\mathsf{Interval}(p_i) \nsubseteq [a-\delta,a+\delta]$
\end{itemize}
\end{definition}
The algorithm $\mathsf{Decompose}$~\cite{dang2025ccs,van2025} has a computation complexity $\bigo{\log\delta}$ and the number of prefixes $\hat{\mu}$ is also $\bigo{\log\delta}$. The length of prefixes $p_i$ is at least $\ell -\mu^\prime$ where $\mu^\prime$ is the number of wildcards (i.e., non-determined bit strings) and $\mu^\prime = \bigo{\log\delta}$. Precisely, if $\delta$ is a power of 2, the number of prefixes $\hat{\mu}$ has a lower bound of $\ceil{\log(2\delta+1)}$ and $\mu^\prime$ has an upper bound of $\floor{\log(2\delta+1)}$~\cite{van2025}. In this paper, we set our threshold $\delta$ to a power of 2 for simplicity. Unless otherwise specified, we denote $\hat{\mu} = 2+ \log\delta$ and $\mu^\prime = 1+\log\delta$ or $\log\delta$. For a binary string $x\in\{0,1\}^{\ell}$, there are $\mu^\prime+1$ prefixes of length at least $\ell-\mu^\prime$ and we denote $\mu =\mu^\prime+1$.

\section{OPPRF with Shared Outputs}
\label{sec: OPPRF with Shared Output}

We introduce a new variant of OPPRF, termed so-OPPRF, whose outputs are secret-shared between two parties. So-OPPRF is useful in scenarios where outputs cannot be revealed to either party individually, which may be of independent interest.
As we have presented our idea in Section~\ref{subsec: OPPRF with shared outputs},
here we illustrate the ideal functionality in Figure~\ref{Func:so-opprf}
and the detailed protocol in Figure~\ref{Prot:so-opprf}.

We show the protocol's security in Theorem~\ref{thm: Proof of so-OPPRF}, and the security proof is deferred to Appendix~\ref{subsec: Proof of so-OPPRF}.

\begin{theorem}
\label{thm: Proof of so-OPPRF}
    The protocol \Prot[so\text{-}OPPRF] in Figure \ref{Prot:so-opprf} realizes the functionality \Func[so\text{-}OPPRF] in Figure \ref{Func:so-opprf} against semi-honest adversaries in the (\Func[so\text{-}OPRF])-hybrid model if $\mathsf{OKVS}$ satisfies the correctness, double obliviousness, and independence properties defined in Definition \ref{def: okvs}.
\end{theorem}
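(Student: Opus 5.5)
We prove the theorem in the standard semi-honest simulation paradigm in the \Func[so\text{-}OPRF]-hybrid model. We first establish correctness of the joint output distribution, then build a simulator for a corrupted \SSS and one for a corrupted \RRR, and argue indistinguishability by a short hybrid argument. Throughout, the ideal-world random function $F'$ and its oracle $\OOO^{F'}$ given to \SSS must be matched jointly with the real-world function $F'(q)=F_k(q)+\mathsf{OKVS.Decode}(D,q)$, so the view and the outputs have to be handled together rather than separately.

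\textbf{Correctness and output distribution.} We show that in the real protocol the joint output has the right distribution. For a programmed point $x_i=q_j$, correctness of the OKVS gives $d_i=z_j-F_k(q_j)$, so $y_i^\SSS+y_i^\RRR=f_i+d_i=z_j$. For an unprogrammed point, $y_i^\SSS+y_i^\RRR=F_k(x_i)+\mathsf{Decode}(D,x_i)$. By independence of the OKVS, $\mathsf{Decode}(D,x)$ for $x\notin\{q_j\}$ is statistically close to uniform; alternatively we can rely on PRF security of $F_k$, since $k$ is hidden from \RRR. Either way, the unprogrammed values are indistinguishable from those of a random $F'$ with the programming constraints. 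The shares are uniform subject to their sum, because \Func[so\text{-}OPRF] re-randomizes them.

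\textbf{Corrupted \SSS.} \SSS's view consists of its input $L$, its own random $k$, and the so-OPRF outputs $\{f_i^\SSS\}$; it sends $D$ but receives nothing else. The simulator samples $k$ and uniform $f_i^\SSS$. In the ideal world it sets these equal to the $y_i^\SSS$ obtained from the functionality, which are themselves uniform, and it computes $D$ honestly. The subtle point is consistency with $\OOO^{F'}$. In the real world, \SSS's oracle function is determined by $(k,D)$. We therefore argue that the ideal $F'$ restricted to whatever \SSS queries is indistinguishable from $F_k+\mathsf{Decode}(D,\cdot)$, using the same PRF/independence argument as in the correctness step. The value $y_i^\RRR=F'(x_i)-y_i^\SSS$ is then consistent in both worlds.

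\textbf{Corrupted \RRR.} \RRR's view is $X$, the shares $\{f_i^\RRR\}$, and $D$. The simulator, given $\{y_i^\RRR\}$, samples $D\from\VVV^m$ uniformly and sets $f_i^\RRR:=y_i^\RRR-\mathsf{Decode}(D,x_i)$. The argument proceeds in three steps. First, \Func[so\text{-}OPRF] makes the real $f_i^\RRR$ uniform and independent of $k$. Second, double obliviousness is invoked: the encoded values $z_j-F_k(q_j)$ are pseudorandom to \RRR, since $k$ is unknown to it; replacing $F_k$ by a truly random function (a PRF hybrid) makes them uniform, so $D$ is statistically close to uniform. Third, we check that the joint distribution of $(D,\{f_i^\RRR\},$ honest \SSS outputs$)$ matches the ideal one.

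The main obstacle is this joint argument with \SSS's output. We will handle it with a hybrid sequence: (H1) replace $F_k$ by a random function $R$; (H2) replace $D$ by a uniform vector using double obliviousness; (H3) observe that $R(x)+\mathsf{Decode}(D,x)$ is then a random function programmed on $L$, exactly $F'$. Since the shares are uniform given the sum, the result is the ideal distribution.
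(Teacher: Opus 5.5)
\textbf{The gap is in your corrupted-sender case.} Your simulator samples its own key $k$, computes $D$ honestly, and puts $k$ and $D$ into the sender's view. You then claim the ideal $F'$ is indistinguishable from $F_k+\mathsf{Decode}(D,\cdot)$ "by the same PRF/independence argument". That step fails. $F_k(x)+\mathsf{Decode}(D,x)$ is a deterministic function of the simulated view. In the ideal world, however, $F'(x)$ at an unprogrammed $x$ is a fresh uniform value independent of that view.

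A distinguisher can therefore query $\OOO^{F'}$ at a single unprogrammed point and compare the answer with $F_k(x)+\mathsf{Decode}(D,x)$. Even without the oracle, it can compare $y_i^{\SSS}+y_i^{\RRR}$ in the joint output against $F_k(x_i)+\mathsf{Decode}(D,x_i)$ for an unprogrammed receiver input $x_i$. Either test separates the two worlds with advantage about $1-1/|\FF|$.

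Neither tool you cite helps here. PRF security is unavailable because the adversary holds $k$. OKVS independence is a statement about $\mathsf{Decode}$ at an unencoded key over the randomness of the encoding, not conditioned on a $D$ the adversary computed itself. In fact, no simulator that commits to a succinct $(k,D)$ before seeing $F'$ can agree with a truly random $F'$ on every point a distinguisher may query.

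\textbf{The missing idea} is to derive the sender's function from $F'$, rather than the other way around. This is what the paper's simulator does:
\begin{enumerate}
\item It samples only the values $F(q_j)$ uniformly.
\item It sets $D\leftarrow\mathsf{Encode}(\{(q_j,z_j-F(q_j))\}_{j\in[m]})$.
\item It lazily defines $F(x):=F'(x)-\mathsf{Decode}(D,x)$ for each queried $x\notin\{q_j\}$.
\item It places the oracle $\OOO^{F}$ (not a key) in the view, together with $f_i^{\SSS}:=y_i^{\SSS}$.
\end{enumerate}
Then $F'=F+\mathsf{Decode}(D,\cdot)$ holds by construction on programmed and unprogrammed points alike. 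Moreover, $F$ is uniform off $\{q_j\}$, which matches the real distribution once the sender's PRF is modelled as a function it accesses as an oracle. If you want to keep a concrete key in the sender's view, you would instead have to weaken the functionality so that $F'$ is the key-derived function rather than a truly random one.

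Your correctness paragraph and your corrupted-receiver simulator are correct and coincide with the paper's. That simulator uses a uniform $D$ and $f_i^{\RRR}:=y_i^{\RRR}-\mathsf{Decode}(D,x_i)$, justified by a PRF hybrid followed by double obliviousness.
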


\section{Efficient Fuzzy Mapping}
\label{sec: Fuzzy Mapping}

\begin{figure}[t]
\begin{nfproc}{$\mathsf{LocalMap}$}

\noindent \textbf{Parameters:} Distance threshold is \( \delta \). Prefix parameter $\hat{\mu}=2+\log\delta$.

\noindent \textbf{Input:} A set \( Q=\{q_j\}_{j\in [m]} \in \UU^{d\times m} \).

\noindent \textbf{Protocol}:

 \begin{enumerate}
    \item Initialize $\mathsf{List} := \emptyset$ and $\mathsf{interval}_k := \emptyset$ for $k \in [d]$.
        \item For each $j \in [m]$ and each $k \in [d]$:
        \begin{itemize}
            \item  Sample $r_{j,k}$ and define $U_{j,k} := [q_{j,k}-\delta, q_{j,k}+\delta]$.

            \item For each $(U,r) \in \mathsf{interval}_k$ if $U_{j,k} \cap U \neq \emptyset$,  remove $(U,r)$ from $\mathsf{interval}_k$ and update $U_{j,k}=U_{j,k}\cup U$. 
            \item Set $\mathsf{interval}_k := \mathsf{interval}_k \cup \{(U_{j,k}, r_{j,k})\}$.
        \end{itemize}

    \item For each $j \in [m]$, set $\mathsf{pid}_{q_j} := \sum_{k=1}^d r_{k}$, where $r_{k}$ satisfies that $q_{j,k} \in U_{k}$ and $(U_{k}, r_{k}) \in \mathsf{interval}_k$.
    
    \item For each $k \in [d]$ and each $(U, r) \in \mathsf{interval}_k$:
        \begin{itemize}
            \item {Without prefix optimization}: For each $x \in U$, set  
            $\mathsf{List} := \mathsf{List} \cup \{(k\Vert x, r)\}$.
            \item {With prefix optimization}: Partition interval $U$ into consecutive, disjoint sub-intervals $\{U_j\}_{j \in [\epsilon]}$, such that $|U_j| = 2\delta+1$ for $j \in [\epsilon-1]$ and $|U_\epsilon| \le 2\delta+1$. For each $j \in [\epsilon]$, compute $\{x_{j,h}\}_{h \in [\hat{\mu}]} := \mathsf{Decompose}(U_j)$ and set  
            $\mathsf{List} := \mathsf{List} \cup \{(k\Vert x_{j,h}, 0\Vert r)\}_{h \in [\hat{\mu}]}$.

        \end{itemize}

    \item Pad $\mathsf{List}$ with dummy items so that its size equals
    \begin{itemize}
        \item {Without prefix optimization}: $m d (2\delta + 1)$.
        \item {With prefix optimization}: $m d (\log \delta + 2)$.
    \end{itemize}

    \item Output $\{ \mathsf{pid}_{q_j}\}_{j\in [m]}$ and $\mathsf{List}$.
\end{enumerate}

\end{nfproc}
\vspace{0.5em}
\caption{Procedure of local mapping and its variant with prefix optimization.}
\label{Proc:local-mapping-prefix}
\end{figure}

Fuzzy mapping (FMap), introduced by Gao et al.~\cite{gao2025efficient}, enables two parties to assign identifiers to their respective set elements. If an element from the sender \SSS and an element from the receiver \RRR are sufficiently close, they will be mapped to the same identifier. We recall the formal definition from Gao et al. \cite{gao2025efficient} as follows.

\begin{definition}[Fuzzy mapping~\cite{gao2025efficient}]
\label{def: fuzzy mapping}
    A two-party protocol \Prot[FMap], where \sender inputs \( Q=\{q_j\}_{j\in [m]} \in \UU^{d\times m} \) and learns $\{\mathsf{ID}_{q_j}\}_{j \in [m]} \in \FF^m$ and \receiver inputs \( W=\{w_i\}_{i\in [n]} \in \UU^{d\times n} \) and learns $\{\mathsf{ID}_{w_i}\}_{i \in [n]} \in \FF^n$, is a secure fuzzy mapping protocol for distance metric $\mathsf{dist}$ and threshold $\delta$ against semi-honest adversaries, if and only if it satisfies:

    \begin{enumerate}
        \item \textbf{Correctness:} For any \( w \in W \) and \( q \in Q \), if \( \mathsf{dist}(w, q) \leq \delta \), then \( \mathsf{ID}_{q} = \mathsf{ID}_{w} \).

        \item \textbf{Distinctiveness:} For any \( w_i,w_{i^\prime} \in W\) and $i\neq {i^\prime}$, $\mathsf{Pr}[\mathsf{ID}_{w_i} = \mathsf{ID}_{w_{i^\prime}}] \leq \mathsf{negl}(\lambda)$. 

        \item \textbf{Security:} Considering a corrupted sender \SSS, for any $Q \in \UU^{d \times m}$ and any $W, W^\prime \in \UU^{d \times n}$, it holds that $ \mathsf{view}^\Pi_{\SSS}(Q, W) \approx_c \mathsf{view}^\Pi_{\SSS}(Q, W^\prime)$. Similarly, considering a corrupted receiver \RRR, for any $W \in \UU^{d \times n}$ and any $Q, Q^\prime \in \UU^{d \times m}$, it holds that $\mathsf{view}^\Pi_{\RRR}(Q, W) \approx_c \mathsf{view}^\Pi_{\RRR}(Q^\prime, W)$.
    
    \end{enumerate}
\end{definition}

As we have presented our idea in Section~\ref{subsec: Fuzzy mapping from so-OPPRF and si-OPRF}, here we give the construction of the fuzzy mapping protocol for $L_\infty$ distance in Figure~\ref{Prot:fuzzy-map} and its sub-procedure local mapping in Figure~\ref{Proc:local-mapping-prefix}.
We note that for any $w, q$, the fact that $\mathsf{dist}_\infty(q, w) \le \mathsf{dist}_p(q, w)$ holds. Therefore, we will use fuzzy mapping for $L_\infty$ distance in fuzzy PSI protocols for both $L_\infty$ and $L_p$ distance, since fuzzy mapping can tolerate false positives.

\begin{figure}[t]
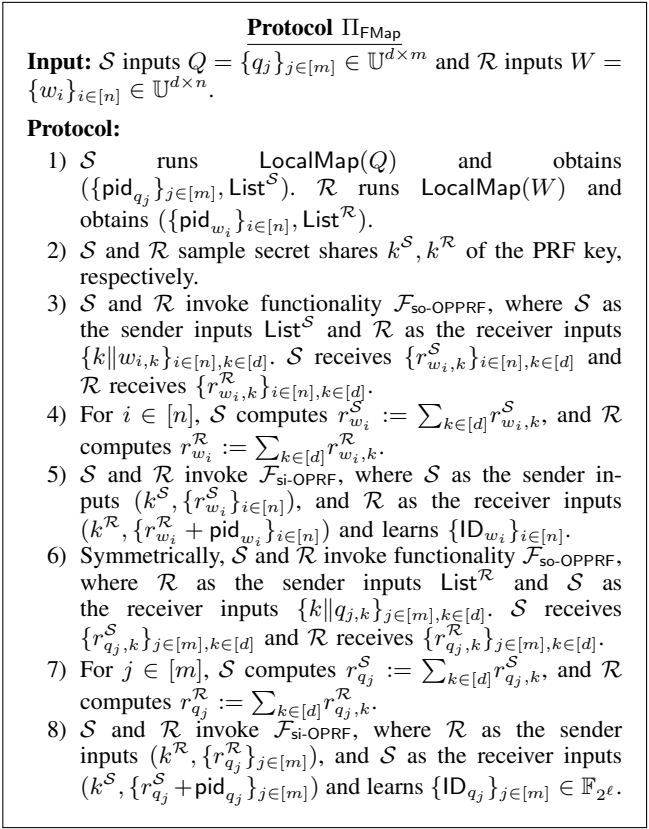

    \begin{nfprot}{\Prot[FMap]}

        \noindent \textbf{Input:} \sender inputs \( Q=\{q_j\}_{j\in [m]} \in \UU^{d\times m} \) and \receiver inputs \( W=\{w_i\}_{i\in [n]} \in \UU^{d\times n} \).

        \vspace{0.5em}
        \noindent \textbf{Protocol:}

        \begin{enumerate}

            \item \SSS runs $\mathsf{LocalMap}(Q)$ and obtains $(\{\mathsf{pid}_{q_j}\}_{j \in [m]}, \mathsf{List}^\SSS)$. \RRR runs $\mathsf{LocalMap}(W)$ and obtains $(\{\mathsf{pid}_{w_i}\}_{i \in [n]}, \mathsf{List}^\RRR)$.

            \item \SSS and \RRR sample secret shares $k^\SSS, k^\RRR$ of the PRF key, respectively.

             \item \sender and \receiver invoke functionality \Func[so\text{-}OPPRF], where \SSS as the sender inputs $\mathsf{List}^\SSS$ and \RRR as the receiver inputs $\{k\Vert w_{i,k}\}_{i\in [n],k\in [d]}$. \SSS receives $\{ r^{\SSS}_{w_i,k} \}_{i\in[n],k\in [d]}$ and \RRR receives $\{ r^{\RRR}_{w_i,k} \}_{i\in[n],k\in [d]}$.

            \item For \( i \in [n] \), \SSS computes $r^{\SSS}_{w_i} := {\sum}_{k\in [d]} r^{\SSS}_{w_i,k}$, and \RRR computes $r^{\RRR}_{w_i} := {\sum}_{k\in [d]} r^{\RRR}_{w_i,k}$.

            \item \SSS and \RRR invoke \Func[si\text{-}OPRF], where \SSS as the sender inputs $(k^\SSS, \{r^{\SSS}_{w_i}\}_{i \in [n]})$, and \RRR as the receiver inputs $(k^\RRR, \{r^{\RRR}_{w_i} + \mathsf{pid}_{w_i}\}_{i \in [n]})$ and learns $\{\mathsf{ID}_{w_i}\}_{i \in [n]}$.

            \item Symmetrically, \sender and \receiver invoke functionality \Func[so\text{-}OPPRF], where \RRR as the sender inputs $\mathsf{List}^\RRR$ and \SSS as the receiver inputs $\{k\Vert q_{j,k}\}_{j\in [m],k\in [d]}$. \SSS receives $\{ r^{\SSS}_{q_j,k} \}_{j\in [m],k\in [d]}$ and \RRR receives $\{ r^{\RRR}_{q_j,k} \}_{j\in [m],k\in [d]}$.

            \item For \( j \in [m] \), \SSS computes $r^{\SSS}_{q_j} := {\sum}_{k\in [d]} r^{\SSS}_{q_j,k}$, and \RRR computes $r^{\RRR}_{q_j} := {\sum}_{k\in [d]} r^{\RRR}_{q_j,k}$.

            \item \SSS and \RRR invoke \Func[si\text{-}OPRF], where \RRR as the sender inputs $(k^\RRR, \{r^{\RRR}_{q_j}\}_{j \in [m]})$, and \SSS as the receiver inputs $(k^\SSS, \{r^{\SSS}_{q_j} + \mathsf{pid}_{q_j}\}_{j \in [m]})$ and learns $\{\mathsf{ID}_{q_j}\}_{j \in [m]} \in \FF_{2^\ell}$.

        \end{enumerate}
    \end{nfprot}
    \vspace{0.5em}
    \caption{Protocol of fuzzy mapping for $L_\infty$ distance.}
    \label{Prot:fuzzy-map}
\end{figure}

\begin{figure}[t]
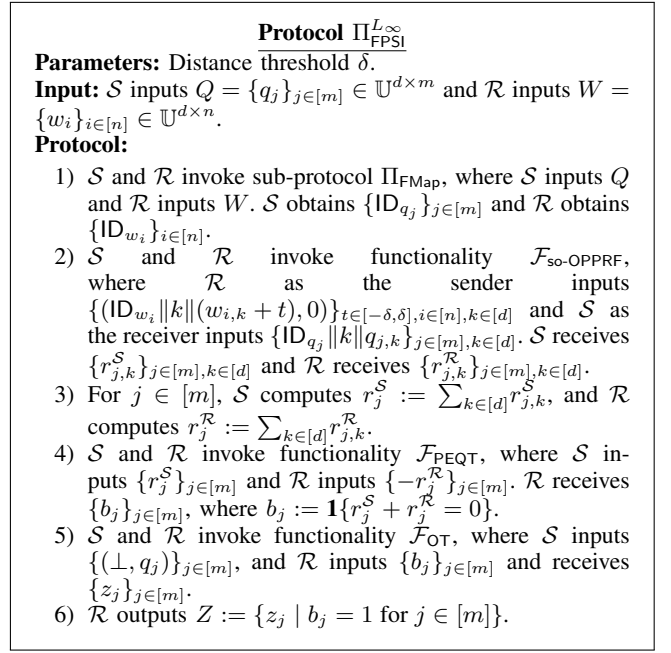

    \begin{nfprot}{\ensuremath{\Pi_\mathsf{FPSI}^{L_\infty}}\xspace}
        \noindent \textbf{Parameters:}  Distance threshold \( \delta \). 

        \noindent \textbf{Input:} \SSS inputs \( Q=\{q_j\}_{j\in [m]} \in \mathbb{U}^{d\times m} \) and \RRR inputs \( W=\{w_i\}_{i\in [n]} \in \mathbb{U}^{d\times n} \).

        \noindent \textbf{Protocol:}

        \begin{enumerate}
            \item \sender and \receiver invoke sub-protocol \Prot[FMap], where \sender inputs $Q$ and \receiver inputs $W$. \sender obtains  $\{\mathsf{ID}_{q_j} \}_{j\in [m]}$ and \receiver obtains $\{\mathsf{ID}_{w_i} \}_{i\in [n]}$.

            \item \sender and \receiver invoke functionality \Func[so\text{-}OPPRF], where \RRR as the sender inputs $\{\left( \mathsf{ID}_{w_i} \Vert k\Vert (w_{i,k}+t), 0 \right)\}_{t\in [-\delta,\delta], i \in [n],k\in [d]}$ and \SSS as the receiver inputs $\{ \mathsf{ID}_{q_j} \Vert k \Vert q_{j,k} \}_{j \in [m],k \in [d]}$. \SSS receives $\{ r^{\SSS}_{j,k} \}_{j\in [m],k\in [d]}$ and \RRR receives $\{ r^{\RRR}_{j,k} \}_{j\in[m],k\in [d]}$.

            \item For \( j \in [m] \), \SSS computes $r^{\SSS}_{j} :={\sum}_{k\in [d]} r^{\SSS}_{j,k}$, and \RRR computes $r^{\RRR}_{j} := {\sum}_{k\in [d]} r^{\RRR}_{j,k}$.

            \item \SSS and \RRR invoke functionality \Func[PEQT], where \SSS inputs $\{r^{\SSS}_{j}\}_{j \in [m]}$ and \RRR inputs $\{-r^{\RRR}_{j}\}_{j \in [m]}$. \RRR receives $\{b_j\}_{j \in [m]}$, where $b_j := \textbf{1}\{r^{\SSS}_{j}+r^{\RRR}_{j}=0\}$.
            
            \item \SSS and \RRR invoke functionality \Func[OT], where \SSS inputs $\{(\bot ,q_j)\}_{j \in [m]}$, and \RRR inputs $\{b_j\}_{j \in [m]}$ and receives $\{z_j\}_{j \in [m]}$.
            
            \item \RRR outputs $Z := \{z_j \mid b_j = 1\;\text{for}\; j\in [m]\}$.
        \end{enumerate}
    \end{nfprot}
    \vspace{0.5em}
    \caption{Protocol of fuzzy PSI for $L_\infty$ distance.}
    \label{Prot:fpsi1}
\end{figure}

We show that the above protocol satisfies the fuzzy mapping definition in Theorem \ref{thm: fuzzy mapping}.

\begin{theorem}
\label{thm: fuzzy mapping}
    The \Prot[FMap] protocol in Figure~\ref{Prot:fuzzy-map} satisfies the correctness, distinctiveness, and security properties defined in Definition \ref{def: fuzzy mapping} for $L_\infty$ distance, if both parties' sets satisfy the disjoint projection assumption.
\end{theorem}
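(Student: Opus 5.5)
We prove the three properties in turn, working in the $(\Func[so\text{-}OPPRF], \Func[si\text{-}OPRF])$-hybrid model and treating the PRF $F$ used inside \Func[si\text{-}OPRF] as a random function under the key $k = k^\SSS + k^\RRR$.

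\textbf{Correctness.} Fix $q \in Q$ and $w \in W$ with $\mathsf{dist}_\infty(q,w) \le \delta$. Then $|q_k - w_k| \le \delta$ for every $k \in [d]$.

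First consider the sender's map. $\mathsf{LocalMap}(Q)$ places $q_k$ inside some merged interval $(U_k, r_k) \in \mathsf{interval}_k$. Merging only enlarges intervals, so $U_k \supseteq [q_k - \delta, q_k + \delta] \ni w_k$. Hence $\mathsf{List}^\SSS$ contains $(k\Vert w_k, r_k)$, and $\mathsf{pid}_q = \sum_k r_k$.

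The intervals in $\mathsf{interval}_k$ are pairwise disjoint after merging, so keys in $\mathsf{List}^\SSS$ are distinct and \Func[so\text{-}OPPRF] is well defined. It therefore outputs shares of $r_k$ on query $k\Vert w_k$, giving $r^\SSS_w + r^\RRR_w = \mathsf{pid}_q$. We write $H_Q(w) := \mathsf{pid}_q$.

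Symmetrically, the receiver's map gives $r^\SSS_q + r^\RRR_q = \mathsf{pid}_w =: H_W(q)$, since $q_k \in U_k^W \supseteq [w_k - \delta, w_k + \delta]$.

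The two si-OPRF calls use the same key $k^\SSS + k^\RRR$. The receiver's call evaluates $F$ at $H_Q(w) + \mathsf{pid}_w = \mathsf{pid}_q + \mathsf{pid}_w$, and the sender's call evaluates $F$ at $\mathsf{pid}_q + H_W(q) = \mathsf{pid}_q + \mathsf{pid}_w$. The inputs coincide, so $\mathsf{ID}_q = \mathsf{ID}_w$.

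\textbf{Distinctiveness.} This is where the disjoint projection assumption on $W$ enters, and it is the step needing most care. For $i \ne i'$, Definition~\ref{def:assumption} gives a coordinate $k^*$ in which $[w_{i,k^*} \pm \delta]$ is disjoint from every other $[w_{i',k^*} \pm \delta]$. We claim $w_i$ then sits alone in its merged interval in $\mathsf{interval}^W_{k^*}$.

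The subtlety is transitive merging. The disjointness is with \emph{all} other elements' intervals, so no chain of overlaps can ever touch $U_{i,k^*}$. Consequently the random value $r_{i,k^*}$ used in $\mathsf{pid}_{w_i}$ appears in no other $\mathsf{pid}_{w_{i'}}$.

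The si-OPRF input for $w_i$ is $\mathsf{pid}_{w_i} + H_Q(w_i)$. Here $H_Q(w_i)$ is either a sum of the sender's programmed values or pseudorandom, and in either case it is independent of $r_{i,k^*}$. So the input contains a fresh uniform summand, and the inputs for $w_i$ and $w_{i'}$ collide with probability at most $1/|\FF|$. Given distinct inputs, outputs of the (pseudo)random function collide with probability at most $1/|\FF|$ plus the PRF advantage. Taking $|\FF| \ge 2^\lambda$, the total collision probability is negligible.

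\textbf{Security.} Consider a corrupted sender. Its view consists of $Q$, its own randomness $k^\SSS$, its so-OPPRF output shares $\{r^\SSS_{w_i,k}\}$ and $\{r^\SSS_{q_j,k}\}$, and its si-OPRF output $\{\mathsf{ID}_{q_j}\}$. In the receiver-as-sender so-OPPRF call it additionally receives the oracle $\OOO^{F'}$ only when it is the programming party, which is not the case there; when it does program, $F'$ depends only on $\mathsf{List}^\SSS$.

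By \Func[so\text{-}OPPRF], every share the sender receives is a fresh uniform element of $\FF$, independent of $W$. Its outputs are $\mathsf{ID}_{q_j} = F_k(\mathsf{pid}_{q_j} + H_W(q_j))$ with $k$ uniform and with $k^\RRR$ hidden from it. We first replace $F_k$ by a random function (PRF security). The IDs then depend on $W$ only through the collision pattern of the inputs $\{\mathsf{pid}_{q_j} + H_W(q_j)\}_j$.

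By the sender's own disjoint projection, each $\mathsf{pid}_{q_j}$ contains a private fresh $r_{j,k^*}$. So all these inputs are distinct with overwhelming probability, and the IDs are i.i.d.\ uniform regardless of $W$. A simulator outputs uniform shares and uniform IDs, and a hybrid argument yields $\mathsf{view}_\SSS(Q,W) \approx_c \mathsf{view}_\SSS(Q,W')$.

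The corrupted-receiver case is identical by symmetry of the protocol. The receiver's IDs are uniform and its so-OPPRF shares are uniform, even though $H_Q$ may merge nearby $w_i$'s: the fresh $\mathsf{pid}_{w_i}$ masks any such merging.
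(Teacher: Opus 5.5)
Your proposal is correct and follows the same structure as the paper's proof. Correctness comes from the merged $\mathsf{LocalMap}$ intervals covering each radius-$\delta$ neighborhood together with the shared si-OPRF key. Distinctiveness comes from the unique fresh summand $r_{i,k^*}$ guaranteed by the disjoint projection assumption. Security comes from the uniform so-OPPRF shares plus uniformity of the IDs, which follows from the corrupted party's own disjoint projection.

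Your version is somewhat more careful than the paper's in three places: you check that keys in $\mathsf{List}^\SSS$ are distinct, you handle transitive merging, and you show the PRF inputs are distinct before invoking pseudorandomness. The paper instead only asserts that each $\mathsf{ID}_{w_i}$ is individually uniform and takes a union bound. The underlying argument is the same.
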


\begin{proof}
\label{proof: fuzzy mapping}
\textbf{Correctness.}  
For any \( w_i \in W \) and \( q_j \in Q \), if \( \mathsf{dist}(w_i, q_j) \leq \delta \), then for any $k \in [d]$, $\abs{w_{i,k} - q_{j,k}} \leq \delta$ always holds. Therefore, all these $w_{i,k}$ are assigned in $\mathsf{List}^\SSS$ and $r^{\SSS}_{w_{i, k}} + r^{\RRR}_{w_{i, k}} = \mathsf{List}^\SSS[k\|w_{i, k}]$ according to the functionality \Func[so\text{-}OPPRF].
Moreover, the $\mathsf{LocalMap}$ procedure ensures that the $2\delta+1$ points on dimension $k$ centered at $q_{j,k}$ all have the same assignment, hence $\sum_{k \in [d]} (r^{\SSS}_{w_{i, k}} + r^{\RRR}_{w_{i, k}}) = \mathsf{pid}_{q_j}$ holds.
Similarly, it holds that $\sum_{k \in [d]} (r^{\SSS}_{q_{j, k}} + r^{\RRR}_{q_{j, k}}) = \mathsf{pid}_{w_i}$.
The identifiers are computed as follows $\mathsf{ID}_{q_j} := \mathsf{PRF}_k(\sum_{k \in [d]} (r^{\SSS}_{q_{j, k}} + r^{\RRR}_{q_{j, k}}) + \mathsf{pid}_{q_j})$ and $\mathsf{ID}_{w_i} := \mathsf{PRF}_k(\sum_{k \in [d]} (r^{\SSS}_{w_{i, k}} + r^{\RRR}_{w_{i, k}}) + \mathsf{pid}_{w_i})$.
Since the PRF key is the same for both PRF evaluations on $q_j$ and $w_i$, then \( \mathsf{ID}_{q_j} = \mathsf{ID}_{w_i} \) when \( \mathsf{dist}(w_i, q_j) \leq \delta \).

\textbf{Distinctiveness.} Since the \RRR's set $W$ satisfies the disjoint projection assumption, without loss of generality, for each $w_i$, assume the interval of $[w_{i, k_i}-\delta, w_{i, k_i}+\delta]$ on dimension $k_i$ is disjoint from the length-($2\delta+1$) intervals of other elements of $W$. Thus, the assignment $\mathsf{List}^\RRR[k_i\|w_{i, k_i}]$ is a uniformly random value and is independent of any other assignments. 
Then, we have that $\mathsf{ID}_{w_i} := \mathsf{PRF}(k, \mathsf{List}^\RRR[k_i\|w_{i, k_i}] + \ldots)$.
According to the functionality of si-OPRF, $\mathsf{ID}_{w_i}$ is computationally indistinguishable from a uniformly random value.
Therefore, a union bound shows that the probability of that there exists $i, i^\prime \in [n]$ such that $i \neq i^\prime$ and $\mathsf{ID}_{w_i} = \mathsf{ID}_{w_{i^\prime}} \in \FF$ is at most $n^2/|\FF|$.
With $|\FF| \geq n^2 \cdot 2^\lambda$, the probability $n^2/|\FF| \leq 1/2^\lambda$ is negligible. 

\textbf{Security.}
Since the protocol is symmetric, we only consider the corrupted sender \SSS.
Let $\mathsf{ID}_Q(W)$ denote the \SSS's output from the real-world protocol where \SSS inputs $Q$ and \RRR inputs $W$.
We show the simulator $\Sim[\SSS]^{\mathsf{FMap}}(Q, \mathsf{ID}_Q(W))$:
    \begin{enumerate}
        \item $\Sim[\SSS]^{\mathsf{FMap}}$ runs $\mathsf{LocalMap}(Q)$ and records the sampled randomness. 

        \item $\Sim[\SSS]^{\mathsf{FMap}}$ randomly samples the PRF key share $k^\SSS$.

        \item $\Sim[\SSS]^{\mathsf{FMap}}$ randomly samples $\{ r^{\SSS}_{w_i,k} \}_{i\in[n],k\in [d]}$.

        \item $\Sim[\SSS]^{\mathsf{FMap}}$ randomly samples $\{ r^{\SSS}_{q_j,k} \}_{j\in [m],k\in [d]}$.

        \item $\Sim[\SSS]^{\mathsf{FMap}}$ appends the above values and $(Q, \mathsf{ID}_Q(W))$ into the simulated view.
    \end{enumerate}
    
We show that the output by $\Sim[\SSS]^{\mathsf{FMap}}$ is indistinguishable from the real protocol. The only difference is $\{ r^{\SSS}_{w_i,k} \}_{i\in[n],k\in [d]}$ and $\{ r^{\SSS}_{q_j,k} \}_{j\in [m],k\in [d]}$. They are random secret shares in the real protocol according to the so-OPPRF functionality, while they are randomly sampled in the simulated view with the same distribution.
Therefore, for any $Q, W$, it holds that $ \mathsf{view}^\Pi_{\SSS}(Q, W) \approx_c \Sim[\SSS]^{\mathsf{FMap}}(Q, \mathsf{ID}_Q(W))$.

Moreover, since the \SSS's set $Q$ satisfies the disjoint projection assumption, as shown in the above analysis of distinctiveness, $\mathsf{ID}_{Q}$ is computationally indistinguishable from uniformly random values for any $W$.
Then, for any $Q, W$, it holds that $\Sim[\SSS]^{\mathsf{FMap}}(Q, \mathsf{ID}_Q(W)) \approx_c \Sim[\SSS]^{\mathsf{FMap}}(Q, R \from \FF^m)$.
Therefore, we have $\mathsf{view}^\Pi_{\SSS}(Q, W) \approx_c \Sim[\SSS]^{\mathsf{FMap}}(Q, R \from \FF^m) \approx_c \mathsf{view}^\Pi_{\SSS}(Q, W^\prime)$.
This completes the proof.

\end{proof}

\section{Fuzzy PSI}
\label{sec: Fuzzy-PSI}

\subsection{Fuzzy PSI for $L_\infty$ Distance}
\label{sec: Fuzzy-PSI l infty}

We have presented our idea in Section~\ref{subsec: Fuzzy PSI from so-OPPRF} and the detailed fuzzy PSI protocol for $L_\infty$ distance is illustrated in Figure~\ref{Prot:fpsi1}. 
We show the protocol's correctness as follows.

\begin{proof}
\noindent \textbf{Correctness.}
For each $q_j \in Q$, if there exists $w_i \in W$ such that $\mathsf{dist}(q_j, w_i) \leq \delta$, the correctness of the fuzzy mapping in Theorem~\ref{thm: fuzzy mapping} guarantees that $\mathsf{ID}_{q_j} = \mathsf{ID}_{w_i}$. Consequently, by the correctness of the functionalities \Func[so\text{-}OPPRF], \Func[PEQT], and \Func[OT], the receiver correctly obtains $b_j = 1$ and the corresponding element $q_j$.

On the other hand, for each $q_j \in Q$, if $\mathsf{dist}(q_j, w_i) > \delta$ for all $w_i \in W$, two cases arise due to the false positives introduced by the fuzzy mapping.
(1) The first case is $\mathsf{ID}_{q_j} \neq \mathsf{ID}_{w_i}$ for any $w_i \in W$. According to the correctness of functionality \Func[so\text{-}OPPRF], the secret-shared outputs $r_{j,k} \in \FF$ for $k\in [d]$ are uniformly random.
(2) The second case is $\mathsf{ID}_{q_j} = \mathsf{ID}_{w_i}$ for some $w_i \in W$ and there exists some $k^*$ such that $\abs{q_{j, k^*} - w_{i, k^*}} > \delta$. According to the correctness of functionality \Func[so\text{-}OPPRF], the secret-shared output $r_{j,k^*} \in \FF$ is uniformly random.
In both cases, $r_{j} :={\sum}_{k\in [d]} r_{j,k} \in \FF$ is uniformly random.
According to the correctness of functionality \Func[PEQT], a union bound shows that the probability of there existing $j \in [m]$ such that $b_j = 1$ is at most $m/|\FF|$.
By setting $|\FF| \geq m \cdot 2^\lambda$, the probability $m/|\FF| \leq 1/2^\lambda$ is negligible. 
The correctness of \Func[OT] ensures the receiver learns nothing about $q_j$.
\end{proof}

We show the protocol's security in Theorem \ref{thm: Proof of fuzzy PSI L-inf} and the security proof is deferred to Appendix \ref{subsec: Proof of fuzzy PSI L-inf}.

\begin{theorem}
\label{thm: Proof of fuzzy PSI L-inf}
    The protocol \ensuremath{\Pi_\mathsf{FPSI}^{L_\infty}} in Figure \ref{Prot:fpsi1} realizes the functionality \Func[FPSI] for $L_\infty$ distance in Figure \ref{Func:FPSI} against semi-honest adversaries in the $(\Func[so\text{-}OPPRF], \Func[PEQT], \Func[OT])$-hybrid model.
\end{theorem}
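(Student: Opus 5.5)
We prove Theorem~\ref{thm: Proof of fuzzy PSI L-inf} by constructing, for each corrupted party, a simulator in the $(\Func[so\text{-}OPPRF], \Func[PEQT], \Func[OT])$-hybrid model. Correctness was established just above, so only view indistinguishability remains. The fuzzy mapping step is not a hybrid functionality; it is the concrete protocol \Prot[FMap]. We therefore handle it through the security property of Definition~\ref{def: fuzzy mapping}, which Theorem~\ref{thm: fuzzy mapping} guarantees under the disjoint projection assumption. That property says the corrupted party's view of \Prot[FMap], including its output IDs, is computationally independent of the honest party's input.

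\textbf{Corrupted sender.} The sender's ideal output is empty. The simulator $\Sim[\SSS](Q)$ will:
\begin{itemize}
\item run \Prot[FMap] in its head with $Q$ and an arbitrary fixed receiver set $W'$ of size $n$ satisfying the assumption, and use the resulting sender view and $\{\mathsf{ID}_{q_j}\}$;
\item sample $\{r^\SSS_{j,k}\}$ uniformly from $\FF$ as the so-OPPRF outputs;
\item include nothing further, since \Func[PEQT] and \Func[OT] give the sender no output.
\end{itemize}
We then argue through a short hybrid sequence. First, replace the real \Prot[FMap] view on $(Q,W)$ by the view on $(Q,W')$; this change is indistinguishable by the fuzzy mapping security property. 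Second, observe that the so-OPPRF shares handed to the sender are uniformly random and independent of everything else, because \Func[so\text{-}OPPRF] samples them as fresh additive shares. These shares stay independent even of the sender's later use of them, since the receiver's shares remain hidden. The sender's view is then identically distributed to the simulation.

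\textbf{Corrupted receiver.} The simulator $\Sim[\RRR](W, Z)$ will:
\begin{itemize}
\item simulate the \Prot[FMap] view and $\{\mathsf{ID}_{w_i}\}$ using a dummy sender set $Q'$;
\item sample the receiver's so-OPPRF shares $\{r^\RRR_{j,k}\}$ uniformly;
\item answer the \Func[PEQT] call with bits $b_j$ set to $1$ on exactly $|Z|$ positions, which it fixes once and reuses below;
\item answer the \Func[OT] call with the $z_j\in Z$ at those positions and $\bot$ elsewhere.
\end{itemize}
The positions can be chosen uniformly at random, or simply as the first $|Z|$ indices. Real position $j$ is tied to $q_j$'s index in the sender's set, and that ordering is internal to the sender. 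We assume the sender randomly permutes $Q$ before the protocol, or equivalently that $Z$ is output as an unordered set; either makes the index pattern carry no information.

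The indistinguishability argument for the receiver proceeds in three steps.
\begin{itemize}
\item The receiver shares are uniform by \Func[so\text{-}OPPRF], independent of the programmed values.
\item The real bits $b_j$ equal $\textbf{1}\{q_j\in Z\}$ except with probability $m/|\FF|\le 2^{-\lambda}$, by the correctness argument above.
\item Switching the \Prot[FMap] transcript from $Q$ to $Q'$ is justified by the fuzzy mapping security property.
\end{itemize}
The step I expect to be the main obstacle is this last switch. The \Prot[FMap] view is correlated with the later PEQT bits, because those bits depend on the real IDs through so-OPPRF programming. A black-box invocation of the \Prot[FMap] security property does not immediately cover this joint distribution. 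To handle it, I would order the hybrids so the correlation is broken first:
\begin{enumerate}
\item Replace $b_j$ by the ideal bits $\textbf{1}\{q_j\in Z\}$, which are computed from the inputs alone and not from the IDs; this costs statistical distance at most $2^{-\lambda}$.
\item Replace the receiver's so-OPPRF shares by fresh uniform values, which is an identical-distribution change.
\item Only then invoke the \Prot[FMap] security property, now in the presence of auxiliary information that does not depend on the FMap transcript.
\end{enumerate}

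If needed, I would strengthen the reduction by opening \Prot[FMap] into its hybrid form. There, both parties' views consist of uniform shares plus IDs, and the IDs are pseudorandom by the si-OPRF pseudorandomness and disjoint projection, as in the distinctiveness argument.
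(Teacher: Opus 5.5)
\textbf{Overall assessment.} Your proof follows the paper's route, apart from one omission in the receiver's simulated view, described below.
\begin{itemize}
\item The mapping phase is handled through the security of \Prot[FMap]. You run it in your head on a dummy input; the paper calls $\Sim[\SSS]^{\mathsf{FMap}}$ and $\Sim[\RRR]^{\mathsf{FMap}}$ on uniformly random IDs. Both rest on Theorem~\ref{thm: fuzzy mapping}, and your in-head execution has the side benefit of reproducing every component of that sub-view.
\item The so-OPPRF shares are simulated as fresh uniform values.
\item The PEQT and OT outputs are built from $Z$ padded with $\bot$ at random positions, which is how the paper handles the index pattern.
\end{itemize}
Your corrupted-sender argument is complete. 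For the receiver, your hybrid ordering is more careful than the paper's. You first replace the real $b_j$ by $\textbf{1}\{q_j\in Z\}$ and only then switch the \Prot[FMap] transcript, whereas the paper simply asserts that the sub-simulation composes with the later steps.

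\textbf{The gap: the receiver's simulated view omits the so-OPPRF oracle.} In step~2 of $\Pi_{\mathsf{FPSI}}^{L_\infty}$ the receiver is the \emph{programming} party of \Func[so\text{-}OPPRF]. By Figure~\ref{Func:so-opprf}, that party receives not only its shares but also the oracle $\OOO^{F^\prime}$ to the random function programmed with $0$ on all $\mathsf{ID}_{w_i}\Vert k\Vert(w_{i,k}+t)$. You never simulate this oracle. Your justification (the shares are uniform and independent of the programmed values) does not cover it. The real bits are $b_j=\textbf{1}\{\sum_k F^\prime(\mathsf{ID}_{q_j}\Vert k\Vert q_{j,k})=0\}$, so the oracle the receiver holds is directly correlated with its PEQT output. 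The paper's simulator includes a random function programmed to $0$ at these points under the simulated IDs.

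\textbf{The repair fits your framework.}
\begin{itemize}
\item Add that oracle to the simulated view.
\item Your first hybrid is identical-until-bad: $F^\prime$ and everything else stay fixed, and only $b_j$ changes, on the event that it disagrees with $\textbf{1}\{q_j\in Z\}$. Its cost is therefore unaffected by the oracle.
\item After that hybrid, the oracle is a randomized function of $W$ and the receiver's \Prot[FMap] output IDs, so your final switch to $Q^\prime$ covers it.
\end{itemize}
The same omission recurs in your fallback description of the opened \Prot[FMap]. There, each party is the programming party in one of the two so-OPPRF calls, so each party's view contains an oracle and not just shares plus IDs.

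\textbf{A smaller imprecision.} Bounding that bad event for a far $q_j$ requires distinctiveness of the receiver's IDs, not just correctness. Otherwise $q_j$ could collect zero-valued points on different coordinates from two elements $w_i,w_{i^\prime}$ that share an ID. Distinctiveness holds only computationally, via pseudorandomness of the si-OPRF outputs. So the loss in your first hybrid is negligible, not a purely statistical $2^{-\lambda}$.
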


\subsection{Fuzzy PSI for $L_p$ Distance}

\begin{figure}[t]
    \begin{nfprot}{$\Pi_\mathsf{FPSI}^{L_p}$}
        \noindent \textbf{Parameters:} Distance threshold \( \delta \).

        \noindent \textbf{Input:} \SSS inputs \( Q=\{q_j\}_{j\in [m]} \in \mathbb{U}^{d\times m} \) and \RRR inputs \( W=\{w_i\}_{i\in [n]} \in \mathbb{U}^{d\times n} \).

        \noindent \textbf{Protocol:}

        \begin{enumerate}
             \item \sender and \receiver invoke sub-protocol \Prot[FMap], where \sender inputs $Q$ and \receiver inputs $W$. \sender obtains  $\{\mathsf{ID}_{q_j} \}_{j \in [m]}$ and \receiver obtains $\{\mathsf{ID}_{w_i} \}_{i\in [n]}$.

            \item \sender and \receiver invoke functionality \Func[so\text{-}OPPRF], where \RRR as the sender inputs $\{\left( \mathsf{ID}_{w_i} \Vert k\Vert (w_{i,k}+t), \abs{t}^p \right)\}_{t\in [-\delta,\delta], i \in [n],k\in [d]}$ and \SSS as the receiver inputs $\{ \mathsf{ID}_{q_j} \Vert k \Vert q_{j,k} \}_{j \in [m],k \in [d]}$. \SSS receives $\{ r^{\SSS}_{j,k} \}_{j\in [m],k\in [d]}$ and \RRR receives $\{ r^{\RRR}_{j,k} \}_{j\in[m],k\in [d]}$.

            \item (Optional) \SSS and \RRR invoke functionality \Func[B2A], where \SSS inputs $\{r^{\SSS}_{j,k}\}_{j \in [m], k \in [d]}$ and \RRR inputs $\{r^{\RRR}_{j,k}\}_{j \in [m], k \in [d]}$. \SSS receives $\{d^{\SSS}_{j,k}\}_{j \in [m], k \in [d]}$ and \RRR receives $\{d^{\RRR}_{j,k}\}_{j \in [m], k \in [d]}$.

            \item For \( j \in [m] \), \SSS computes $d^{\SSS}_{j} := {\sum}_{k\in [d]} d^{\SSS}_{j,k}$, and \RRR computes $d^{\RRR}_{j} := {\sum}_{k\in [d]} d^{\RRR}_{j,k}$.

            \item \SSS and \RRR invoke functionality \Func[Interval], where \SSS inputs $\{d^{\SSS}_{j}\}_{j \in [m]}$ and \RRR inputs $\{d^{\RRR}_{j}\}_{j \in [m]}$. \RRR receives $\{b_{j}\}_{j \in [m]}$, where $b_j := \textbf{1}\{d^{\SSS}_{j}+d^{\RRR}_{j} \le \delta^p\}$.
            
            \item \SSS and \RRR invoke functionality \Func[OT], where \SSS inputs $\{(\bot ,q_j)\}_{j \in [m]}$, and \RRR inputs $\{b_j\}_{j \in [m]}$ and receives $\{z_j\}_{j \in [m]}$.
            
            \item \RRR outputs $Z := \{z_j \mid b_j = 1\;\text{for}\; j\in [m]\}$.
        \end{enumerate}
    \end{nfprot}
    \vspace{0.5em}
    \caption{Protocol of fuzzy PSI for $L_p$ distance.}
    \label{Prot:fpsi2}
\end{figure}

We present the construction of fuzzy PSI for $L_p$ distance, which closely resembles that of $L_\infty$ distance.
Given the fact that $\mathsf{dist}_p(q, w) \le \delta$ implies $\mathsf{dist}_\infty(q, w) \le \delta$, this protocol reuses the fuzzy mapping protocol for $L_\infty$ distance to generate IDs, with the only modification in the refined filtering.

Specifically, two parties invoke functionality \Func[so\text{-}OPPRF], where the receiver sets the key-value pairs as $\{\left(\mathsf{ID}_{w_{i}} \Vert k \Vert (w_{i, k} + t),\abs{t}^p\right)\}_{i \in [n], k \in [d], t \in [-\delta, \delta]}$, and the sender prepares $\{ \mathsf{ID}_{q_j} \Vert k \Vert q_{j,k} \}_{j \in [m],k \in [d]}$.
Then, two parties get the secret sharing of the distance $| w_{i, k} - q_{j, k}|^p$ for each dimension $k \in [d]$ or a random value. 
The following process computes the distance $\mathsf{dist}_p(q_j, w_i)^p = \sum_{k \in [d]} \abs{w_{i, k} - q_{j, k}}^p$.
We note that the protocol includes a Boolean-to-arithmetic conversion (B2A) protocol, since the outputs of our so-OPPRF protocol reside in a binary field $\mathbb{F}_{2^\ell}$ that does not support computing arithmetic operations of the $L_p$ distance.
B2A converts the outputs of so-OPPRF into arithmetic shares, which are suitable for subsequent computation.

In addition, unlike the private equality test used in protocol  \ensuremath{\Pi_\mathsf{FPSI}^{L_\infty}}, here we invoke the private interval test protocol~\cite{chakraborti2023distance} to determine whether the distance falls below the threshold $\delta^p$. The detailed construction of the protocol is illustrated in Figure~\ref{Prot:fpsi2}. We show the correctness as follows.

\begin{figure}[t]
    \begin{nfprot}{\Prot[EQSel]}
    \noindent \textbf{Parameters:} Ideal functionality \Func[MUX] and \Func[ssPEQT].
    
    \noindent \textbf{Input:} \sender inputs $\{(e_i^\SSS, v_i^\SSS)\}_{i \in [h]}$ and \receiver inputs $\{(e_i^\RRR, v_i^\RRR)\}_{i \in [h]}$.
    
    \noindent \textbf{Protocol:}
    \begin{enumerate}
        \item \SSS and \RRR invoke \Func[ssPEQT], where \SSS inputs $\{e_i^\SSS\}_{i \in [h]}$ and \RRR inputs $\{e_i^\RRR\}_{i \in [h]}$. \SSS receives $\{b_i^\SSS\}_{i \in [h]} \in \bool^h$, and \RRR receives $\{b_i^\RRR\}_{i \in [h]} \in \bool^h$. 

        \item \SSS and \RRR invoke \Func[MUX], where \SSS inputs $\{(b_i^\SSS, v_i^\SSS)\}_{i \in [h]}$ and \RRR inputs $\{(b_i^\RRR, v_i^\RRR)\}_{i \in [h]}$. \SSS receives $\{t_i^\SSS\}_{i \in [h]}$ and \RRR receives $\{t_i^\RRR\}_{i \in [h]}$.

        \item \SSS computes $b^\SSS := \Xor_{i \in [h]} b_i^\SSS$, $t^\SSS := \sum_{i \in [h]} t_i^\SSS$ and \RRR computes $b^\RRR := \Xor_{i \in [h]} b_i^\RRR$, $t^\RRR := \sum_{i \in [h]} t_i^\RRR$.
        
        \item \SSS randomly samples $r^\SSS$ and \RRR randomly samples $r^\RRR$.
        
        \item \SSS and \RRR invoke \Func[MUX], where \SSS inputs $(b^\SSS, t^\SSS - r^\SSS)$ and \RRR inputs $(b^\RRR, t^\RRR - r^\RRR)$. \SSS receives $m^\SSS$ and \RRR receives $m^\RRR$.

        \item \SSS outputs $z^\SSS := m^\SSS + r^\SSS$. \RRR outputs $z^\RRR := m^\RRR + r^\RRR$.

    \end{enumerate}
    
    \end{nfprot}
    \vspace{0.5em}
    \caption{Protocol of random equality-conditional selection.}
    \label{Prot:select}
\end{figure}

\begin{figure}[t]
    \begin{nfprot}{\Prot[FMap\text{-}Prefix]}
        \noindent \textbf{Parameters:} Threshold $\delta$. Prefix parameter $\mu^\prime = 1+\log\delta$ and $\mu=2+\log\delta$.

        \noindent \textbf{Input:} \sender inputs \( Q=\{q_j\}_{j\in [m]} \in \UU^{d\times m} \) and \receiver inputs \( W=\{w_i\}_{i\in [n]} \in \UU^{d\times n} \).

        \vspace{0.5em}
        \noindent \textbf{Protocol:}

        \begin{enumerate}

            \item \SSS runs $\mathsf{LocalMap^{Prefix}}(Q)$ and obtains $(\{\mathsf{pid}_{q_j}\}_{j \in [m]}, \mathsf{List}^\SSS)$.  \RRR runs $\mathsf{LocalMap^{Prefix}}(W)$ and obtains $(\{\mathsf{pid}_{w_i}\}_{i \in [n]}, \mathsf{List}^\RRR)$.

            \item \SSS and \RRR sample secret shares $k^\SSS, k^\RRR$ of the PRF key, respectively.

            \item For $j \in [m], k \in [d]$, \sender computes $\{q_{j,k, h}\}_{h\in [\mu]} := \mathsf{AllPrefix}(q_{j,k}, \mu^\prime)$.
            \sender and \receiver invoke functionality \Func[so\text{-}OPPRF], where \RRR as the sender inputs $\mathsf{List}^\RRR$ and \SSS as the receiver inputs $\{k\Vert q_{j,k,h}\}_{j\in [m],k\in [d],h\in [\mu]}$. \SSS receives $\{ e^{\SSS}_{q_j, k, h} \Vert v^{\SSS}_{q_j, k, h} \}_{j\in [m],k\in [d],h\in [\mu]}$ and \RRR receives $\{ e^{\RRR}_{q_j, k, h} \Vert v^{\RRR}_{q_j, k, h} \}_{j\in [m],k\in [d],h\in [\mu]}$.

            \item For $j\in [m],k\in [d]$, \SSS and \RRR invoke sub-protocol $\Prot[EQSel]$, where \SSS inputs $\{(e^{\SSS}_{q_j, k, h}, v^{\SSS}_{q_j, k, h})\}_{h\in [\mu]}$ and receives $r^{\SSS}_{q_j, k}$, and \RRR inputs $\{(e^{\RRR}_{q_j, k, h},v^{\RRR}_{q_j, k, h})\}_{h\in [\mu]}$ and receives $r^{\RRR}_{q_j, k}$. 

            \item For \( j \in [m] \), \SSS computes $r^{\SSS}_{q_j} := {\sum}_{k\in [d]} r^{\SSS}_{q_j, k}$, and \RRR computes $r^{\RRR}_{q_j} := {\sum}_{k\in [d]} r^{\RRR}_{q_j, k}$.

            \item \SSS and \RRR invoke \Func[si\text{-}OPRF], where \RRR as the sender inputs $(k^\RRR, \{r^{\RRR}_{q_j}\}_{j \in [m]})$, and \SSS as the receiver inputs $(k^\SSS, \{r^{\SSS}_{q_j} + \mathsf{pid}_{q_j}\}_{j \in [m]})$ and learns $\{\mathsf{ID}_{q_j}\}_{j \in [m]}$.

            \item Symmetrically, for $i \in [n], k \in [d]$, \RRR computes $\{w_{i,k, h}\}_{h\in [\mu]} := \mathsf{AllPrefix}(w_{i,k}, \mu^\prime)$.
            \sender and \receiver invoke functionality \Func[so\text{-}OPPRF], where \SSS as the sender inputs $\mathsf{List}^\SSS$ and \RRR as the receiver inputs $\{k\Vert w_{i,k,h}\}_{i\in [n],k\in [d],h\in [\mu]}$. \SSS receives $\{ e^{\SSS}_{w_i, k, h} \Vert v^{\SSS}_{w_i, k, h} \}_{i\in [n],k\in [d],h\in [\mu]}$ and \RRR receives $\{ e^{\RRR}_{w_i, k, h} \Vert v^{\RRR}_{w_i, k, h} \}_{i\in [n],k\in [d],h\in [\mu]}$.

            \item For $i\in [n],k\in [d]$, \SSS and \RRR invoke sub-protocol $\Prot[EQSel]$, where \SSS inputs $\{(e^{\SSS}_{w_i, k, h}, v^{\SSS}_{w_i, k, h})\}_{h\in [\mu]}$ and receives $r^{\SSS}_{w_i, k}$, and \RRR inputs $\{(e^{\RRR}_{w_i, k, h},v^{\RRR}_{w_i, k, h})\}_{h\in [\mu]}$ and receives $r^{\RRR}_{w_i, k}$. 

            \item For \( i \in [n] \), \SSS computes $r^{\SSS}_{w_i} := {\sum}_{k\in [d]} r^{\SSS}_{w_i, k}$, and \RRR computes $r^{\RRR}_{w_i} := {\sum}_{k\in [d]} r^{\RRR}_{w_i, k}$.

            \item \SSS and \RRR invoke \Func[si\text{-}OPRF], where \SSS as the sender inputs $(k^\SSS, \{r^{\SSS}_{w_i}\}_{i \in [n]})$, and \RRR as the receiver inputs $(k^\RRR, \{r^{\RRR}_{w_i}  + \mathsf{pid}_{w_i}\}_{i \in [n]})$ and learns $\{\mathsf{ID}_{w_i}\}_{i \in [n]}$.

        \end{enumerate}
    \end{nfprot}
    \vspace{0.5em}
    \caption{Protocol of fuzzy mapping with prefix optimization.}
    \label{Prot:fuzzy map prefix}
\end{figure}

\begin{figure}[t]
    \begin{nfprot}{\ensuremath{\Pi_\mathsf{FPSI\text{-}Prefix}^{L_\infty}}\xspace}
        \noindent \textbf{Parameters:} Distance threshold \( \delta \). Prefix parameters $\mu^\prime = 1+\log\delta$ and $\mu = \hat{\mu} = 2+\log\delta$.

        \noindent \textbf{Input:} \SSS inputs \( Q=\{q_j\}_{j\in [m]} \in \mathbb{U}^{d\times m} \) and \RRR inputs \( W=\{w_i\}_{i\in [n]} \in \mathbb{U}^{d\times n} \).

        \vspace{0.5em}
        \noindent \textbf{Protocol:}

        \begin{enumerate}
            \item \sender and \receiver invoke sub-protocol \Prot[FMap\text{-}Prefix], where \sender inputs $Q$ and \receiver inputs $W$. \sender obtains  $\{\mathsf{ID}_{q_j} \}_{j\in [m]}$ and \receiver obtains $\{\mathsf{ID}_{w_i} \}_{i\in [n]}$.

            \item For $i\in [n],k\in [d]$, \RRR computes $\{w_{i,k,h}\}_{h\in [\hat{\mu}]} := \mathsf{Decompose}(w_{i,k}-\delta,w_{i,k}+\delta)$.
            \item For $j \in [m],k\in [d]$, \SSS computes $\{q_{j,k, h}\}_{h \in [\mu]} := \mathsf{AllPrefix}(q_{j,k},\mu^\prime)$.

            \item \sender and \receiver invoke functionality \Func[so\text{-}OPPRF], where \RRR inputs $\{\left( \mathsf{ID}_{w_i} \Vert k\Vert w_{i,k,h}, 0 \right)\}_{i \in [n],k\in [d], h \in [\hat{\mu}]}$ and \SSS inputs $\{ \mathsf{ID}_{q_j} \Vert k \Vert q_{j,k,h} \}_{j \in [m],k\in [d],h\in [\mu]}$. \SSS receives $\{ e^{\SSS}_{j,k,h} \Vert v^{\SSS}_{j,k,h} \}_{j\in [m],k\in [d],h\in [\mu]}$ and \RRR receives $\{ e^{\RRR}_{j,k,h} \Vert v^{\RRR}_{j,k,h} \}_{j\in[m],k\in [d],h\in [\mu]}$.

            \item For $j\in [m],k\in [d]$, \SSS and \RRR invoke sub-protocol $\Prot[EQSel]$, where \SSS inputs $\{(e^{\SSS}_{j, k, h}, v^{\SSS}_{j, k, h})\}_{h\in [\mu]}$ and receives $t^{\SSS}_{j, k}$, and \RRR inputs $\{(e^{\RRR}_{j, k, h},v^{\RRR}_{j, k, h})\}_{h\in [\mu]}$ and receives $t^{\RRR}_{j, k}$.

            \item For \( j \in [m] \), \SSS computes $t^{\SSS}_{j} := {\sum}_{k\in [d]} t^{\SSS}_{j,k}$, and \RRR computes $t^{\RRR}_{j} := {\sum}_{k\in [d]} t^{\RRR}_{j,k}$.

            \item \SSS and \RRR invoke functionality \Func[PEQT], where \SSS inputs $\{t^{\SSS}_{j}\}_{j \in [m]}$ and \RRR inputs $\{-t^{\RRR}_{j}\}_{j \in [m]}$. \RRR receives $\{b_j\}_{j \in [m]}$ where $b_j := \textbf{1}\{t^{\SSS}_{j}= - t^{\RRR}_{j}\}$.
            
            \item \SSS and \RRR invoke functionality \Func[OT], where \SSS inputs $\{(\bot ,q_j)\}_{j \in [m]}$, and \RRR inputs $\{b_j\}_{j \in [m]}$ and receives $\{z_j\}_{j \in [m]}$.
            
            \item \RRR outputs $Z := \{z_j \mid b_j = 1\;\text{for}\; j\in [m]\}$.
        \end{enumerate}
    \end{nfprot}
    \vspace{0.5em}
    \caption{Protocol of fuzzy PSI with prefix optimization for $L_\infty$ distance.}
    \label{Prot:fpsi1 prefix}
\end{figure}

\begin{proof}
\noindent\textbf{Correctness.} 
For each $q_j \in Q$, if there exists $w_i \in W$ such that $\mathsf{dist}_p(q_j, w_i) \leq \delta$, then by the correctness of fuzzy mapping for $L_\infty$ distance in Theorem \ref{thm: fuzzy mapping} and the fact that $\mathsf{dist}_\infty(q_j, w_i) \le \mathsf{dist}_p(q_j, w_i) \leq \delta$ for any $q_j, w_i$, it holds $\mathsf{ID}_{q_j} = \mathsf{ID}_{w_i}$. Then, by the correctness of functionalities \Func[so\text{-}OPPRF] and \Func[B2A], it holds $d^{\SSS}_{j, k} + d^{\RRR}_{j, k} = \abs{w_{i, k} - q_{j, k}}^p$, and hence $d^{\SSS}_{j} + d^{\RRR}_{j} = \sum_{k \in [d]} (d^{\SSS}_{j, k} + d^{\RRR}_{j, k}) = \sum_{k \in [d]} \abs{w_{i, k} - q_{j, k}}^p \leq \delta^p$.
Therefore, by the correctness of functionalities \Func[Interval] and \Func[OT], the receiver correctly obtains $b_j = 1$ and the corresponding element $q_j$.

On the other hand, for each $q_j \in Q$, if $\mathsf{dist}_p(q_j, w_i) > \delta$ in $L_p$ distance for all $w_i \in W$, there are three cases due to the false positive of fuzzy mapping.
(1) The first case is $\mathsf{ID}_{q_j} \neq \mathsf{ID}_{w_i}$ for any $w_i \in W$. According to the correctness of functionality \Func[so\text{-}OPPRF], the output $r^\SSS_{j,k} + r^\RRR_{j,k} \in \FF$ is uniformly random for $k\in [d]$.
Therefore, $r^S_{j} + r^R_{j} :={\sum}_{k\in [d]} (r^\SSS_{j,k} + r^\RRR_{j,k}) \in \FF$ is uniformly random.
(2) The second case is $\mathsf{ID}_{q_j} = \mathsf{ID}_{w_i}$ but $\mathsf{dist}_\infty(q_j,w_i) > \delta$. There exists $k^*\in[d]$ such that $|q_{j,k^*}-w_{i,k^*}|>\delta$. Then, according to the correctness of functionality \Func[so\text{-}OPPRF], the output $r^\SSS_{j,k^*} + r^\RRR_{j,k^*} \in \FF$ is uniformly random. Similarly, $r^S_{j} + r^R_{j} :={\sum}_{k\in [d]} (r^\SSS_{j,k} + r^\RRR_{j,k}) \in \FF$ is uniformly random.
As a result, in the first two cases, the probability of $r^S_{j} + r^R_{j}\le \delta$ is at most $\delta^p/|\FF|$. By the correctness of functionality \Func[interval], for any $j \in [m]$, a union bound shows that the probability of there existing $j \in [m]$ such that $b_j = 1$ is at most $m\delta^p/|\FF|$.
By setting $|\FF| \geq m \delta^p 2^\lambda$, the probability $m \delta^p/|\FF| \leq 1/2^\lambda$ is negligible. The correctness of functionality \Func[OT] ensures the receiver learns nothing about $q_j$. (3) The third case is $\mathsf{ID}_{q_j} = \mathsf{ID}_{w_i}$ and $\mathsf{dist}_\infty(q_j,w_i)\leq \delta $ for some $w_i \in W$. 
By the correctness of functionalities \Func[so\text{-}OPPRF] and \Func[B2A], $d^{\SSS}_{j, k} + d^{\RRR}_{j, k} = \abs{w_{i, k} - q_{j, k}}^p$, it holds $d^{\SSS}_{j} + d^{\RRR}_{j} = \sum_{k \in [d]} (d^{\SSS}_{j, k} + d^{\RRR}_{j, k}) = \sum_{k \in [d]} \abs{w_{i, k} - q_{j, k}}^p = \mathsf{dist}_p(q_j, w_i)^p > \delta^p$.
The correctness of \Func[Interval] and \Func[OT] ensure the receiver learns nothing about $q_j$.
\end{proof}

We show the protocol's security in Theorem~\ref{thm: Proof of fuzzy PSI L-p}, and the security proof is deferred to Appendix~\ref{subsec: Proof of fuzzy PSI L-p}.

\begin{theorem}
\label{thm: Proof of fuzzy PSI L-p}
    The protocol \ensuremath{\Pi_\mathsf{FPSI}^{L_p}} in Figure \ref{Prot:fpsi2} realizes the functionality \Func[FPSI] for $L_p$ distance in Figure \ref{Func:FPSI} against semi-honest adversaries in the $(\Func[so\text{-}OPPRF], \Func[B2A], \Func[Interval], \Func[OT])$-hybrid model.
\end{theorem}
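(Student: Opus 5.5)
We argue in the standard semi-honest simulation paradigm, constructing a simulator for each corrupted party separately. Correctness was already shown above, so it remains to prove that each party's view, jointly with the honest party's output, is indistinguishable from the simulated view. The protocol $\Pi_\mathsf{FPSI}^{L_p}$ is a sequential composition of $\Prot[FMap]$ and calls to $\Func[so\text{-}OPPRF]$, $\Func[B2A]$, $\Func[Interval]$ and $\Func[OT]$. The simulator therefore runs the simulator of $\Prot[FMap]$ and emulates each hybrid functionality by handing the corrupted party what it would receive from that functionality.

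\textbf{Corrupted sender $\SSS$.} The simulator receives only $Q$, since $\SSS$ has no output. For the fuzzy mapping phase it invokes the security property of Theorem~\ref{thm: fuzzy mapping}: the view of $\SSS$ in $\Prot[FMap]$ on $(Q,W)$ is indistinguishable from its view on $(Q,W')$ for an arbitrary fixed dummy set $W'$. The simulator therefore runs $\Prot[FMap]$ honestly with $W'$, which also yields the identifiers $\mathsf{ID}_{q_j}$ it needs. The remaining messages $\SSS$ sees are:
\begin{itemize}
\item its so-OPPRF output shares $r^\SSS_{j,k}$, which are uniform in $\FF$ by definition of $\Func[so\text{-}OPPRF]$;
\item its B2A output shares $d^\SSS_{j,k}$, which are uniform by definition of $\Func[B2A]$;
\item nothing from $\Func[Interval]$ or $\Func[OT]$, since only $\RRR$ receives output from these.
\end{itemize}
The simulator samples all of these shares uniformly. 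We also need to say why it is harmless that the real shares are correlated with $\RRR$'s inputs: each of $\SSS$'s shares is masked by fresh functionality randomness, so its marginal distribution is uniform and independent of $W$.

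\textbf{Corrupted receiver $\RRR$.} The simulator receives $W$ and the output $Z$. It again uses the security property of Theorem~\ref{thm: fuzzy mapping}, running $\Prot[FMap]$ with $W$ and a dummy $Q'$. We must check that the $\mathsf{ID}_{w_i}$ produced this way are distributed correctly even though they are later reused as programmed keys. They depend only on $W$ and pseudorandom values: by distinctiveness, each $\mathsf{ID}_{w_i}$ is computationally indistinguishable from uniform, for any $Q$. The simulator then proceeds as follows:
\begin{enumerate}
\item It emulates $\Func[so\text{-}OPPRF]$, in which $\RRR$ is the programming party, by giving $\RRR$ uniform shares $r^\RRR_{j,k}$. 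The functionality outputs $\OOO^{F'}$ to $\RRR$ as the programming party; the simulator answers these oracle queries consistently, using programmed values on programmed points and lazily sampled random values elsewhere.
\item It gives uniform shares $d^\RRR_{j,k}$ for $\Func[B2A]$.
\item For $\Func[Interval]$, it sets $b_j=1$ for exactly $|Z|$ indices, chosen as a random subset of $[m]$, and $b_j=0$ otherwise.
\item For $\Func[OT]$, it returns the elements of $Z$ on the indices with $b_j=1$.
\end{enumerate}

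\textbf{Indistinguishability of the $b_j$.} In the real protocol the selected indices carry the order of $\SSS$'s input. Since the functionality outputs $Z$ as a set, the simulator should place $Z$ at random positions; we note that the $b_j$ pattern reveals no more than $|Z|$ if the sender randomly permutes $Q$, otherwise we must argue the order is inherited from the functionality's ordered output. We then argue via hybrids:
\begin{enumerate}
\item Replace the fuzzy mapping view using Theorem~\ref{thm: fuzzy mapping}.
\item Replace the real shares by uniform ones, which is valid by the functionality definitions.
\item Replace the real $b_j$ by the simulated ones. This is justified by the correctness argument above: $b_j=1$ if and only if $q_j\in Z$, except with probability $m\delta^p/|\FF|\le 2^{-\lambda}$.
\end{enumerate}

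\textbf{Main obstacle.} The delicate step is the receiver's hybrid. The shares $r^\RRR_{j,k}$ and the bits $b_j$ are jointly correlated with the identifiers $\mathsf{ID}_{q_j}$ from the fuzzy mapping, which the real $\RRR$ never sees but which determine which of its programmed points are hit. We must therefore ensure that replacing $Q$ by $Q'$ inside $\Prot[FMap]$ does not change the joint distribution together with the later outputs. We handle this by noting that every later value $\RRR$ receives is either a fresh uniform share or the ideal output $Z$. Hence its view after $\Prot[FMap]$ is a function of the $\Prot[FMap]$ view, fresh randomness, and $Z$, and the hybrid reduces to the $\Prot[FMap]$ security statement plus the statistical correctness bound.
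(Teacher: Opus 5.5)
Your proposal is correct and follows essentially the same route as the paper's proof. Both simulate the fuzzy-mapping phase via its security property (you run $\Prot[FMap]$ on a dummy input, while the paper calls $\Sim[\SSS]^{\mathsf{FMap}}$ or $\Sim[\RRR]^{\mathsf{FMap}}$ on random IDs, which amounts to the same thing), replace the so-OPPRF and B2A shares by uniform values, emulate $\OOO^{F^\prime}$ by a random function programmed on $\RRR$'s points, and place $Z$ at random positions to produce the $b_j$ and OT outputs. Two of your points go beyond the paper's write-up without changing the approach: you order the hybrids so that the post-mapping view is first expressed as a function of the mapping view, fresh randomness and $Z$, and you note that the positions of the $b_j$ are only simulatable if $\SSS$'s indices are randomly permuted, which the paper's ``randomly shuffles $Z$'' step silently assumes.
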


\section{Fuzzy PSI with Prefix Optimization}
\label{sec: Fuzzy PSI with Prefix Optimization}

In this section, we present how to optimize our fuzzy PSI framework using prefix techniques \cite{garimella2024computation,van2025,dang2025ccs} for the large distance threshold $\delta$. In the previous section, our fuzzy PSI achieves linear complexity with respect to $\delta$ in both communication and computation. With prefix optimizations, we reduce the complexity to $\bigo{\log\delta}$.

\subsection{Optimized Fuzzy Mapping}
At a high level, by incorporating prefix techniques~\cite{garimella2024computation,van2025,dang2025ccs} into our framework, the sender in the so-OPPRF protocol does not need to program the entire interval $[q_{j,k}-\delta, q_{j,k}+\delta]$ of size $2\delta + 1$, but only $\bigo{\log\delta}$ prefixes that together cover the interval.
On the receiver side, this modification introduces an additional $\bigo{\log\delta}$ evaluations of so-OPPRF, since there are $\bigo{\log\delta}$ candidate prefixes that may match.
Consequently, the overall complexity is reduced from $O(\delta)$ to $O(\log \delta)$.

We first introduce a new building block, denoted as random equality-conditional selection (EQSel).
As illustrated in Figure~\ref{Prot:select}, EQSel takes as input a set of secret shares of $\{(e_i, v_i)\}_{i \in [h]}$ from two parties, where $e_i$ represents a condition and $v_i$ denotes a payload.
If there exists $i\in[h]$ such that $e_{i}=0$, the protocol outputs the associated payload $v_i$; otherwise, it outputs a random value.
Note that in our applications, there exists at most one $e_{i}$ equal to 0.
We instantiate this protocol using secret-shared private equality test~\cite{rindal2021vole} and secret-shared multiplexer~\cite{rathee2020cryptflow2}.

We present our optimized fuzzy mapping protocol in Figure~\ref{Prot:fuzzy map prefix}.
The main difference over our non-prefix fuzzy mapping protocol is that for each $w_{i, k}$, the receiver needs to prepare $\mu$ candidate prefixes for the invocation of so\text{-}OPPRF, where the sender sets the programmed values of $q_{j, k}$'s prefixes as $0\|r_{j, k}$.
Subsequently, both parties receive $\mu$ secret shares from so-OPPRF and then invoke EQSel on them to select $r_{j, k}$ or a random value depending on whether there is a matched prefix.
The remaining part of the protocol keeps the same as \Prot[FMap].

\begin{table*}[t]
\centering
\caption{The communication (MB) and running time (s) of fuzzy PSI for $L_\infty$, $L_1$ and $L_2$. “---” indicates out-of-memory. “Ours” denotes our fuzzy PSI protocols in Section~\ref {sec: Fuzzy-PSI}. The best results are marked in {\color{green!90}green}.}
\label{tab: low L0}
\resizebox{\textwidth}{!}{%
\begin{tabular}{cccccccccccccc}
\hline
\multirow{3}{*}{$m=n$} & \multirow{3}{*}{Protocol}          & \multicolumn{12}{c}{$(d,\delta)$}                                                                                                                                                         \\ \cline{3-14} 
                       &                                    & \multicolumn{2}{c}{$(4,16)$} & \multicolumn{2}{c}{$(8,16)$} & \multicolumn{2}{c}{$(16,16)$} & \multicolumn{2}{c}{$(4,32)$} & \multicolumn{2}{c}{$(8,32)$} & \multicolumn{2}{c}{$(16,32)$} \\ \cline{3-14} 
                       &                                    & Comm.         & Time         & Comm.         & Time         & Comm.          & Time         & Comm.         & Time         & Comm.         & Time         & Comm.          & Time         \\ \hline
                       
\multicolumn{14}{c}{$L_\infty$ \textbf{Distance}}                                                                                                                                                                                                                     \\ \hline
\multirow{3}{*}{$2^8$}   & Gao et al.~\cite{gao2025efficient}        & 23.2          & 4.4          & 46.2          & 8.6          & 92.1           & 16.6         & 45.5          & 7.8          & 90.7          & 15.1         & 181.2          & 30.5         \\
                       & Dang et al.~\cite{dang2025ccs} & 43.5          & 3.1          & 86.9          & 5.4          & 173.6          & 11.8         & 46.8          & 2.9          & 93.5          & 5.7          & 186.9          & 12.1         \\
                       & Ours                               & \cellcolor{green!25}7.3           & \cellcolor{green!25}0.3          & \cellcolor{green!25}10.6          & \cellcolor{green!25}0.3          & \cellcolor{green!25}17.1           & \cellcolor{green!25}0.4          & \cellcolor{green!25}9.4           & \cellcolor{green!25}0.3          & \cellcolor{green!25}14.7          & \cellcolor{green!25}0.4          & \cellcolor{green!25}25.2           & \cellcolor{green!25}0.5          \\ \hline
\multirow{3}{*}{$2^{12}$}  & Gao et al.~\cite{gao2025efficient}        & 370.9         & 73.2         & 738.5         & 144.6        & 1473.9         & 282.5        & 727.3         & 132.7        & 1451.3        & 253.9        & 2899.5         & 507.2        \\
                       & Dang et al.~\cite{dang2025ccs} & 695.8         & 43.3         & 1389.5        & 85.4         & 2777.0         & 181.2        & 749.3         & 46.4         & 1496.5        & 91.3         & 2990.9         & 190.7        \\
                       & Ours                               & \cellcolor{green!25}60.5          & \cellcolor{green!25}1.3          & \cellcolor{green!25}112.3         & \cellcolor{green!25}2.0          & \cellcolor{green!25}215.8         & \cellcolor{green!25}3.0          & \cellcolor{green!25}93.2          & \cellcolor{green!25}1.4          & \cellcolor{green!25}177.6         & \cellcolor{green!25}2.2          & \cellcolor{green!25}346.5         & \cellcolor{green!25}3.7          \\ \hline
\multirow{3}{*}{$2^{16}$} & Gao et al.~\cite{gao2025efficient}        & 5934.1        & 1195.7       & 11816.7       & 2269.5       & 23581.9        & 4637.7       & 11636.5       & 2132.4       & 23221.5       & 4197.1       & ---            & ---          \\
                       & Dang et al.~\cite{dang2025ccs} & 11132.6       & 707.3        & 22232.2       & 1406.1       & 44431.4        & 2945.4       & 11988.4       & 737.6        & 23943.8       & 1447.8       & 47854.6        & 3150.7       \\
                       & Ours                               &\cellcolor{green!25}908.7         &\cellcolor{green!25}17.2         &\cellcolor{green!25}1737.6        &\cellcolor{green!25}25.0         &\cellcolor{green!25}3396.0         &\cellcolor{green!25}42.0         &\cellcolor{green!25}1432.0        &\cellcolor{green!25}19.0         &\cellcolor{green!25}2784.7        &\cellcolor{green!25}28.9         &\cellcolor{green!25}5491.5         &\cellcolor{green!25}49.7           \\ \hline 
\multicolumn{14}{c}{$L_1$ \textbf{Distance}}                                                                                                                                                                                                                     \\ \hline 
\multirow{3}{*}{$2^8$}   & Gao et al.~\cite{gao2025efficient}        & 23.3          & 4.2          & 46.3          & 8.3          & 92.2           & 16.6         & 45.6          & 7.7          & 90.8          & 14.9         & 181.3          & 30.4         \\
                       & Dang et al.~\cite{dang2025ccs} & 61.2          & 3.8          & 122.2         & 7.2          & 244.2          & 14.7         & 71.2          & 4.0          & 142.2         & 7.6          & 284.2          & 15.6         \\
                       & Ours                               &\cellcolor{green!25}7.8           &\cellcolor{green!25}0.3          &\cellcolor{green!25}11.3          &\cellcolor{green!25}0.4          &\cellcolor{green!25}18.4           &\cellcolor{green!25}0.5          &\cellcolor{green!25}9.9           &\cellcolor{green!25}0.3          &\cellcolor{green!25}15.4          &\cellcolor{green!25}0.5          &\cellcolor{green!25}26.6           &\cellcolor{green!25}0.5          \\ \hline
\multirow{3}{*}{$2^{12}$}  & Gao et al.~\cite{gao2025efficient}        & 372.4         & 70.7         & 740.1         & 143.2        & 1475.4         & 282.6        & 729.2         & 129.6        & 1453.3        & 258.1        & 2901.4         & 506.9        \\ 
                       & Dang et al.~\cite{dang2025ccs} & 978.2         & 57.9         & 1954.8        & 116.7        & 3908.1         & 223.5        & 1138.3        & 61.3         & 2274.6        & 120.4        & 4547.1         & 244.3        \\ 
                       & Ours                               &\cellcolor{green!25}66.1          &\cellcolor{green!25}1.4          &\cellcolor{green!25}122.5         &\cellcolor{green!25}2.1          &\cellcolor{green!25}235.3          &\cellcolor{green!25}3.5          &\cellcolor{green!25}98.7          &\cellcolor{green!25}1.6          &\cellcolor{green!25}187.8         &\cellcolor{green!25}2.5          &\cellcolor{green!25}366.0          &\cellcolor{green!25}4.1          \\ \hline
\multirow{3}{*}{$2^{16}$} & Gao et al.~\cite{gao2025efficient}        & 5959.1        & 1210.6       & 11841.7       & 2299.7       & 23606.9        & 4593.3       & 11667.5       & 2091.3       & 23252.5       & 4165.0       & ---            & ---          \\
                       & Dang et al.~\cite{dang2025ccs} & 15650.7       & 930.8        & 31276.8       & 1833.3       & 62529.1        & 3646.7       & 18212.9       & 995.8        & 36393.2       & 1954.8       & 72753.9        & 3888.1       \\
                       & Ours                               &\cellcolor{green!25}990.9         &\cellcolor{green!25}17.9         &\cellcolor{green!25}1893.8        &\cellcolor{green!25}27.8         &\cellcolor{green!25}3700.2         &\cellcolor{green!25}46.9         &\cellcolor{green!25}1516.2        &\cellcolor{green!25}20.2         &\cellcolor{green!25}2942.9        &\cellcolor{green!25}31.5         &\cellcolor{green!25}5797.7         &\cellcolor{green!25}55.8         \\ \hline
\multicolumn{14}{c}{$L_2$ \textbf{Distance}}                                                                                                                                                                                                                     \\ \hline
\multirow{3}{*}{$2^8$}   & Gao et al.~\cite{gao2025efficient}        & 23.4          & 4.4          & 46.4          & 8.6          & 92.3           & 16.6         & 45.7          & 7.9          & 91.0          & 15.1         & 181.5          & 30.6         \\
                       & Dang et al.~\cite{dang2025ccs} & 70.1          & 4.0          & 140.0         & 7.6          & 279.6          & 14.5         & 84.8          & 4.6          & 169.0         & 8.9          & 337.4          & 16.0         \\
                       & Ours                               &\cellcolor{green!25}7.8           &\cellcolor{green!25}0.4          &\cellcolor{green!25}11.4          &\cellcolor{green!25}0.4          &\cellcolor{green!25}18.4           &\cellcolor{green!25}0.5          &\cellcolor{green!25}9.9           &\cellcolor{green!25}0.4          &\cellcolor{green!25}15.5          &\cellcolor{green!25}0.4          &\cellcolor{green!25}26.6           &\cellcolor{green!25}0.6          \\ \hline 
\multirow{3}{*}{$2^{12}$}  & Gao et al.~\cite{gao2025efficient}        & 373.9         & 72.7         & 741.6         & 143.2        & 1476.9         & 283.8        & 731.1         & 132.5        & 1455.2        & 259.9        & 2903.3         & 509.4        \\
                       & Dang et al.~\cite{dang2025ccs} & 1122.0        & 63.1         & 2239.4        & 117.7        & 4474.2         & 231.3        & 1356.0        & 72.6         & 2703.5        & 132.6        & 5398.4         & 253.3        \\
                       & Ours                               &\cellcolor{green!25}66.3          &\cellcolor{green!25}1.4          &\cellcolor{green!25}122.7         &\cellcolor{green!25}2.2          &\cellcolor{green!25}235.5          &\cellcolor{green!25}3.5          &\cellcolor{green!25}99.2          &\cellcolor{green!25}1.6          &\cellcolor{green!25}188.2         &\cellcolor{green!25}2.5          &\cellcolor{green!25}366.4          &\cellcolor{green!25}4.0          \\ \hline
\multirow{3}{*}{$2^{16}$} & Gao et al.~\cite{gao2025efficient}        & 5983.1        & 1223.9       & 11865.7       & 2321.1       & 23630.9        & 4585.5       & 11697.5       & 2127.9       & 23282.5       & 4134.8       & ---            & ---          \\
                       & Dang et al.~\cite{dang2025ccs} & 17951.5       & 1009.4       & 35830.4       & 1930.9       & 71588.3        & 3799.7       & 21696.1       & 1159.7       & 43255.6       & 2115.1       & 86374.7        & 4332.7       \\
                       & Ours                               &\cellcolor{green!25}998.8         &\cellcolor{green!25}18.6         &\cellcolor{green!25}1901.8        &\cellcolor{green!25}27.8         &\cellcolor{green!25}3708.2         &\cellcolor{green!25}47.7         &\cellcolor{green!25}1526.1        &\cellcolor{green!25}20.4         &\cellcolor{green!25}2952.8        &\cellcolor{green!25}31.7         &\cellcolor{green!25}5807.7         &\cellcolor{green!25}55.1         \\ \hline
\end{tabular}%
}
\end{table*}

\subsection{Optimized Fuzzy PSI for $L_{p\in[1,\infty]}$ Distance}

We present our fuzzy PSI protocol with prefix optimization for $L_\infty$ distance.
The optimized protocol for $L_p$ distance is presented in Appendix~\ref{sec: optimized Fuzzy PSI L p}.
Since our \Prot[FMap\text{-}prefix] achieves the same functionality as \Prot[FMap], we only need to redesign the remaining part of the fuzzy PSI protocol (i.e., the refined filtering phase). 

Similarly, we use the prefix representation of the interval of size $2\delta+1$ to reduce the number of key-value pairs for the refined filtering. 
With prefix optimization, the sender extends its every single input to $\mu$ prefixes in order to match the prefixes of the receiver. Therefore, both parties obtain $\mu$ outputs from the so-OPPRF for each dimension of a single element where at most one of these $\mu$ prefixes will match. We then use our EQSel protocol again to select the matched prefix, and the output is either the associated value or a random value.

The detailed protocol is shown in Figure~\ref{Prot:fpsi1 prefix}. 
The security proof of protocol \ensuremath{\Pi_\mathsf{FPSI\text{-}Prefix}^{L_\infty}}\xspace follows that of protocol \ensuremath{\Pi_\mathsf{FPSI}^{L_\infty}}\xspace in Section~\ref{sec: Fuzzy-PSI l infty}. We omit it due to limited space.

\section{Evaluation}
\label{sec: Evaluation}

\subsection{Experimental Setup}

We implement our protocols in C++ and our code is {available at \color{blue}\url{https://github.com/Th0masAndy/FPSI}}. 
All experiments are conducted on a server running Ubuntu~22.04, equipped with two AMD EPYC~9555 64-core processors and 512~GB of RAM. 
The sender and receiver are emulated as two separate threads within a single process.
Each experiment is repeated five times, and the average values are reported. 
We set the computational security parameter to $\kappa = 128$ and the statistical security parameter to $\lambda = 40$. 
Network conditions are simulated using the Linux \texttt{tc} command, configured with a bandwidth of 10~Gbps and a latency of 0.02~ms. 
In Appendix~\ref{sec: wan}, we further evaluate the performance under varying bandwidth and latency settings.

We compare our protocols with the state-of-the-art linear-complexity fuzzy PSI schemes~\cite{gao2025efficient, dang2025ccs}, as they rely on the same assumption as ours and support high-dimensional inputs.
For a fair comparison, we re-execute their publicly available implementations under the same experimental environment as ours.
Below, we detail the implementation components of our protocols.
For OKVS, we adopt the implementation from~\cite{raghuraman2022blazing}.
The implementations of si-OPRF and so-OPRF are built upon the MPC-friendly PRF proposed by~\cite{alamati2024improved}.
We use silent OT from the libOTe library~\cite{libOTe}, and the implementations of ssPEQT and PEQT are based on the protocols from~\cite{rindal2021vole}. For the prefix-related algorithms, we use the implementation of~\cite{dang2025ccs}.

\subsection{Performance of Fuzzy PSI}

We evaluate the performance of our fuzzy PSI protocols in Section \ref{sec: Fuzzy-PSI} and compare them with the state-of-the-art protocols~\cite{gao2025efficient,dang2025ccs}. The results are presented in Table~\ref{tab: low L0}, where we set the input sizes to $m = n = 2^8, 2^{12}, 2^{16}$, the dimension to $d = 4, 8, 16$, and the distance threshold to $\delta = 16, 32$.
We do not use prefix optimizations, since for these small distance thresholds our non-prefix protocols are more efficient.
The performance of prefix optimizations for large thresholds is presented in the next section.

As shown in Table~\ref{tab: low L0}, our protocol for $L_\infty$ distance achieves a $ {3{\sim}13\times}$ improvement in communication efficiency and a $ {9{\sim}145\times}$ improvement in computational efficiency, outperforming all previous protocols. Specifically, when $m = n = 2^{16}$, $d = 8$, and $\delta = 16$, our protocol consumes $ {13\times}$ less communication than~\cite{dang2025ccs} and $ {7\times}$ less than~\cite{gao2025efficient}. When $m = n = 2^{16}$, $d = 8$, and $\delta = 32$, our protocol is $ {50\times}$ faster than~\cite{dang2025ccs} and $ {145\times}$ faster than~\cite{gao2025efficient}.
We note that our protocols achieve better improvement for larger-scale inputs, since the overhead of the underlying silent OT ~\cite{boyle2019efficient} can be amortized.

We also present the results for $L_1$ and $L_2$ distances, which are commonly used metrics.
Compared with~\cite{dang2025ccs}, our protocol achieves up to an $ {80\times}$ improvement in running time and a $ {19\times}$ reduction in communication cost.
Moreover, our protocols for $L_1$ and $L_2$ distances incur only a minor overhead compared to that of $L_\infty$ distance, whereas~\cite{dang2025ccs} nearly doubles its overall cost.

Under limited network conditions, where the time for communication dominates the total running time, we provide the results in Appendix~\ref{sec: wan}.

\begin{table*}[t]
\centering
\caption{The communication (MB) and running time (s) of fuzzy PSI with prefix optimizations for large threshold $\delta$. We fix $m=n=2^{12}$ and $d=8$.
“---” indicates out-of-memory.
“Ours” denotes our fuzzy PSI protocols in Section~\ref {sec: Fuzzy-PSI} and “Ours-prefix” denotes our protocols with prefix optimizations in Section~\ref {sec: Fuzzy PSI with Prefix Optimization}. The best results are marked in {\color{green!90}green}.}
\label{tab:high delta}
\resizebox{0.70\textwidth}{!}{%
\begin{tabular}{cccccccccc}
\hline
\multirow{2}{*}{Metric}     & \multirow{2}{*}{Protocol}          & \multicolumn{2}{c}{$\delta=16$} & \multicolumn{2}{c}{$\delta=64$} & \multicolumn{2}{c}{$\delta=256$} & \multicolumn{2}{c}{$\delta=1024$} \\ \cline{3-10} 
                            &                                    & Comm.           & Time          & Comm.           & Time          & Comm.           & Time           & Comm.            & Time           \\ \hline
\multirow{4}{*}{$L_\infty$} & Gao et al.~\cite{gao2025efficient}        & 738.5           & 151.5         & 2876.9          & 496.6         & 11430.5         & 1850.6         & ---              & ---            \\
                            & Dang et al.~\cite{dang2025ccs} & 1389.5          & 86.2          & 1605.1          & 95.6          & 2581.0          & 156.9          & 2795.0           & 164.6          \\
                            & Ours                               & \cellcolor{green!25}112.3           & \cellcolor{green!25}1.8           & 308.3           & \cellcolor{green!25}2.6           & 1093.1          & \cellcolor{green!25}7.2            & 4235.8           & 19.7           \\ 
                            & Ours-prefix                        & 184.8           & 5.4           & \cellcolor{green!25}225.7           & 5.7           & \cellcolor{green!25}289.1           & 7.9            & \cellcolor{green!25}366.2            & \cellcolor{green!25}10.0           \\ \hline
\multirow{4}{*}{$L_1$}      & Gao et al.~\cite{gao2025efficient}        & 740.1           & 143.4         & 2879.3          & 483.3         & 11433.6         & 1840.1         & ---              & ---            \\
                            & Dang et al.~\cite{dang2025ccs} & 1954.8          & 113.4         & 2487.7          & 152.6         & 3695.2          & 216.6          & 4479.7           & 241.5          \\
                            & Ours                               & \cellcolor{green!25}122.5           & \cellcolor{green!25}2.1           & 318.5           & \cellcolor{green!25}2.9           & 1103.5          & \cellcolor{green!25}7.0            & 4246.5           & 20.3           \\ 
                            & Ours-prefix                        & 252.2           & 7.0           & \cellcolor{green!25}295.9           & 7.2           & \cellcolor{green!25}367.7           & 9.6            & \cellcolor{green!25}456.6            & \cellcolor{green!25}10.9           \\\hline
\multirow{4}{*}{$L_2$}      & Gao et al.~\cite{gao2025efficient}        & 741.6           & 144.3         & 2881.5          & 490.6         & 11436.6         & 1856.3         & ---              & ---            \\
                            & Dang et al.~\cite{dang2025ccs} & 2239.4          & 126.6         & 2818.4          & 175.5         & 4248.0          & 303.1          & 5572.5           & 572.7          \\
                            & Ours                               & \cellcolor{green!25}122.7           & \cellcolor{green!25}2.1           & \cellcolor{green!25}319.4           & \cellcolor{green!25}2.9           & 1104.6          & \cellcolor{green!25}7.3            & 4247.8           & 21.0           \\
                            & Ours-prefix                        & 385.2           & 9.7           & 440.3           & 9.7           & \cellcolor{green!25}534.0           & 11.9           & \cellcolor{green!25}682.6            & \cellcolor{green!25}15.1           \\
                            \hline
\end{tabular}%
}
\end{table*}

\subsection{Performance of Fuzzy PSI with Prefix Optimization}

We report the performance of our fuzzy PSI protocols with prefix optimization for large distance threshold $\delta$.
We set relatively large $\delta$ as $\{2^4, 2^6, 2^8, 2^{10}\}$.
As shown in Table~\ref{tab:high delta}, our optimized protocols exhibit desirable performance gains for large distance thresholds $\delta$. Overall, compared with the state-of-the-art work~\cite{dang2025ccs}, we achieve a $ {7{\sim}10\times}$ improvement in communication efficiency and a $ {13{\sim}38\times}$ improvement in computational efficiency.
We note that our non-prefix protocols still exhibit competitive performance compared to~\cite{dang2025ccs}, especially in computation efficiency.

We also compare our optimized protocols with our non-prefix protocols. For $\delta \ge 64$, our optimized protocols significantly reduce the communication overhead. For instance, when $\delta = 1024$, it lowers the communication cost by $ {11\times}$ compared to the non-prefix protocols, while reducing the running time by approximately half.
However, our non-prefix protocols still achieve better runtime performance for $\delta \le 256$, since the optimized protocols introduce additional operations such as conditional selection.

We provide the performance under different network settings in Appendix~\ref{sec: wan}. Since our optimized protocol significantly reduces communication overhead, its overall efficiency further improves under limited network conditions.

\section{Discussion}

\noindent\textbf{Limitations.}
Our work still has the following limitations. 
First, similar to prior fuzzy PSI protocols~\cite{van2024fuzzy,van2025,gao2025efficient,dang2025ccs,piske2025distance}, our protocol relies on certain assumptions about the input distribution. These assumptions are typically parameterized by the distance threshold $\delta$. While such parameters are well-defined in theoretical analysis, determining appropriate values for real-world datasets remains challenging. Second, it remains unclear how to obtain malicious security without substantially compromising efficiency. A natural approach is to instantiate our construction using maliciously secure OT~\cite{yang2020ferret,roy2022softspokenot} and authenticated secret sharing based on IT-MACs~\cite{bendlin2011semi,nielsen2012new}. However, a key technical challenge is how to efficiently verify that the OKVS encoding is well-formed and consistent.

\noindent\textbf{Future Work.}
We outline several directions for future research. 
First, it is of interest to design fuzzy PSI protocols under weaker or more flexible assumptions, such as relaxing distribution constraints or supporting one-sided assumptions, while maintaining comparable efficiency. 
Second, although existing fuzzy PSI protocols guarantee correctness under ideal assumptions, it remains an open problem to quantify and evaluate the impact when these assumptions are violated in practice (e.g., when elements are closer than expected or exhibit collisions). Developing robustness and accuracy guarantees in such settings is an important direction for future work.

\section{Conclusion}
\label{Conclusion}

In this work, we present a novel modular design for fuzzy PSI based on symmetric primitives. Our protocol 
is built upon a new building block termed so-OPPRF,
achieving linear complexity with respect to the input size $m$, $n$, the dimension $d$, and the distance threshold $\delta$. We further optimize the protocol by incorporating prefix techniques, and reduce its complexity to logarithmic in $\delta$. Experimental results demonstrate that our protocols outperform linear-complexity state-of-the-art works, significantly improving both computation and communication efficiency.

\section{Acknowledgment}

This work was supported by the Nanyang Technological University Centre in Computational Technologies for Finance (NTU-CCTF). It was also supported by Lee Kong Chian Chair Professorship, Singapore Management University. It was also supported by the National Research Foundation, Singapore, and Cyber Security Agency of Singapore under its National Cybersecurity R\&D Programme and CyberSG R\&D Cyber Research Programme Office. Any opinions, findings, and conclusions or recommendations expressed in this material are those of the author(s) and do not necessarily reflect the views of NTU-CCTF, National Research Foundation, Singapore, Cyber Security Agency of Singapore as well as CyberSG R\&D Programme Office, Singapore.

\section{Ethics considerations}

This work provides effective solutions for the fuzzy private set intersection task, encouraging researchers to pay more attention to the privacy of data analysis. The experiments in this paper are all based on public datasets and do not contain any personal or illegal information. We believe that our research was done ethically.

\section{LLM usage considerations}

This paper used the LLM solely to assist with grammar checking and language polishing. The LLM did not contribute any conceptual ideas, methodological innovations, experimental designs, or analytical insights to this work. All intellectual contributions originate from the authors. All data provided to the LLM for linguistic refinement contain no sensitive, personal, or ethically problematic information.

\bibliographystyle{IEEEtran}
\bibliography{sample-base}

\begin{appendices}

\section{Threat Model and Security Proof}

\subsection{Threat Model}
\label{sec: Threat Model}
Similar to prior works~\cite{van2024fuzzy,dang2025ccs,gao2025efficient,van2025}, we consider static semi-honest probabilistic polynomial-time (PPT) adversaries.
Namely, a PPT adversary \AAA passively corrupts either the sender \SSS or the receiver \RRR at the beginning of the protocol and honestly follows the protocol specification.
We use the standard simulation-based security definition for secure two-party computation.
Our construction invokes multiple sub-protocols, and we use the \textit{hybrid model} to describe them.
By convention, a protocol invoking a functionality \FFF is referred to as the \FFF-hybrid model.
We give the formal security definition as follows.

\begin{definition}
Let $\mathsf{view}_{\SSS}^{\Pi}(x, y)$ and $\mathsf{view}_{\RRR}^{\Pi}(x, y)$ be the views of $\SSS$ and $\RRR$ in a protocol $\Pi$, respectively, where $x$ is the input of $\SSS$ and  $y$ is the input of $\RRR$. Let $\mathsf{out}(x, y)$ be the protocol's output of both parties and $\mathcal{F}(x, y)$ be the functionality's output. $\Pi$ is said to securely compute a functionality $\mathcal{F}$ in the semi-honest model if for every PPT adversary \AAA there exists PPT simulators $\mathsf{Sim}_{\SSS}$ and $\mathsf{Sim}_{\RRR}$ such that for all inputs $x$ and $y$,
\begin{equation}
\begin{aligned}
    &\{\mathsf{view}_{\SSS}^{\Pi}(x, y), \mathsf{out}(x, y)\} \approx_c \{\mathsf{Sim}_{\SSS}(x, \mathcal{F}_{\SSS}(x, y)), \mathcal{F}(x, y)\}, \\
    &\{\mathsf{view}_{\RRR}^{\Pi}(x, y), \mathsf{out}(x, y)\} \approx_c \{\mathsf{Sim}_{\RRR}(x, \mathcal{F}_{\RRR}(x, y)), \mathcal{F}(x, y)\}. \nonumber
\end{aligned}
\end{equation}
\end{definition}

\subsection{Proof of so-OPPRF}
\label{subsec: Proof of so-OPPRF}

We give the detailed proof of Theorem \ref{thm: Proof of so-OPPRF}.

\begin{proof}
    \label{proof: Proof of so-OPPRF}

    If both parties behave honestly, correctness on programmed points $(q, z) \in L$ follows from the correctness of OKVS, since $y_i^\SSS + y_i^\RRR = \mathsf{OKVS.Decode}(D, q) + f_i^\SSS + f_i^\RRR = z - F_k(q) + F_k(q) = z$. On unprogrammed points $x \in X$, the OKVS encoding $D$ is independent from $F_k(x)$, and thus the value $y_i^\SSS + y_i^\RRR := F_k(x) + \mathsf{OKVS.Decode}(D, x)$ is indistinguishable from a uniformly random distribution, since $F_k(\cdot)$ is indistinguishable from a random function by definition of \Func[OPRF]. We next consider a corrupted sender or receiver.

    \textbf{Corrupted sender.} We show the simulator $\Sim[\SSS](L, \{y_i^\SSS\}_{i \in [n]}, \OOO^{F^\prime})$:
    \begin{enumerate}

        \item \Sim[\SSS] computes $D \leftarrow \mathsf{OKVS.Encode}(\{(q_j, z_j - F(q_j))\}_{j\in[m]})$, where $F(q_j)$ are randomly sampled. For each query $x \notin \{q_j\}_{j\in[m]}$, \Sim[\SSS] defines $F(x) := F^\prime(x) - \mathsf{OKVS.Decode}(D, x)$.

        \item \Sim[\SSS] appends $(\{f_i^\SSS := y_i^\SSS\}_{i \in [n]}, \OOO^{F}, \OOO^{F^\prime})$ in the view.

    \end{enumerate}

    We show that the view of adversary simulated by \Sim has the identical distribution as its view in the real-world execution. For programmed points $q_j$, it holds $F^\prime(q_j) = F(q_j) + \mathsf{OKVS.Decode}(D, q_j)$. For unprogrammed points $x$, the output $F(x) := F^\prime(x) - \mathsf{OKVS.Decode}(D, x)$ is uniformly random according to the independence property of OKVS.

    \textbf{Corrupted receiver.} We show the simulator $\Sim[\RRR](X, \{y_i^\RRR\}_{i \in [n]})$:
    \begin{enumerate}

        \item \Sim[\RRR] samples a random OKVS encoding $D$ on $m$ random key-value pairs and appends $D$ in the view.

        \item \Sim[\RRR] appends $\{f_i^\RRR\}_{i \in [n]}$ in the view, where $f_i^\RRR := y_i^\RRR - \mathsf{OKVS.Decode}(D, x_i)$ for $i \in [n]$. 
        
    \end{enumerate}

We show that the output by $\Sim[\RRR]$ is indistinguishable from the real protocol.
By the double obliviousness property of OKVS and the randomness of so-OPRF, the randomly sampled OKVS encoding $D$ is statistically indistinguishable from a real protocol execution. Moreover, $f^\RRR_i$ has the same distribution in both the real and simulated protocols such that $y^\RRR_i = f^\RRR_i + \mathsf{OKVS.Decode}(D, x_i)$.

\end{proof}

\begin{figure}[t]
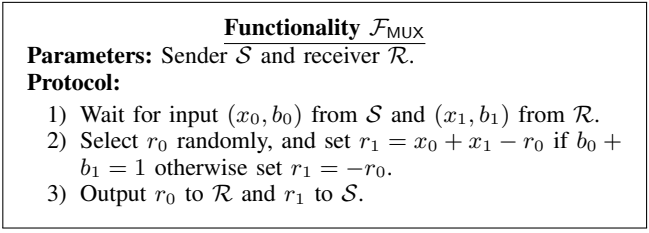

\begin{nffunc}{\Func[MUX]}

\noindent \textbf{Parameters:} Sender \SSS and receiver \RRR.

\noindent \textbf{Protocol:}

\begin{enumerate}

    \item Wait for input $(x_0,b_0)$ from \SSS and $(x_1,b_1)$ from \RRR.

    \item Select $r_0$ randomly, and set $r_1=x_0+x_1-r_0$ if $b_0+b_1=1$ otherwise set $r_1=-r_0$.
    \item Output $r_0$ to \RRR and $r_1$ to \SSS.
    
\end{enumerate}

\end{nffunc}
\vspace{0.5em}
\caption{Functionality of MUX.}
\label{Func:MUX}
\end{figure}

\begin{figure}[t]
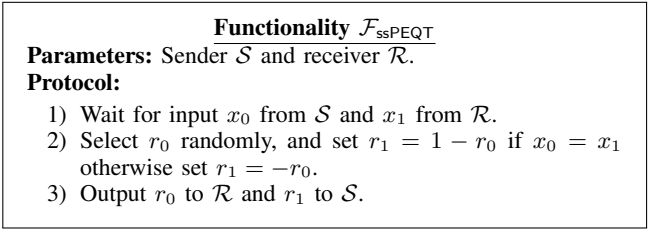

\begin{nffunc}{\Func[ssPEQT]}

\noindent \textbf{Parameters:} Sender \SSS and receiver \RRR.

\noindent \textbf{Protocol:}

\begin{enumerate}

    \item Wait for input $x_0$ from \SSS and $x_1$ from \RRR.

    \item Select $r_0$ randomly, and set $r_1=1-r_0$ if $x_0=x_1$ otherwise set $r_1=-r_0$.
    \item Output $r_0$ to \RRR and $r_1$ to \SSS.
    
\end{enumerate}

\end{nffunc}
\vspace{0.5em}
\caption{Functionality of secret-shared PEQT.}
\label{Func: ssPEQT}
\end{figure}

\begin{figure}[t]
    \begin{nfprot}{\Prot[getList_p]}
    \noindent \textbf{Parameters:} $\{\mathsf{ID}_{w_i}\}_{i\in[n]},\{w_i\}_{i\in[n]},p$. Denote $\mu = \log \delta$.
    
    \noindent \textbf{Protocol:}
    \begin{enumerate}
        \item For $i\in [n],k\in[d]$, compute $\{w_{0,i,k,h}\}_{h\in [\mu]} := \mathsf{Decompose}([w_{i,k}-\delta, w_{i,k}])$ and $\{w_{1,i,k,h}\}_{h\in [\mu]} := \mathsf{Decompose}([w_{i,k}+1, w_{i,k}+\delta])$.

        \item If $p = 1$: \begin{itemize}
                            \item $\mathsf{List_p} := \{(\mathsf{ID}_{w_i}\Vert k\Vert \sigma \Vert w_{\sigma,i,k,h}, 0 \Vert |w^*-w_{i,k}|)\}_{i\in [n],k\in[d],\sigma\in[0,1],h\in [\mu]}$, where $w^* := \mathsf{UpBound}(w_{\sigma,i,k,h})$ if $\sigma=0$ and $w^* := \mathsf{LowBound}(w_{\sigma,i,k,h})$ otherwise.
                        \end{itemize}
        
        \item If $p = 2$: \begin{itemize}
                            \item $\mathsf{List_p} := \{(\mathsf{ID}_{w_i}\Vert k\Vert \sigma \Vert w_{\sigma,i,k,h}, 0\Vert |w^*-w_{i,k}| \Vert |w^*-w_{i,k}|^2 )\}_{i\in [n],k\in[d],\sigma\in[0,1],h\in [\mu]}$, where $w^* := \mathsf{UpBound}(w_{\sigma,i,k,h})$ if $\sigma=0$ and $w^* := \mathsf{LowBound}(w_{\sigma,i,k,h})$ otherwise.
                        \end{itemize}

        \item Pad $\mathsf{List_p}$ into size $n\cdot d\cdot 2\cdot \mu$ and outputs $\mathsf{List_p}$
    \end{enumerate}
    
    \end{nfprot}
    \vspace{0.5em}
    \caption{Protocol for computing list for $p=1$ and $p=2$. The protocol can extend to arbitrary $p$.}
    \label{Prot:get list p}
\end{figure}

\begin{figure}[t]
    \begin{nfprot}{\Prot[getDistance_p]} 
    \noindent \textbf{Parameters:} Sender inputs $\{s^\SSS_h\}_{h\in [\mu]} \in \FF^{\mu \times p}$ , $ q\in \UU$, $\sigma\in \{0,1\}$. Receiver inputs $\{s^\RRR_h\}_{h\in [\mu]} \in \FF^{\mu \times p}$. Functionality $\Func[Mult]$. Denote $\mu = \log\delta$.
    
    \noindent \textbf{Protocol:}
    \begin{enumerate}
        \item \SSS computes $\{q_h\}_{h\in [\mu]} := \mathsf{AllPrefix}(q,\mu)$.
        \item For $h\in [\mu]$, \SSS computes $e_h := \abs{q-\mathsf{UpBound}(q_h)}$ if $\sigma=0$ and $e_h = \abs{q-\mathsf{LowBound}(q_h)}$ otherwise.
        \item If $p=1$: \begin{itemize}
                            \item For $h\in [\mu]$, \SSS computes $d_h^\SSS :=  e_{h} + s^\SSS_h$ and \RRR computes $d_h^\RRR := s^\RRR_h$. 
                        \end{itemize}
        \item If $p=2$: \begin{itemize}
                    \item For $h\in [\mu]$, \SSS parses $s^\SSS_h = s^\SSS_{h,1}\Vert s^\SSS_{h,2} \in \FF^2$ and \RRR parses $s^\RRR_h = s^\RRR_{h,1}\Vert s^\RRR_{h,2} \in \FF^2$.
                    \item For $h\in [\mu],i\in[1,p]$, \SSS and \RRR invoke $\Func[Mult]$ where \SSS inputs $e_h$ and \RRR inputs $s^\RRR_{h,1}$. \SSS receives $m^\SSS_{h}$ and \RRR receives $m^\RRR_{h}$.
                    \item For $h\in [\mu]$, \SSS computes $d_h^\SSS :=  e^2_{h} + 2e_h\cdot s^\SSS_{h,1} + 2m^\SSS_{h} + s^\SSS_{h,2}$ and $d_h^\RRR := 2m^\RRR_{h} + s^\RRR_{h,2}$.
                    \end{itemize}
        \item \SSS outputs $\{d_h^\SSS\}_{h\in [\mu] }$ and \RRR outputs $\{d_h^\RRR\}_{h\in [\mu]}$.
        
    \end{enumerate}
    
    \end{nfprot}
    \vspace{0.5em}
    \caption{Protocol for computing distance for $p=1$ and $p=2$. The protocol can extend to arbitrary $p$.}
    \label{Prot:sum distance}
\end{figure}

\subsection{Proof of Fuzzy PSI for $L_\infty$ distance}
\label{subsec: Proof of fuzzy PSI L-inf}

We give the detailed proof of Theorem \ref{thm: Proof of fuzzy PSI L-inf}.

\begin{proof}
\label{proof: Proof of fuzzy PSI L-inf}

    \textbf{Corrupted sender.} We show the simulator $\Sim[\SSS](Q)$, which invokes the sub-simulator $\Sim[\SSS]^{\mathsf{FMap}}$:
    \begin{enumerate}

        \item \Sim[\SSS] invokes $\Sim[\SSS]^{\mathsf{FMap}}(Q, \mathsf{ID}_Q)$ and appends the output to the view, where $\mathsf{ID}_Q$ is randomly sampled.

        \item \Sim[\SSS] randomly samples $\{ r^{\SSS}_{j,k} \}_{j\in [m],k\in [d]}$ and appends them to the view.
        
    \end{enumerate}

We show that the output by $\Sim[\SSS]$ is indistinguishable from the real protocol. 
According to the security property of fuzzy mapping in Theorem~\ref{thm: fuzzy mapping}, the simulated view of $\Sim[\SSS]^{\mathsf{FMap}}$ is indistinguishable from the real execution.
Besides, the only difference is $\{ r^{\SSS}_{j,k} \}_{j\in [m],k\in [d]}$. They are random secret shares in the real protocol according to the so-OPPRF functionality, while they are randomly sampled in the simulated view with the same distribution.
Therefore, the output by $\Sim[\SSS]$ is indistinguishable from the real protocol.

    \textbf{Corrupted receiver.}  We show the simulator $\Sim[\RRR](W, Z)$, which invokes the sub-simulator $\Sim[\RRR]^{\mathsf{FMap}}$:
    \begin{enumerate}
        \item \Sim[\RRR] randomly samples $\mathsf{ID}_W$. \Sim[\RRR] invokes $\Sim[\RRR]^{\mathsf{FMap}}(W, \mathsf{ID}_W)$ and appends the output to the view.

        \item \Sim[\RRR] randomly samples $\{ r^{\RRR}_{j,k} \}_{j\in [m],k\in [d]}$ and samples a random function $F$ such that $F(\mathsf{ID}_{w_i} \Vert k\Vert (w_{i,k}+t)) = 0$ for $t\in [-\delta,\delta], i \in [n],k\in [d]$.
        \Sim[\RRR] appends $(\{ r^{\RRR}_{j,k} \}_{j\in [m],k\in [d]}, \OOO^F)$ to the view.
        
        \item \Sim[\RRR] uses $\bot$ to pad $Z$ to $m$ elements, randomly shuffles the set $Z$, and sets $b_j := 0$ if $z_j = 0$ and $b_j := 1$ otherwise for $j \in [m]$. \Sim[\RRR] appends $(\{b_j\}_{j \in [m]}, Z)$ to the view.
    \end{enumerate}

    We show that the output by $\Sim[\RRR]$ is indistinguishable from the real protocol. According to the security property of fuzzy mapping in Theorem~\ref{thm: fuzzy mapping}, the simulated view of $\Sim[\RRR]^{\mathsf{FMap}}$ is indistinguishable from the real execution. Moreover, the way \RRR obtains the elements in $Z$ is identical to the real execution since the elements in $Z$ are randomly shuffled. Besides, the only difference is $(\{ r^{\RRR}_{j,k} \}_{j\in [m],k\in [d]}, \OOO^F)$. $\{r^{\RRR}_{j,k} \}_{j\in [m],k\in [d]}$ are random secret shares in the real protocol according to the so-OPPRF functionality, while they are randomly sampled in the simulated view with the same distribution.
    Then, the function $F$ is constructed in the same manner in the real and simulated executions. Therefore, the output by $\Sim[\RRR]$ is indistinguishable from the real protocol.
    
\end{proof}

\begin{figure}[t]
    \begin{nfprot}{\ensuremath{\Pi_\mathsf{FPSI\text{-}Prefix}^{L_p}}\xspace}
        \noindent \textbf{Parameters:} Distance threshold \( \delta \). Prefix parameters $\mu^\prime = \log\delta$ and $\mu = 1+\log\delta$.

        \noindent \textbf{Input:} \SSS inputs \( Q=\{q_j\}_{j\in [m]} \in \mathbb{U}^{d\times m} \) and \RRR inputs \( W=\{w_i\}_{i\in [n]} \in \mathbb{U}^{d\times n} \).

        \vspace{0.5em}
        \noindent \textbf{Protocol:}

        \begin{enumerate}
             \item \sender and \receiver invoke sub-protocol \Prot[FMap\text{-}Prefix], where \sender inputs $Q$ and \receiver inputs $W$. \sender obtains  $\{\mathsf{ID}_{q_j} \}_{j\in [m]}$ and \receiver obtains $\{\mathsf{ID}_{w_i} \}_{i\in [n]}$.

            \item \RRR invokes $\Prot[getList_p](\{\mathsf{ID}_{w_i}\}_{i\in[n]},\{w_i\}_{i\in[n]},p)$ and gets $\mathsf{List_p}$.
            \item For $j \in [m],k\in [d]$, \SSS computes $\{q_{j,k, h}\}_{h \in [\mu]} := \mathsf{AllPrefix}(q_{j,k},\mu)$.

            \item \sender and \receiver invoke functionality \Func[so\text{-}OPPRF], where \RRR inputs $\mathsf{List_p}$ and \SSS inputs $\{ \mathsf{ID}_{q_j} \Vert k \Vert \sigma \Vert q_{j,k,h} \}_{\sigma \in [0,1],j \in [m],k \in [d],h\in [\mu]}$. \SSS receives $\{ e^{\SSS}_{\sigma,j,k,h} \Vert v^{\SSS}_{\sigma,j,k,h} \}_{\sigma \in [0,1],j\in [m],k\in [d],h\in [\mu]}$ and \RRR receives $\{ e^{\RRR}_{\sigma,j,k,h} \Vert v^{\RRR}_{\sigma,j,k,h} \}_{\sigma \in [0,1],j\in[m],k\in [d],h\in [\mu]}$.

            \item For $\sigma \in [0,1],j\in [m],k\in [d],h\in [\mu]$, \SSS and \RRR invoke functionality {\Func[B2A], where \SSS inputs $v^{\SSS}_{\sigma,j,k,h}$ and \RRR inputs $v^{\RRR}_{\sigma,j,k,h}$. \SSS receives $s^{\SSS}_{\sigma,j,k,h}$ and \RRR receives $s^{\RRR}_{\sigma,j,k,h}$}.

            \item For $\sigma \in [0,1],j\in [m],k\in [d]$, \SSS and \RRR invoke $\Prot[getDistance_p]$, where \SSS inputs $(\{ s^{\SSS}_{\sigma,j,k,h}\}_{h\in [\mu]}, q_{j,k},\sigma)$, \RRR inputs $\{ s^{\RRR}_{\sigma,j,k,h}\}_{h\in [\mu]}$. 
            \SSS receives $\{d^\SSS_{\sigma,j,k,h}\}_{h\in[\mu]}$ and \RRR receives $\{d^\RRR_{\sigma,j,k,h}\}_{h\in[\mu]}$.

            \item For $j\in [m],k\in [d]$, \SSS and \RRR invoke $\Func[EQSel]$, where \SSS inputs $\{e^{\SSS}_{\sigma,j,k,h}, d^{\SSS}_{\sigma,j,k,h}\}_{\sigma \in [0,1],h\in [\mu]}$ and \RRR inputs $\{e^{\RRR}_{\sigma,j,k,h}, d^{\RRR}_{\sigma,j,k,h}\}_{\sigma \in [0,1],h\in [\mu]}$. \SSS receives $r^{\SSS}_{j,k}$ and \RRR receives $r^{\RRR}_{j,k}$.

            \item For $j\in [m]$, \SSS computes $r^{\SSS}_{j} := {\sum}_{k\in[d]}r^{\SSS}_{j,k}$, and \RRR computes $r^{\RRR}_{j} := {\sum}_{k\in [d]}r^{\RRR}_{j,k}$.

            \item For \( j \in [m] \), \SSS and \RRR invoke functionality { \Func[Interval], where \SSS inputs $r^{\SSS}_{j}$ and \RRR inputs $r^{\RRR}_{j}$. \RRR receives $b_j := \textbf{1}(r^{\SSS}_{j}+r^{\RRR}_{j}\le \delta^p)$}.
            \item For \( j \in [m] \), \RRR and \SSS invoke functionality \Func[OT], where \RRR inputs $b_j$ and \SSS inputs $(\bot ,q_j)$. \RRR receives OT outputs $\mathsf{z}_j$.
            \item \RRR outputs $Z := \{z_j \mid b_j = 1\;\text{for}\; j\in [m]\}$.
        \end{enumerate}
    \end{nfprot}
    \vspace{0.5em}
    \caption{Protocol of fuzzy PSI with prefix optimization for $L_p$ distance. }
    \label{Prot:fpsi2 prefix}
\end{figure}

\begin{table*}[t]
\centering
\caption{The communication (MB) and running time (s) of fuzzy PSI in Section~\ref{sec: Fuzzy-PSI} under different network settings.}
\label{tab: non-prefix wan}
\resizebox{\textwidth}{!}{%
\begin{tabular}{ccccccccccccccc}
\hline
\multirow{2}{*}{$m=n$}   & \multirow{2}{*}{$\delta$} & \multirow{2}{*}{$d$} & \multicolumn{4}{c}{$L_\infty$}    & \multicolumn{4}{c}{$L_1$}         & \multicolumn{4}{c}{$L_2$}         \\ \cline{4-15} 
                       &                           &                      & Comm.  & 10Gbps & 1Gbps & 100Mbps & Comm.  & 10Gbps & 1Gbps & 100Mbps & Comm.  & 10Gbps & 1Gbps & 100Mbps \\ \hline
\multirow{6}{*}{256}   & \multirow{3}{*}{16}       & 4                    & 7.3    & 0.3    & 0.4   & 4.8     & 7.8    & 0.3    & 0.5   & 5.2     & 7.8    & 0.4    & 0.6   & 5.2     \\
                       &                           & 8                    & 10.6   & 0.3    & 0.5   & 5.3     & 11.3   & 0.4    & 0.6   & 5.8     & 11.4   & 0.4    & 0.6   & 5.8     \\
                       &                           & 16                   & 17.1   & 0.4    & 0.7   & 5.8     & 18.4   & 0.5    & 0.8   & 6.2     & 18.4   & 0.5    & 0.7   & 6.2     \\ \cline{2-15} 
                       & \multirow{3}{*}{32}       & 4                    & 9.4    & 0.3    & 0.6   & 5.1     & 9.9    & 0.3    & 0.6   & 5.5     & 9.9    & 0.4    & 0.6   & 5.5     \\
                       &                           & 8                    & 14.7   & 0.4    & 0.7   & 5.7     & 15.4   & 0.5    & 0.7   & 6.4     & 15.5   & 0.4    & 0.7   & 6.4     \\
                       &                           & 16                   & 25.2   & 0.5    & 0.8   & 6.7     & 26.6   & 0.5    & 0.9   & 7.1     & 26.6   & 0.6    & 0.9   & 7.1     \\ \hline
\multirow{6}{*}{4096}  & \multirow{3}{*}{16}       & 4                    & 60.5   & 1.3    & 1.8   & 10.4    & 66.1   & 1.4    & 2.0   & 11.3    & 66.3   & 1.4    & 2.0   & 11.3    \\
                       &                           & 8                    & 112.3      & 2.0    & 2.7   & 15.1    & 122.5  & 2.1    & 3.0   & 16.5    & 122.7  & 2.2    & 3.0   & 16.5    \\ 
                       &                           & 16                   & 215.8     & 3.0    & 4.4   & 25.0    & 235.3  & 3.5    & 5.0   & 27.5    & 235.5  & 3.5    & 4.9   & 27.6    \\ \cline{2-15} 
                       & \multirow{3}{*}{32}       & 4                    & 93.2   & 1.4    & 2.2   & 13.4    & 98.7   & 1.6    & 2.4   & 14.2    & 99.2   & 1.6    & 2.4   & 14.3    \\
                       &                           & 8                    & 177.6      & 2.2    & 3.5   & 21.4    & 187.8  & 2.5    & 3.8   & 22.7    & 188.2  & 2.5    & 3.9   & 22.7    \\
                       &                           & 16                   & 346.5     & 3.7    & 6.1   & 37.0    & 366.0  & 4.1    & 6.7   & 40.3    & 366.4  & 4.0    & 6.6   & 39.2    \\ \hline
\multirow{6}{*}{65536} & \multirow{3}{*}{16}       & 4                    & 908.7  & 17.2   & 22.4  & 96.9    & 990.9  & 17.9   & 24.4  & 105.7   & 998.8  & 18.6   & 24.5  & 106.4   \\
                       &                           & 8                    & 1737.6 & 25.0   & 36.8  & 172.1   & 1893.8 & 27.8   & 39.2  & 190.4   & 1901.8 & 27.8   & 39.5  & 188.3   \\
                       &                           & 16                   & 3396.0 & 42.0   & 65.0  & 319.4   & 3700.2 & 46.9   & 71.0  & 351.1   & 3708.2 & 47.7   & 71.3  & 355.5   \\ \cline{2-15} 
                       & \multirow{3}{*}{32}       & 4                    & 1432.0 & 19.0   & 28.5  & 143.0   & 1516.2 & 20.2   & 30.5  & 151.1   & 1526.1 & 20.4   & 31.1  & 155.6   \\
                       &                           & 8                    & 2784.7 & 28.9   & 49.8  & 271.1   & 2942.9 & 31.5   & 53.9  & 284.3   & 2952.8 & 31.7   & 53.8  & 286.6   \\
                       &                           & 16                   & 5491.5 & 49.7   & 92.0  & 511.3   & 5797.7 & 55.8   & 99.2  & 541.9   & 5807.7 & 55.1   & 99.8  & 541.4   \\ \hline
\end{tabular}%
}
\end{table*}

\begin{table*}[t]
\centering
\caption{The communication (MB) and running time (s) of fuzzy PSI under different network settings. We fix $m=n=2^{12}$ and $d=8$. “Ours-prefix” denotes our optimized protocol in Section~\ref {sec: Fuzzy PSI with Prefix Optimization} and “Ours” denotes our non-optimized protocol in Section~\ref {sec: Fuzzy-PSI}.}
\label{tab: prefix wan}
\resizebox{\textwidth}{!}{%
\begin{tabular}{cccccccccccccc}
\hline
\multirow{2}{*}{$\delta$} & \multirow{2}{*}{Protocol} & \multicolumn{4}{c}{$L_\infty$}    & \multicolumn{4}{c}{$L_1$}         & \multicolumn{4}{c}{$L_2$}         \\ \cline{3-14} 
                          &                           & Comm.  & 10Gbps & 1Gbps & 100Mbps & Comm.  & 10Gbps & 1Gbps & 100Mbps & Comm.  & 10Gbps & 1Gbps & 100Mbps \\ \hline
\multirow{2}{*}{16}       & Ours                      & 112.3  & 1.8    & 2.6   & 15.1    & 122.5  & 2.1    & 3.0   & 16.4    & 122.7  & 2.1    & 3.0   & 16.4    \\
                          & Ours-prefix               & 184.8  & 5.4    & 6.3   & 27.4    & 252.2  & 7.0    & 8.4   & 35.6    & 385.2  & 9.7    & 11.6  & 49.3    \\ \hline
\multirow{2}{*}{64}       & Ours                      & 308.3  & 2.6    & 5.1   & 33.7    & 318.5  & 2.9    & 5.4   & 35.5    & 319.4  & 2.9    & 5.6   & 35.3    \\
                          & Ours-prefix               & 225.7  & 5.7    & 6.8   & 31.2    & 295.9  & 7.2    & 8.9   & 39.2    & 440.3  & 9.7    & 12.4  & 54.0    \\ \hline
\multirow{2}{*}{256}      & Ours                      & 1093.1 & 7.2    & 15.1  & 104.5   & 1103.5 & 7.0    & 15.7  & 109.3   & 1104.6 & 7.3    & 15.5  & 106.3   \\
                          & Ours-prefix               & 289.1  & 7.9    & 9.0   & 38.4    & 367.7  & 9.6    & 11.1  & 46.9    & 534.0  & 11.9   & 15.1  & 63.1    \\ \hline
\multirow{2}{*}{1024}     & Ours                      & 4235.8 & 19.7   & 55.3  & 389.7   & 4246.5 & 20.3   & 55.9  & 386.5   & 4247.8 & 21.0   & 56.0  & 388.9   \\
                          & Ours-prefix               & 366.2  & 10.0   & 9.7   & 42.8    & 456.6  & 10.9   & 13.1  & 55.5    & 682.6  & 15.1   & 18.9  & 78.7  \\ \hline
\end{tabular}%
}
\end{table*}

\subsection{Proof of Fuzzy PSI for $L_p$ distance}
\label{subsec: Proof of fuzzy PSI L-p}

We give the detailed proof of Theorem \ref{thm: Proof of fuzzy PSI L-p}, which is similar to that of Theorem \ref{thm: Proof of fuzzy PSI L-inf}.

\begin{proof}
\label{proof: Proof of fuzzy PSI L-inf}

    \textbf{Corrupted sender.} We show the simulator $\Sim[\SSS](Q)$, which invokes the sub-simulator $\Sim[\SSS]^{\mathsf{FMap}}$:
    \begin{enumerate}
        \item \Sim[\SSS] randomly samples $\mathsf{ID}_Q$. \Sim[\SSS] invokes $\Sim[\SSS]^{\mathsf{FMap}}(Q, \mathsf{ID}_Q)$ and appends the output to the view.

        \item \Sim[\SSS] randomly samples $\{ r^{\SSS}_{j,k} \}_{j\in [m],k\in [d]}$ and appends it to the view.

        \item \Sim[\SSS] randomly samples $\{ d^{\SSS}_{j,k} \}_{j\in [m],k\in [d]}$ and appends it to the view.
    \end{enumerate}

We show that the output by $\Sim[\SSS]$ is indistinguishable from the real protocol. According to the security property of fuzzy mapping in Theorem~\ref{thm: fuzzy mapping}, the simulated view of $\Sim[\SSS]^{\mathsf{FMap}}$ is indistinguishable from the real execution.
Besides, the only differences are $\{ r^{\SSS}_{j,k} \}_{j\in [m],k\in [d]}$ and $\{ d^{\SSS}_{j,k} \}_{j\in [m],k\in [d]}$. They are random secret shares in the real protocol according to the functionalities \Func[so\text{-}OPPRF] and \Func[B2A], while they are randomly sampled in the simulated view with the same distribution.
Therefore, the output by $\Sim[\SSS]$ is indistinguishable from the real protocol.

    \textbf{Corrupted receiver.}   We show the simulator $\Sim[\RRR](W, Z)$, which invokes the sub-simulator $\Sim[\RRR]^{\mathsf{FMap}}$:
    \begin{enumerate}
        \item \Sim[\RRR] randomly samples $\mathsf{ID}_W$. \Sim[\RRR] invokes $\Sim[\RRR]^{\mathsf{FMap}}(W, \mathsf{ID}_W)$ and appends the output to the view.

        \item \Sim[\RRR] randomly samples $\{ r^{\RRR}_{j,k} \}_{j\in [m],k\in [d]}$ and samples a random function $F$ such that $F(\mathsf{ID}_{w_i} \Vert k\Vert (w_{i,k}+t)) = \abs{t}^p$ for $t\in [-\delta,\delta], i \in [n],k\in [d]$.
        \Sim[\RRR] appends $(\{ r^{\RRR}_{j,k} \}_{j\in [m],k\in [d]}, \OOO^F)$ to the view.

        \item \Sim[\RRR] randomly samples $\{ d^{\RRR}_{j,k} \}_{j\in [m],k\in [d]}$ and appends them to the view.
        
        \item \Sim[\RRR] uses $\bot$ to pad $Z$ to $m$ elements, randomly shuffles the set $Z$, and sets $b_j := 0$ if $z_j = 0$ and $b_j := 1$ otherwise for $j \in [m]$. \Sim[\RRR] appends $(\{b_j\}_{j \in [m]}, Z)$ to the view.
        
    \end{enumerate}

    We show that the output by $\Sim[\RRR]$ is indistinguishable from the real protocol. According to the security property of fuzzy mapping in Theorem~\ref{thm: fuzzy mapping}, the simulated view of $\Sim[\RRR]^{\mathsf{FMap}}$ is indistinguishable from the real execution. Moreover, the way \RRR obtains the elements in $Z$ is identical to the real execution since the elements in $Z$ are randomly shuffled. Besides, the only difference is $(\{ r^{\RRR}_{j,k} \}_{j\in [m],k\in [d]}, \OOO^F, \{ d^{\RRR}_{j,k} \}_{j\in [m],k\in [d]})$. $\{r^{\RRR}_{j,k} \}_{j\in [m],k\in [d]}$ and $\{ d^{\RRR}_{j,k} \}_{j\in [m],k\in [d]}$ are random secret shares in the real protocol according to the functionalities \Func[so\text{-}OPPRF] and \Func[B2A], while they are randomly sampled in the simulated view with the same distribution.
    Then, the function $F$ is constructed in the same manner in the real and simulated executions. Therefore, the output by $\Sim[\RRR]$ is indistinguishable from the real protocol.
    
\end{proof}

\section{Details of Fuzzy PSI with Prefix Optimizations}

\subsection{Other Building Blocks}
\label{sec: mpc func}
We present the functionalities of \Func[MUX] and \Func[ssPEQT] in Figure~\ref{Func:MUX} and Figure~\ref{Func: ssPEQT}, respectively.

\subsection{Sub-protocols for \ensuremath{\Pi_\mathsf{FPSI\text{-}Prefix}^{L_p}}\xspace}

We present the sub-protocols of our optimized fuzzy PSI protocol \ensuremath{\Pi_\mathsf{FPSI\text{-}Prefix}^{L_p}}\xspace in Figure~\ref{Prot:get list p} and Figure~\ref{Prot:sum distance}.

\subsection{Optimized Fuzzy PSI for $L_p$ Distance}
\label{sec: optimized Fuzzy PSI L p}

With prefix optimization, it is non-trivial to convert $L_\infty$ distance to $L_p$ distance.
As the interval is broken into prefixes, we cannot directly compute the distance for every point within the interval; instead, we only know the distance associated with each prefix. Dang et al.~\cite{dang2025ccs} addressed this issue by decomposing the distance into two parts. Specifically, the distance between points $x$ and $y$ can be expressed as $\abs{x - x^*} + \abs{x^* - y}$, where $x^*$ represents either the upper or lower bound of the prefix that $x$ falls into. The value $\abs{x - x^*}$ can be computed by the sender, while $\abs{x^* - y}$ can be computed by the receiver, and the two parties then jointly compute the sum of these two distances. 

We adopt a similar method but realize the same functionality using more efficient MPC primitives to keep consistent with our framework, whereas Dang et al.~\cite{dang2025ccs} heavily rely on Paillier encryption. Our modified construction for the key-value pairs is presented in Figure~\ref{Prot:get list p} and Figure~\ref{Prot:sum distance}.

The detailed protocol for \ensuremath{\Pi_\mathsf{FPSI\text{-}Prefix}^{L_p}}\xspace is shown in Figure~\ref{Prot:fpsi2 prefix} and its security proof follows Appendix~\ref{subsec: Proof of fuzzy PSI L-p}.

\section{Performance under Different Network Settings}
\label{sec: wan}

We provide the performance of our fuzzy PSI protocols and the variants with prefix optimizations under different network settings in Table~\ref{tab: non-prefix wan} and Table~\ref{tab: prefix wan}. We set the network to 1 Gbps bandwidth with 40 ms latency and 100 Mbps with 80 ms latency.

\clearpage %

\section{Meta-Review}

The following meta-review was prepared by the program committee for the 2026
IEEE Symposium on Security and Privacy (S\&P) as part of the review process as
detailed in the call for papers.

\subsection{Summary}
The paper improves the efficiency of a known construction of Fuzzy-PSI where two parties match similar but not necessarily identical elements. It contributes the primitive of an Oblivious Programmable Pseudo-Random Function (OPPRF) with secret shared outputs.

\subsection{Scientific Contributions}
\begin{itemize}
\item Provides a Valuable Step Forward in an Established Field.
\end{itemize}

\subsection{Reasons for Acceptance}
\begin{enumerate}
\item The paper provides a valuable step forward in the established field of Fuzzy-PSI. It significantly improves the performance of a known construction by contributing a new primitive - an OPPRF with secret shared outputs - which may be of independent interest.
\end{enumerate}

\end{appendices}

\end{document}